\newtheorem{theorem}{Theorem}
\newtheorem{lemma}[theorem]{Lemma}
\newtheorem{proposition}[theorem]{Proposition}
\newtheorem{corollary}[theorem]{Corollary}
\newtheorem{claim}[theorem]{Claim}
\newtheorem{observation}[theorem]{Observation}
\theoremstyle{definition}
\title{The Minimum Subgraph Complementation Problem}
\author{
Juan Gutiérrez\thanks{Departamento de Ciencia de la Computación,
Universidad de Ingeniería y Tecnología (UTEC), Lima, Perú.
\texttt{jgutierreza@utec.edu.pe}. This work was supported by Fondo Semilla UTEC 2025.}
\and
Sagartanu Pal\thanks{School of Computer Science Engineering and Technology,
Bennett University, India.
\texttt{sagartanu.pal@bennett.edu.in, maxfi1987@gmail.com}}
}
\date{}
\begin{document}
\maketitle

\begin{abstract}
Subgraph complementation is an operation that toggles all adjacencies within a selected vertex set. Given a graph $G$, for target class $\mathcal{C}$, the Minimum Subgraph Complementation (MSC) problem asks for a smallest vertex set whose complementation transforms $G$ into a graph in $\mathcal{C}$. While the decision version is NP-complete for many graph classes, the algorithmic complexity of the optimization variant remains largely unexplored.

We initiate a systematic algorithmic study of MSC. We present polynomial-time algorithms for several nontrivial settings, including transformations between bipartite, co-bipartite, and split graphs, complementing 2-connected bipartite regular graphs into chordal graphs, and forests into graphs of bounded degeneracy. We also study connectivity-based targets, showing that MSC can be solved in polynomial time for arbitrary graphs when the goal is to obtain disconnected or 2-connected graphs. Our results rely on structural characterizations and classical algorithmic tools, providing a foundation for the algorithmic study of MSC.
\end{abstract}

\section{Introduction}
In this work, a graph is always simple and undirected, and the notation used is standard~\cite{BondyM08,Diestel2017}.
Given a graph $G$ and a nonempty subset $S$ of vertices of $G$, the graph $G \oplus S$ is obtained from $G$ by complementing the adjacencies between vertices of $S$, that is, by replacing the induced subgraph $G[S]$ with its complement.
Thus, when $S$ consists of two vertices, the operation simply toggles the adjacency between them, while for larger sets $S$, it reverses all adjacencies among the vertices of $S$.
This operation, known as subgraph complementation, was first introduced by Kamiński et al.~\cite{DBLP:journals/dam/KaminskiLM09} in the context of clique-width.
A more systematic algorithmic investigation was later undertaken by Fomin et al.~\cite{DBLP:journals/algorithmica/FominGST20}, who studied the following decision problem.

\medskip
\fbox{%
\begin{minipage}{0.95\linewidth}
\textbf{Subgraph Complementation to $\mathcal{C}$ (SC$\mathcal{C}$)}\\
\textbf{Input:} A graph $G$.\\
\textbf{Question:} Does there exist a subset $S \subseteq V(G)$ such that $G \oplus S \in \mathcal{C}$?
\end{minipage}%
}

Fomin et al.~\cite{DBLP:journals/algorithmica/FominGST20} initiated the systematic study of the Subgraph Complementation problem and established several tractability and hardness results. In particular, they proved that SC is polynomial-time solvable for every triangle-free graph class $\mathcal{C}$ recognizable in polynomial time. They also showed that SC is polynomial-time solvable for every $(k,\ell)$-graph class with $k+\ell=2$, namely bipartite, split, and co-bipartite graphs. In addition, they obtained polynomial-time algorithms for $d$-degenerate and $P_4$-free graphs, while proving NP-completeness when $\mathcal{C}$ is the class of regular graphs. Moreover, they explicitly posed the complexity of SC to chordal graphs as an open problem.

Subsequent studies~\cite{DBLP:journals/algorithmica/AntonyGPSSS22,antony2024cutting,DBLP:journals/ipl/AntonyPS25} further refined these boundaries. In particular, they showed that SC becomes NP-complete when $\mathcal{C}$ is the class of graphs excluding cycles of length at least seven, the class of graphs excluding certain trees, or the class of $H$-free graphs for a 5-connected non-self-complementary prime graph $H$ with at least 18 vertices. A broader survey of these and related edge-modification problems can be found in~\cite{DBLP:journals/csr/CrespelleDFG23}.

\begin{table*}[ht]
\centering
\small

\begin{minipage}[t]{0.42\textwidth}
\centering
\textbf{SC Results}

\vspace{2mm}

\begin{tabular}{lll}
\toprule
Target class & Complexity & Reference \\
\midrule
Triangle-free recognizable & P & \cite{DBLP:journals/algorithmica/FominGST20} \\
$d$-degenerate graphs & P & \cite{DBLP:journals/algorithmica/FominGST20} \\
$P_4$-free graphs & P & 
\cite{DBLP:journals/algorithmica/FominGST20} \\
$(k,\ell)$-graphs with $k + \ell =2$ & P & 
\cite{DBLP:journals/algorithmica/FominGST20} \\
Regular graphs & NP-complete & \cite{DBLP:journals/algorithmica/FominGST20} \\
Cycle-free classes & NP-complete & \cite{DBLP:journals/ipl/AntonyPS25} \\
Tree-free classes & NP-complete & \cite{antony2024cutting} \\
Certain $H$-free classes & NP-complete & \cite{DBLP:journals/algorithmica/AntonyGPSSS22} \\
Chordal graphs & Open & \cite{DBLP:journals/algorithmica/FominGST20} \\
\bottomrule
\end{tabular}
\end{minipage}
\hfill
\hspace{1cm}
\hfill
\begin{minipage}[t]{0.42\textwidth}
\centering
\textbf{MSC Results}

\vspace{2mm}

\begin{tabular}{lll}
\toprule
Input class & Target class & Complexity \\
\midrule
Bipartite & Co-bipartite & P \\
Split & Bipartite & P \\
Bipartite & Split & P \\
Co-bipartite & Split & P \\
Biregular bipartite & Chordal & P \\
Forest & $D_k$ & P \\
Arbitrary & 2-connected & P \\
Arbitrary & Disconnected & P \\
\bottomrule
\end{tabular}
\end{minipage}

\caption{Known complexity results for the SC problem and polynomial-time solvable MSC cases established in this paper.}
\label{tab:summary-results}
\end{table*}

In this paper, we adopt a different perspective.  
For many natural graph classes, the decision variant above is trivial, as every input graph admits a subset $S$ such that $G \oplus S$ belongs to the target class.  
In such cases, the relevant algorithmic question is not whether a valid subset exists, but rather how small such a subset can be.  
This motivates the following optimization version, which we call the Minimum Subgraph Complementation problem.

\medskip
\fbox{%
\begin{minipage}{0.95\linewidth}
\textbf{Minimum Subgraph Complementation to $\mathcal{C}$(MSC$\mathcal{C}$)}\\[1mm]
\textbf{Input:} A graph $G$.\\[0.5mm]
\textbf{Output:} A subset $S \subseteq V(G)$ of minimum size such that $G \oplus S \in \mathcal{C}$, 
or report that no such subset exists.
\end{minipage}%
}
\medskip

We note that MSC, viewed as an optimization problem, is NP-hard in general.
Indeed, MSC is at least as hard as the corresponding SC
decision problem, since an algorithm that computes an optimal solution would
also decide whether there exists a solution.
Therefore, for graph classes for which the SC is NP-complete
\cite{DBLP:journals/algorithmica/FominGST20,
DBLP:journals/algorithmica/AntonyGPSSS22,
antony2024cutting},
MSC is NP-hard.
For convenience, Table~\ref{tab:summary-results} summarizes the main known complexity results for the SC problem together with the new MSC results established in this work.

Studying this optimization version is meaningful even for pairs of classes $(\mathcal{G}, \mathcal{C})$ where every graph $G \in \mathcal{G}$ can be complemented into a member of $\mathcal{C}$.  
Indeed, the complexity and structure of a minimum complementing set $S$ can vary significantly, revealing interesting combinatorial properties of the classes involved.  
For instance, while every bipartite graph can be complemented into a chordal graph by complementing one side of its bipartition, the size and structure of the minimum set $S$ can differ substantially from graph to graph.

Graph editing problems have been extensively studied as a way of measuring how far a given graph is from satisfying a desired property. 
In these problems, one seeks to modify a given graph by a minimum number of operations, such as edge additions or deletions, so that the resulting graph belongs to a target class. 
Such modifications provide a natural measure of the ``distance'' of a graph to a property of interest and have been investigated from both classical and parameterized complexity perspectives~\cite{AlonShapiraSudakov2005,CrespelleDrangeFominGolovach2023,DzidoKrzywdzinski2015,Liu2015EdgeDeletion,NatanzonShamirSharan2001}. 

For instance, the Chordal Editing problem, which asks whether a graph can be made chordal by a minimum number of edge modifications, has received considerable attention due to the central role of chordal graphs in structural graph theory and algorithmic applications~\cite{CaoMarx2014}. 
These studies illustrate that even when every graph in a certain class can be trivially transformed into the target class, the size and structure of a minimum modifying set can vary significantly, revealing interesting combinatorial properties of the graph classes involved.
In this context, the MSC problem can be regarded as a structured variant of edge editing, where instead of modifying arbitrary edges, one flips all edges and non edges inside a single vertex subset.

In this paper, we show positive results for the MSC problem in particular scenarios. 
To introduce the topic to the reader, we begin with an easy theorem about bipartite graphs. 
To motivate this particular case, observe that for any class of graphs $\mathcal{G}$, 
every graph $G$ whose complement belongs to $\mathcal{G}$ can be complemented into $\mathcal{G}$ 
by taking $S = V(G)$. However, the set $V(G)$ does not necessarily yield a minimum-size solution 
for complementability into $\mathcal{G}$.

For example, let $\mathcal{G}$ be the class of planar graphs and consider the nonplanar graph $K_5$. 
If we take a set $S = \{u,v\}$ consisting of the endpoints of any edge $uv$ of $K_5$, 
then toggling the edges with both endpoints in $S$ removes exactly the edge $uv$. 
The resulting graph $K_5 - uv$ is planar. Thus, $K_5$ is complementable to $\mathcal{G}$ using a set $S$ 
of size~2, whereas the trivial bound $|S| = |V(K_5)| = 5$ is far from optimal.


This motivates the following particular case: solving the MSC problem when the input graph is not in $\mathcal{G}$. 
In what follows, we show that this problem can be solved efficiently when 
$\mathcal{G}$ is the class of bipartite graphs or the class of co-bipartite graphs.

\begin{theorem}
\label{thm:bipartite-cobipartite}
Let $\mathcal{C}$ be the class of co-bipartite graphs. 
Then MSC to $\mathcal{C}$ can be solved in polynomial time when the input graph is bipartite.
\end{theorem}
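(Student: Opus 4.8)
The plan is to exploit the fact that co-bipartite graphs are exactly those whose vertex set partitions into two cliques, together with the triangle-freeness of bipartite graphs, to show that any optimal solution leaves only a constant number of vertices untouched; the problem then reduces to a brute-force enumeration over that small leftover set.

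First I would record the elementary observation that a graph $H$ is co-bipartite if and only if $V(H)$ can be partitioned into two cliques $C_1, C_2$, since $\overline{H}$ is bipartite exactly when $\overline{H}$ admits a partition into two independent sets. I would also note that a solution always exists: taking $S = V(G)$ gives $G \oplus S = \overline{G}$, and since $G$ is bipartite, $\overline{G}$ is co-bipartite. Hence the algorithm never needs to report infeasibility, and the task reduces to \emph{minimizing} $|S|$ among feasible sets.

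The key structural step is to bound the size of the untouched set $T := V(G) \setminus S$ for any feasible $S$. Suppose $G \oplus S$ is co-bipartite, witnessed by a partition into cliques $C_1, C_2$. For any two vertices $x, y \in T \cap C_i$, neither lies in $S$, so the adjacency of $x$ and $y$ in $G \oplus S$ coincides with their adjacency in $G$; being in a common clique, they are adjacent, so $T \cap C_i$ induces a clique in $G$. As $G$ is bipartite, hence triangle-free, we get $|T \cap C_i| \le 2$, and therefore $|T| = |T \cap C_1| + |T \cap C_2| \le 4$. Consequently every feasible solution satisfies $|S| \ge |V(G)| - 4$, and an optimal solution is found by maximizing $|T|$ subject to $|T| \le 4$.

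This yields the algorithm: enumerate all subsets $T \subseteq V(G)$ with $|T| \le 4$ — there are $O(n^4)$ of them — and for each corresponding set $S = V(G) \setminus T$, test whether $G \oplus S$ is co-bipartite by forming its complement and checking bipartiteness, which takes polynomial time. Among all $T$ that pass the test, output the $S$ realizing the largest $|T|$. Correctness is immediate from the bound, since every feasible $S$ arises from some enumerated $T$, and the total running time is polynomial. The only substantive content is the bound $|T| \le 4$; I expect the main (and rather minor) obstacle to be handling degenerate inputs, such as graphs that are simultaneously bipartite and co-bipartite, where the requirement that $S$ be nonempty means a single-vertex set (which leaves $G$ unchanged) may be the optimum.
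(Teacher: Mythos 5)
Your proof is correct, but it takes a genuinely different route from the paper's. The paper works with the bipartition $(A,B)$ of the input: it shows that any feasible $S$ either avoids a side entirely or leaves at most one vertex of that side outside $S$ (the key fact being that a co-bipartite graph has no independent set of size three, and an independent set containing at most one vertex of $S$ stays independent after complementation), and then enumerates $O(n^2)$ candidates of the form $A'$, $B'$, or $A' \cup B'$ with $|A'| \ge |A|-1$ and $|B'| \ge |B|-1$. You instead bound the \emph{complement} of the solution: using the clique partition $C_1 \cup C_2$ of $G \oplus S$ and the fact that adjacencies among vertices of $T = V(G)\setminus S$ are untouched, each $T \cap C_i$ is a clique of $G$, hence has at most two vertices by triangle-freeness, giving $|T| \le 4$ and an $O(n^4)$-candidate enumeration. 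Your argument is in a sense dual to the paper's (clique number of the source versus independence number of the target), and it is strictly more general: you never use the bipartition, only triangle-freeness, so your proof actually shows that MSC to co-bipartite graphs is polynomial for every triangle-free input. What the paper's approach buys is a smaller candidate space ($O(n^2)$ versus $O(n^4)$); both are polynomial. Your bound is also internally consistent where one might worry: by Ramsey's $R(3,3)=6$, a bipartite graph on at least six vertices is never itself co-bipartite, so the lower bound $|S| \ge n-4$ never conflicts with singleton solutions, and the degenerate bookkeeping you flag (excluding $S=\emptyset$, and inputs that are already co-bipartite, which forces $n \le 5$) is indeed all that remains.
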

\begin{proof}
Let $G$ be a bipartite graph with bipartition $(A,B)$.
Let $S \subseteq V(G)$ such that $G \oplus S$ is co-bipartite.  
Assume $S \cap A \neq \emptyset$; we show that $|A \setminus S| \le 1$.  

Suppose, for contradiction, that $|A \setminus S| \ge 2$, and pick any $w \in S \cap A$.  
Since $G$ is bipartite, the vertices of $(A \setminus S) \cup \{w\}$ form an independent set in $G$.  
Toggling edges inside $S$ does not introduce edges in this set, so it remains independent in $G \oplus S$.  
Hence $G \oplus S$ contains an independent set of size at least three, contradicting that it is co-bipartite.  
Thus, either $S \cap A = \emptyset$ or $|A \setminus S| \le 1$.  

The argument for $B$ is symmetric.  
Therefore, every feasible $S$ satisfies $S \cap B = \emptyset$ or $|B \setminus S| \le 1$.

It follows that it suffices to enumerate all subsets $A' \subseteq A$ and $B' \subseteq B$ such that
$|A'| \ge |A|-1$ and $|B'| \ge |B|-1$, forming candidate sets $S \in \{A' \cup B', A', B'\}$.  
There are polynomially many such candidates, and for each we can test in polynomial time whether $G \oplus S$ is co-bipartite.  
A smallest feasible $S$ is thus optimal.
\end{proof}

Now, although the case where $G$ is co-bipartite and $\mathcal{C}$ is the class of bipartite graphs is fully analogous, we introduce a lemma that allows us to derive this result directly from Theorem~\ref{thm:bipartite-cobipartite}.

\begin{lemma}
\label{lem:complement-transfer}
Let $\mathcal{C}$ and $\mathcal{D}$ be two graph classes.
If MSC to $\mathcal{C}$ can be solved in polynomial time
when the input graph belongs to $\mathcal{D}$, then
MSC to $\overline{\mathcal{C}}$ can also be solved in polynomial time
when the input graph belongs to $\overline{\mathcal{D}}$.
\end{lemma}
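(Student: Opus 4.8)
The plan is to reduce MSC to $\overline{\mathcal{C}}$ on inputs from $\overline{\mathcal{D}}$ to MSC to $\mathcal{C}$ on inputs from $\mathcal{D}$ by passing to complements, where throughout $\overline{\mathcal{C}} = \{\overline{G} : G \in \mathcal{C}\}$. The whole argument rests on a single commutation identity between the two complementation operations: for every graph $G$ and every $S \subseteq V(G)$,
\[
\overline{G \oplus S} = \overline{G} \oplus S .
\]
I would first verify this identity by a case analysis on vertex pairs. For a pair $\{u,v\}$ with both endpoints in $S$, the adjacency is toggled once by $\oplus S$ and once by the outer complement on the left-hand side, so it agrees with $G$; on the right-hand side $\overline{G}$ toggles it and $\oplus S$ toggles it back, again agreeing with $G$. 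For a pair with at most one endpoint in $S$, the operation $\oplus S$ leaves it fixed, so each side toggles it exactly once via a global complement. Thus the two graphs agree edge by edge.

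Given the identity, the reduction is immediate. Let $H \in \overline{\mathcal{D}}$ be an input for MSC to $\overline{\mathcal{C}}$. Then $\overline{H} \in \mathcal{D}$, so $\overline{H}$ is a legal input for the assumed polynomial-time solver for MSC to $\mathcal{C}$. For any $S \subseteq V(H)$, using the identity together with the fact that complementation is an involution,
\[
H \oplus S \in \overline{\mathcal{C}}
\iff \overline{H \oplus S} \in \mathcal{C}
\iff \overline{H} \oplus S \in \mathcal{C}.
\]
Hence a set $S$ is feasible for the $\overline{\mathcal{C}}$-instance $H$ if and only if it is feasible for the $\mathcal{C}$-instance $\overline{H}$, and the two instances share the same ground set $V(H) = V(\overline{H})$. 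In particular a minimum feasible set for one is a minimum feasible set for the other, and the ``no such subset exists'' outcomes coincide as well.

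The algorithm therefore computes $\overline{H}$ in polynomial time, runs the assumed MSC-to-$\mathcal{C}$ solver on $\overline{H}$, and returns whatever set (or failure report) it produces; correctness and minimality follow from the equivalence above, and the running time is polynomial since complementation and the black-box call are both polynomial. I do not anticipate a genuine obstacle here: the only point requiring care is the commutation identity, and specifically the behaviour of the cross edges between $S$ and $V \setminus S$, since $\oplus S$ leaves them fixed while both complements toggle them---this is precisely what forces the two sides to agree.
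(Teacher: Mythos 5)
Your proposal is correct and follows essentially the same route as the paper: both hinge on the commutation identity $\overline{G \oplus S} = \overline{G} \oplus S$ (proved by the same case analysis on whether a vertex pair lies inside $S$), and both then reduce the $\overline{\mathcal{C}}$-instance to the $\mathcal{C}$-instance on the complement graph. Your explicit iff-chain for feasibility is a slightly cleaner way of packaging what the paper handles with a symmetric converse remark, but the argument is the same.
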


\begin{proof}
We first observe the following claim, already proven in Proposition 3 of \cite{DBLP:journals/algorithmica/AntonyGPSSS22}.
\begin{claim}
\label{claim:complement-commutes}
For every graph $G$ and every set $S \subseteq V(G)$,
$G \oplus S = \overline{\overline{G} \oplus S}$.
\end{claim}

\begin{proof}[Proof of Claim~\ref{claim:complement-commutes}]
Let $u,v \in V(G)$.
If $u,v \in S$, then their adjacency is toggled in both
$G \oplus S$ and $\overline{G} \oplus S$.
If at least one of $u$ or $v$ is not in $S$, then their adjacency is
unchanged in both graphs.
Hence $uv$ is an edge of $G \oplus S$ if and only if it is a non-edge of
$\overline{G} \oplus S$, proving the claim.
\end{proof}
Now let $G \in \overline{\mathcal{D}}$.
Then $\overline{G} \in \mathcal{D}$.
By assumption, we can compute in polynomial time a minimum-size set
$S \subseteq V(G)$ such that $\overline{G} \oplus S \in \mathcal{C}$.
By Claim~\ref{claim:complement-commutes},
$\overline{G} \oplus S = \overline{G \oplus S}$, and therefore
$G \oplus S \in \overline{\mathcal{C}}$.
Thus $S$ is a feasible solution for $G$ with respect to
$\overline{\mathcal{C}}$.
The converse direction follows symmetrically.
\end{proof}

The following corollary follows directly from Theorem~\ref{thm:bipartite-cobipartite} and Lemma~\ref{lem:complement-transfer}.

\begin{corollary}
\label{cor:co-bipartite-bipartite}
Let $\mathcal{C}$ be the class of bipartite graphs. 
Then MSC to $\mathcal{C}$ can be solved in polynomial time when the input graph is co-bipartite.
\end{corollary}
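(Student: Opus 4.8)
The plan is to obtain the corollary as a direct instantiation of Lemma~\ref{lem:complement-transfer}, choosing the two abstract classes so that the transferred statement coincides exactly with the claim. Concretely, I would set $\mathcal{C}$ in the lemma to be the class of co-bipartite graphs and $\mathcal{D}$ to be the class of bipartite graphs. With this choice, the hypothesis required by the lemma—that MSC to $\mathcal{C}$ is solvable in polynomial time whenever the input lies in $\mathcal{D}$—is precisely the statement of Theorem~\ref{thm:bipartite-cobipartite}, and is therefore already at hand.

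The only point that genuinely needs checking is the identification of the complement classes $\overline{\mathcal{C}}$ and $\overline{\mathcal{D}}$, and here I would simply appeal to the definition of co-bipartiteness. A graph is co-bipartite exactly when its complement is bipartite, so $\overline{\mathcal{C}}$, the class of complements of co-bipartite graphs, is precisely the class of bipartite graphs; dually, $\overline{\mathcal{D}}$, the class of complements of bipartite graphs, is by definition the class of co-bipartite graphs. These two equalities are the only nonroutine observations in the argument, and since they are immediate from the definition, I expect no real obstacle to arise.

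Finally, substituting these identifications into the conclusion of Lemma~\ref{lem:complement-transfer} gives that MSC to $\overline{\mathcal{C}}$, namely the class of bipartite graphs, can be solved in polynomial time whenever the input belongs to $\overline{\mathcal{D}}$, namely the class of co-bipartite graphs. This is exactly the assertion of the corollary, which completes the proof.
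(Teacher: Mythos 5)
Your proposal is correct and coincides with the paper's own argument, which likewise derives the corollary directly from Theorem~\ref{thm:bipartite-cobipartite} via Lemma~\ref{lem:complement-transfer}. Your explicit verification that the complement of the class of co-bipartite graphs is the class of bipartite graphs (and vice versa) is exactly the routine identification the paper leaves implicit.
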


Bipartite and co-bipartite graphs are particular cases of a more general class called $(k, \ell)$-graphs, which we consider in Section~\ref{section:kl}.  
Our main results in this section show that the MSC problem can be solved efficiently for these graph classes:

\begin{itemize}
    \item When the target class is the class of bipartite graphs and the input is a split graph (Theorem~\ref{thm:split-bipartite}).
    \item When the target class is the class of co-bipartite graphs and the input is a split graph (Corollary~\ref{cor:split-co-bipartite}).
    \item When the target class is the class of split graphs and the input is a bipartite graph (Theorem~\ref{thm:bipartite-split}).
    \item When the target class is the class of split graphs and the input is a co-bipartite graph (Corollary~\ref{cor:co-bipartite-split}).
\end{itemize}

Note that every split graph is also a chordal graph.  
Chordal graphs form a superclass of many well-behaved graph families, such as interval and split graphs, and allow efficient algorithms for problems that are NP-hard in general, including coloring, maximum clique, and recognition of minimal separators.  
Hence, it is natural to ask whether the MSC problem can be solved efficiently when the target class is chordal graphs and the input is bipartite.  
In Section~3, we show that this is indeed the case when the input graph is 2-connected biregular (Theorem~\ref{thm:biregular-chordal}).

From a structural perspective, every chordal graph is also $k$-degenerate, where $k$ equals its maximal clique size minus one.  
The study of editing or modification problems in $k$-degenerate graphs has been considered in the literature~\cite{Mathieson2010}, so it is natural to try to generalize results on bipartite graphs to graphs of degeneracy $k$.  
In Section~\ref{section:forest-deg-k}, we show that this case can be solved in polynomial time when the input is a forest (Theorem~\ref{thm:forest-degenerate}).

The last two sections of this work focus on modifying the connectivity of a graph.  
The problem of modifying a graph through local operations in order to increase its connectivity has been studied under several models.  
For instance, in the Connectivity Augmentation problem, the goal is to add the smallest possible set of edges so that a given graph becomes $k$-connected; polynomial-time algorithms are known for small fixed values of $k$, while the general problem is NP-hard~\cite{EswaranTarjan1976,WatanabeNakamura1987,ByrkaGrandoniAmeli2019}.  
Also, 
problems aiming at destroying the connectivity of a graph have been studied before under the framework of edge- and vertex-deletion problems.
In particular, Yannakakis~\cite{Yannakakis1981} investigated the complexity of edge-deletion problems where the objective is to eliminate a given graph property, including connectivity.
This line of work provides a natural conceptual background for our results on MSC when the target class is the class of disconnected graphs.

Our main results show that the MSC problem can be solved efficiently in these settings:
\begin{itemize}
    \item When the target class is the class of 2-connected graphs and the input graph is arbitrary (Theorem~\ref{thm:to-2conn}).
    \item When the target class is the class of disconnected graphs and the input graph is arbitrary (Theorem~\ref{thm:to-0conn}).
\end{itemize}

\section{$(k,\ell)$-graphs}\label{section:kl}

Bipartite and cobipartite graphs are particular cases of a more general definition.
A graph $G$ is a \((k,\ell)\)-graph if its vertex set admits a partition
\[
V(G) = C_1 \cup \dots \cup C_k \;\cup\; I_1 \cup \dots \cup I_\ell
\]
such that each $C_i$ induces a clique in $G$ and each $I_j$ induces an 
independent set in $G$.

The classes of bipartite and cobipartite graphs can be described in terms of 
$(k,\ell)$-graphs. In particular, $(0,2)$-graphs are exactly the bipartite graphs 
and $(2,0)$-graphs are exactly the co-bipartite graphs. A graph $G$ is a 
\emph{split graph} if and only if it is a $(1,1)$-graph: in this case, $V(G)$ can 
be partitioned into one clique and one independent set.

The next observation is immediate.

\begin{observation}
Every $(k,\ell)$-graph is complementable to any $(k',\ell')$-graph whenever 
$k+\ell = k'+\ell'$.
\end{observation}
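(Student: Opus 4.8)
The plan is to produce, for any target pair $(k',\ell')$ with $k+\ell=k'+\ell'$, a single vertex set $S$ whose complementation rewrites the given partition into one of type $(k',\ell')$. The structural fact I would exploit is that complementing the adjacencies inside one part flips that part's type: a clique part $C_i$ becomes an independent set, and an independent part $I_j$ becomes a clique. Moreover, any part left outside $S$ has its internal edges untouched, so its type is preserved under the operation.

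The only point that needs care is the behaviour of the edges running between two parts that both lie inside $S$, since these are toggled as well. I would observe that this is harmless: the definition of a $(k',\ell')$-graph constrains only the edges inside each part and imposes no condition on edges between distinct parts. Thus I am free to take $S$ as a union of several whole parts and flip all of them at once, changing only the counts of clique-type and independent-type parts. Concretely, when $k'<k$ (equivalently $\ell'>\ell$) I set $S=C_1\cup\dots\cup C_{k-k'}$; each chosen $C_i$ becomes an independent set while the remaining $k'$ cliques and all $\ell$ original independent sets are preserved, yielding $k'$ cliques and $\ell+(k-k')=\ell'$ independent sets. The case $k'>k$ is symmetric, taking $S$ to be a union of $\ell-\ell'$ of the independent parts, each of which becomes a clique; and $k'=k$ is immediate since $G$ already has the desired type.

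I do not expect a genuine obstacle here: the entire content reduces to the verification that toggling inter-part edges causes no damage, together with the bookkeeping of the two part-counts via the identity $\ell'=k+\ell-k'$. The one degenerate situation is when the parts selected for $S$ happen to be empty, so that $S$ itself is empty; in that case $G$ is already a $(k',\ell')$-graph, and I would take $S$ to be a single vertex, which leaves $G$ unchanged.
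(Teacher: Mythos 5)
Your proof is correct and matches the argument the paper has in mind when it declares this observation ``immediate'': flip whole parts of the partition, noting that internal complementation swaps clique and independent-set types while inter-part edges are unconstrained by the $(k',\ell')$ definition. Your handling of the degenerate empty-$S$ case is a careful extra touch the paper does not even mention.
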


Focusing on the case $k+\ell = 2$, this implies that any bipartite or 
cobipartite graph is complementable to a split graph, and that any split graph 
is complementable to a bipartite or a cobipartite graph. The cases 
$(k,\ell),(k',\ell') \in \{(0,2),(2,0)\}$ were solved in 
Theorem~\ref{thm:bipartite-cobipartite}. 
In the remainder of this section, we focus on solving the four remaining cases.

\begin{theorem}
\label{thm:split-bipartite}
Let $\mathcal{C}$ be the class of bipartite graphs.  
Then MSC to $\mathcal{C}$ can be solved in polynomial time when the input graph is a split graph.
\end{theorem}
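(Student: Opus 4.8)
The plan is to mirror the approach of Theorem~\ref{thm:bipartite-cobipartite}: fix a convenient partition of the input, derive strong necessary conditions on any feasible $S$ from the fact that bipartite graphs are triangle-free, and argue that these conditions leave only polynomially many candidates to test. Let $(C,I)$ be a split partition of $G$, so that $C$ induces a clique and $I$ an independent set; such a partition is computable in polynomial time. By the Observation, a split graph is always complementable to a bipartite graph (indeed $S=C$ works, as it turns $C$ into an independent set while leaving $I$ independent), so feasibility is never in question and a minimum solution exists.

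The two crucial structural facts concern how $S$ interacts with $C$ and with $I$. First, consider $C\setminus S$. For any two vertices $u,v\in C\setminus S$, at least one of them lies outside $S$, so their adjacency is unchanged by the toggle; since they are adjacent in $G$, they remain adjacent in $G\oplus S$. Hence $C\setminus S$ induces a clique in $G\oplus S$, and triangle-freeness forces $|C\setminus S|\le 2$. Moreover, if $|C\setminus S|=2$, say $C\setminus S=\{x,y\}$, then any $z\in C\cap S$ stays adjacent to both $x$ and $y$ (the edges $zx,zy$ each have an endpoint outside $S$), producing a triangle $\{x,y,z\}$; thus $|C\setminus S|=2$ is possible only when $C\cap S=\emptyset$, i.e.\ when $|C|=2$. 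Consequently, whenever $|C|\ge 3$, every feasible $S$ satisfies $|C\setminus S|\le 1$. Second, consider $I\cap S$. For two vertices of $I\cap S$, both lie in $S$, so their (non-)adjacency is toggled; since they are non-adjacent in $G$, they become adjacent in $G\oplus S$. Thus $I\cap S$ induces a clique in $G\oplus S$, and triangle-freeness forces $|I\cap S|\le 2$.

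These two conditions bound the number of candidates. The trace $S\cap C$ ranges over the at most $|C|+1$ subsets with $|C\setminus S|\le 1$ (with the finitely many small cases $|C|\le 2$ handled by brute force over all $2^{|C|}\le 4$ subsets, which also captures the exceptional case $C\cap S=\emptyset$), while the trace $S\cap I$ ranges over the $O(|I|^2)$ subsets of size at most two. Taking all combinations $S=(S\cap C)\cup(S\cap I)$ yields $O(|C|\,|I|^2)=O(n^3)$ candidate sets, which form a superset of all feasible solutions. For each candidate we construct $G\oplus S$ and test bipartiteness in polynomial time, and we return a smallest feasible one.

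The main obstacle is establishing the two structural constraints cleanly and verifying completeness, namely that every $S$ for which $G\oplus S$ is bipartite genuinely satisfies $|C\setminus S|\le 1$ (for $|C|\ge 3$) and $|I\cap S|\le 2$, so that the polynomial enumeration omits no optimal solution. Once these necessary conditions are in hand, the remainder is the same routine enumerate-and-test argument as in Theorem~\ref{thm:bipartite-cobipartite}; the minor edge cases $|C|\le 2$ warrant a brief separate check but contribute only $O(1)$ additional candidates.
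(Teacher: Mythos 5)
Your proposal is correct and follows essentially the same route as the paper's proof: the same split partition, the same two triangle-based constraints ($|C\setminus S|\le 1$ when the clique is large enough, and $|S\cap I|\le 2$), and the same $O(n^3)$ enumerate-and-test conclusion. The only difference is cosmetic: you make the exceptional case $|C|\le 2$ (equivalently $S\cap C=\emptyset$) explicit via brute force, whereas the paper absorbs it implicitly into the candidates $S=I'$ with $|I'|\le 2$.
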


\begin{proof}
Let $G$ be a split graph with partition $V(G)=K\cup I$, where $K$ is a clique and $I$ is an independent set.  
Consider a set $S\subseteq V(G)$ such that $G\oplus S$ is bipartite.

If $S\cap K\neq\emptyset$, then at most one vertex of $K$ may lie outside $S$.  
Indeed, suppose two vertices $u,v\in K$ lie outside $S$ and some vertex $w\in K$ lies in $S$.  
Since all edges of $K$ are present in $G$, every edge among $\{u,v,w\}$ has at least one endpoint outside $S$, and hence all three edges survive in $G\oplus S$.  
Thus $\{u,v,w\}$ induces a triangle in $G\oplus S$, contradicting bipartiteness.  
Hence $|K\setminus S|\le 1$.

Similarly, if $S\cap I\neq\emptyset$, then $S\cap I$ contains at most two vertices.  
Otherwise, if three vertices of $I$ lie in $S$, then all three edges among them appear after complementation, forming a triangle in $G\oplus S$, again contradicting bipartiteness.  
Thus $|S\cap I|\le 2$.

Therefore any feasible solution $S$ that intersects both $K$ and $I$ must satisfy
\[
|K\setminus S|\le 1 
\qquad\text{and}\qquad
|S\cap I|\le 2.
\]
Consequently, every candidate solution $S$ is of one of the following forms:
$
S = K', \qquad
S = I', \qquad
S = K' \cup I',
$
where $K'\subseteq K$ and $I'\subseteq I$, and where
$
|K'|\ge |K|-1 
\qquad\text{and}\qquad 
|I'|\le 2.
$
There are $O(|K|)$ possibilities for $K'$ and $O(|I|^2)$ possibilities for $I'$.  
For each candidate we test in polynomial time whether $G\oplus S$ is bipartite, and return one of minimum size.  
This yields a polynomial-time algorithm for MSC to $\mathcal{C}$ when $\mathcal{C}$ is the class of bipartite graphs.

\end{proof}

As the class of split graphs coincides with the class of co-split graphs,
the following corollary follows directly from Theorem~\ref{thm:split-bipartite}
and Lemma~\ref{lem:complement-transfer}.

\begin{corollary}
\label{cor:split-co-bipartite}
Let $\mathcal{C}$ be the class of co-bipartite graphs.
Then MSC to $\mathcal{C}$ can be solved in polynomial time
when the input graph is a split graph.
\end{corollary}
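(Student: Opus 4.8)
The plan is to obtain this result as an immediate consequence of Theorem~\ref{thm:split-bipartite} via the complementation transfer established in Lemma~\ref{lem:complement-transfer}. First I would instantiate Lemma~\ref{lem:complement-transfer} with $\mathcal{C}$ taken to be the class of bipartite graphs and $\mathcal{D}$ taken to be the class of split graphs. By Theorem~\ref{thm:split-bipartite}, MSC to $\mathcal{C}$ is solvable in polynomial time whenever the input lies in $\mathcal{D}$, so the hypothesis of the lemma is met.

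Next I would identify the two complement classes appearing in the conclusion of the lemma. The class $\overline{\mathcal{C}}$ is, by definition, the class of co-bipartite graphs, since a graph is co-bipartite precisely when its complement is bipartite. For $\overline{\mathcal{D}}$, I would verify that the class of split graphs is self-complementary: if $G$ is split with partition $V(G) = K \cup I$ into a clique $K$ and an independent set $I$, then in $\overline{G}$ the set $K$ becomes an independent set and $I$ becomes a clique, so $\overline{G}$ is again split. Hence $\overline{\mathcal{D}}$ coincides with $\mathcal{D}$, the class of split graphs.

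With these identifications, the conclusion of Lemma~\ref{lem:complement-transfer} states exactly that MSC to the co-bipartite graphs can be solved in polynomial time when the input is split, which is the claim. The only substantive point to check is the self-complementarity of split graphs, and this is elementary; I therefore do not expect a genuine obstacle, and anticipate that the proof reduces to a short, direct application of the two earlier results.
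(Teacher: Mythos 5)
Your proposal is correct and matches the paper's own argument exactly: the paper likewise derives this corollary by applying Lemma~\ref{lem:complement-transfer} to Theorem~\ref{thm:split-bipartite}, using the fact that the class of split graphs is closed under complementation (coincides with the class of co-split graphs). Your explicit verification of that self-complementarity is the only detail the paper leaves implicit.
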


Finally, we consider the case where the target class is the class of split graphs and the input is a bipartite graph.  
This case requires more work than the previous ones, and we will need some auxiliary lemmas.
A first observation is that we can ignore isolated vertices in the input graph. Indeed, note that a solution of minimum size cannot contain isolated vertices, as we can remove such a vertex from the solution, contradicting its minimum size. 

From now on, we fix a bipartite graph $G$ that is not a split graph and does not have isolated vertices, and a bipartition
$(A,B)$ of $G$ with $|A|,|B|>1$.
We call a subset of vertices a solution if its complementation makes the graph a split graph.
A solution $S$ is \textit{special} if $|S| > 2$ and either $|S \cap B| = 1$ or $|S \cap A| = 1$.
Let $Z$ be the set of vertices of $B$ that has neighbors with degree one. For each $z \in Z$, let $O_z$ denote the set of such neighbors and let $O = \bigcup_{z\in Z}O_z$.

\begin{lemma}\label{lemma:solving-special-solutions}
We can decide whether
$G$ admits a special solution that is optimal, and if it exists, we can find one in polynomial time.
\end{lemma}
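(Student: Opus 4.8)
The plan is to use symmetry to reduce to the case of a special solution $S$ with $|S \cap B| = 1$, writing $S = A' \cup \{b\}$ where $b \in B$ and $A' = S \cap A$ with $|A'| \ge 2$; the case $|S \cap A| = 1$ is treated identically with the roles of $A$ and $B$ interchanged. First I would record the structure of $H := G \oplus S$. After complementation the set $A'$ becomes a clique, while $A \setminus A'$ and all of $B$ remain independent sets, and the only adjacencies that change relative to $G$ are those between $A'$ and $b$. Thus, up to the toggled star at $b$, $H$ consists of one planted clique $A'$ sitting inside an otherwise bipartite graph.

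The heart of the argument is to show that this structure essentially forces $A'$. I would analyze an arbitrary split partition $(K,I)$ of $H$. A vertex of $A'$ and a vertex of $A \setminus A'$ both lie in $A$ and at most one of them lies in $S$, so their non-adjacency is inherited from $G$ and they are non-adjacent in $H$; hence $K$ cannot meet both $A'$ and $A \setminus A'$. Since $A'$ is a clique of size at least $2$ it cannot be absorbed into the independent set $I$, so $K$ must meet $A'$, forcing $K \cap (A \setminus A') = \emptyset$. Because $B$ is independent in $H$, the clique $K$ contains at most one vertex $z \in B$. Now independence of $I$ requires every vertex of $A \setminus A'$ to be non-adjacent in $G$ to all of $B \setminus \{z\}$; as such a vertex has no isolated vertices, its unique neighbor must be $z$, so it lies in $O_z$. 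This gives $A \setminus A' \subseteq O_z$, and I would then verify that shrinking $A'$ all the way to $A \setminus O_z$ never destroys feasibility: a pendant neighbor of $z$ moved into $I$ is adjacent only to $z \in K$, so it is safely independent. Consequently the minimum feasible $A'$ for a given clique vertex is exactly $A \setminus O_z$, which is precisely why the sets $Z$ and $O$ were introduced.

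This yields the algorithm. I would enumerate the choice of the clique vertex $z$ over $Z$ (for $z \notin Z$ one only recovers the trivial candidate $A' = A$) and the toggled vertex $b$ over $B$, forming for each pair the candidate set $A' = A \setminus O_z$ and $S = A' \cup \{b\}$. There are $O(n^2)$ such candidates on each side; for each I would check directly, in polynomial time, whether $|S| > 2$ and whether $G \oplus S$ is a split graph, and keep a smallest feasible one. Running this over both symmetric cases produces a minimum-size special solution, and whether $G$ admits an optimal solution that is special is then decided by comparing this size against the best solution obtained from the non-special cases.

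The main obstacle I anticipate is establishing completeness of the characterization, namely that no optimal special solution can have a clique, independent set, or choice of $A'$ outside the enumerated family. Concretely, I must rule out the possibility that excluding an extra vertex of $A'$ into $I$, or placing the toggled vertex $b$ itself in the clique, could yield a strictly smaller feasible set than $A \setminus O_z$. The delicate point is the interaction between the toggled vertex $b$ and the clique vertex $z$ when they coincide: if $z = b$, the pendant neighbors $O_b$ lose their edge to $b$ under complementation and are therefore \emph{forced} out of the clique, so one must check separately that the candidate $A \setminus O_b$ still admits a valid split partition. Verifying splitness directly on each candidate sidesteps the need to construct these partitions by hand, but proving that the optimum is always of the stated form is where the careful case analysis must be carried out.
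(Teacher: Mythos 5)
Your proposal is correct, and it rests on the same structural core as the paper's proof: analyzing a split partition $(K,I)$ of $G \oplus S$, showing that $K$ meets $S \cap A$ but avoids $A \setminus S$, that $K$ contains at most one vertex $z \in B$, that every vertex of $A \setminus S$ must be a degree-one vertex attached to that $z$, and a pendant-removal exchange argument (your ``shrinking'' step is exactly the paper's claim that $O_z \cap S = \emptyset$ when $z \in Z \cap K$, proved by moving a pendant from the clique into the independent set). Where you genuinely diverge is in how completeness is organized. The paper additionally proves the counting claim $|Z \setminus S| \le 1$, deduces $|Z| \in \{1,2\}$, and then enumerates a short tailored candidate list (including the set $\{v\} \cup (A \setminus O)$, which under your sharper analysis is actually redundant, since $A \setminus S \subseteq O_z$ for a \emph{single} $z$ and the sets $O_z$ are pairwise disjoint). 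You skip that counting step entirely: the split-partition analysis plus the exchange argument pins an optimal special solution down to exactly $(A \setminus O_z) \cup \{b\}$ for some $z \in Z$, $b \in B$, and you brute-force the $O(n^2)$ pairs, testing splitness of each candidate. This buys a shorter and arguably more robust completeness proof at the harmless cost of a quadratic rather than linear candidate family. Two points you should make explicit to close the argument: (i) in the case $K \cap B = \emptyset$, the absence of isolated vertices forces $A \setminus S = \emptyset$, i.e.\ $S = A \cup \{b\}$, which is never optimal because $A$ itself is a solution of size $|A|$ (this is the paper's first claim, $A \not\subseteq S$); and (ii) your anticipated difficulty about $z = b$ indeed dissolves, because completeness only requires that the optimal special solution itself belong to the enumerated family --- which your argument establishes --- while infeasible candidates for other pairs $(z,b)$ are simply discarded by the direct splitness test.
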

\begin{proof}

    Let $S$ be a special solution that is also optimal.
    Let $(K,I)$ be a split partition of $G'=G \oplus S$.
We may assume without loss of generality that $|S \cap B|=1$.
As $A$ is a solution and $S$ is optimal, the next claim is clear.
      \begin{claim}\label{claim:A-not-subset-S}
         $A \not \subseteq S$.
    \end{claim}

    \begin{claim}\label{claim:uv-in-A-cap-K-then-uv-in-S}
        If $u,v \in A \cap K$ or $u,v \in B \cap K$, then $u,v \in S$.
    \end{claim}
    \begin{proof}
        Suppose $u,v \in A \cap K$. 
        As $u,v \in A$, we have $uv \notin E(G)$.
        As $u,v \in K$, we have $uv \in E(G')$.
        Thus $u,v \in S$. The proof when $u,v \in B \cap K$ is similar.
    \end{proof}

    \begin{claim}\label{claim:varios}
        $|S\cap A\cap I| \leq 1$, $|S\cap A \cap K| \geq 1$, $A \setminus S \subseteq I$ and $|B \cap K| \leq 1$.
    \end{claim}
    \begin{proof}
        Suppose for a moment that there exists $u,v \in S\cap A \cap I$. In that case, as $A$ is an independent set, $uv \in E(G')$, a contradiction. Hence $|S\cap A\cap I| \leq 1$. And, as $|S\cap A|>1$, we must have $|S\cap A \cap K| \geq 1$.
        Now, as there are no edges between vertices in $A \cap S$ and $A \setminus S$, we must have  $A \setminus S \subseteq I$.
        Finally, as $|B \cap S| = 1$, if there are two vertices in $B \cap K$, at least one of them is not in $S$, so there is no edge between such vertex and any other vertex in $B \cap K$, a contradiction.
    \end{proof}
    
    \begin{claim}\label{claim:A-set-minus-O-subset-S}
         $A \setminus O \subseteq S$.
    \end{claim}
    \begin{proof}
         Suppose, for a contradiction, that there exists $u \in A \setminus (O \cup S)$.
         Thus, as $u \notin O$ and $G$ has no isolated vertices, $u$ has two neighbors $x,y$.
         By Claim \ref{claim:varios}, $u \in I$. So $x,y \in B \cap K$.
          By Claim \ref{claim:uv-in-A-cap-K-then-uv-in-S}, $x,y \in S$, a contradiction to the assumption that $S$ is special.
    \end{proof}
    
    \begin{claim}\label{claim:z-in-ZcapI-then-Oz-subseteq-S}
            If $z \in Z \cap I$ then $O_z \subseteq S$.
    \end{claim}
    \begin{proof}
    Suppose by contradiction that there exists $u \in O_z \setminus S$.
        By Claim \ref{claim:varios}, there exists a vertex $x \in A \cap S \cap K$. As $xu \notin E(G')$, because $u \notin S$, we must have $u \in I$.
        But then, as $uz \in E(G')$ and $z \in I$, we have a contradiction.
       \end{proof}

     \begin{claim}\label{claim:z-in-ZcapK-then-Oz-cap-S-emptyset}
       If $z \in Z \cap K$ then  $O_z \cap S = \emptyset$.
    \end{claim}
    \begin{proof}
    Let $x \in O_z$ and suppose by contradiction $x \in S$.
    Now consider the set $S'= S-x$
    and the sets $K'=K \setminus \{x\}$ and $I'=I \cup \{x\}$. (Observe that if $x \in I$ then $(I,K)=(I',K')$).
        Note that, after complementing $S'$, as $x\in I'$, no edges between vertices of $K'$ are lost.
         Moreover, as the only neighbor of $x$ is $z$, which is in $K'$, no edges between vertices of $I'$ were added. 
Thus, $(K',I')$ is a split partition of $G \oplus S'$, contradicting the minimality of $S$. 
    \end{proof} 

    \begin{claim}\label{claim:Z-setminus-S-leq-1}
       $|Z \setminus S| \leq 1$.
     \end{claim}
     \begin{proof}
         Suppose by contradiction that $|Z \setminus S| \geq 2$.
Let $v \in Z \setminus S$ and let $x \in O_v$. Suppose for a moment that $x \notin S$. By Claim \ref{claim:varios}, $x \in I$. As $v \notin S$, then $xv \in E(G')$. So $v \in K$. As $v$ was arbitrary, every other vertex of $Z \setminus S$ is also in $K$. This implies that $|B \cap K| \geq 2$, a contradiction to Claim \ref{claim:varios}. Thus, every such $x$ is in $S$,
which implies that $O_v \subseteq S$. As $v$ was arbitrary, we must have $O \subseteq S$ and thus $A \subseteq S$ by Claim \ref{claim:A-set-minus-O-subset-S}, contradicting Claim \ref{claim:A-not-subset-S}.
     \end{proof}
    
We now continue with the proof of the lemma.
If $|Z|=0$ then, by Claim \ref{claim:A-set-minus-O-subset-S}, $A \subseteq S$ and we contradict Claim \ref{claim:A-not-subset-S}.
If $|Z| \geq 3$, then, as $|S \cap B|=1$,
we must have $|Z \setminus S| \geq 2$, a contradiction to Claim \ref{claim:Z-setminus-S-leq-1}.
Thus $|Z| \in \{1,2\}$.
Now consider the following algorithm. First we find $Z$, answering no if $|Z| \notin \{1,2\}$.
If $|Z|=1$, then we set $z$ to the only vertex in $Z$
and ask if $\{v\} \cup (A \setminus O_z)$ is a valid solution for some $v \in B$. If that is the case we return such solution.
If $|Z|=2$, then we let $v$ and $w$ be the two vertices of $Z$ and ask if one of $\{v\} \cup (A \setminus O_v)$, $\{v\} \cup (A \setminus O_w)$ or $\{v\} \cup (A \setminus O)$ is a valid solution. If that is the case we return one of minimum size.
If no valid solution is found, the algorithm returns \textsc{No}.

Now we show correctness. Let $S$ be a special solution that is also optimal. As such a solution exists, we know $|Z| \in \{1,2\}$. First suppose $|Z|=1$ and let $z$ be the only vertex in $Z$. By Claims \ref{claim:z-in-ZcapI-then-Oz-subseteq-S} and \ref{claim:z-in-ZcapK-then-Oz-cap-S-emptyset}, either $O_z \subseteq S$ or $O_z \cap S = \emptyset$.
As $A \setminus O \subseteq S$ by Claim \ref{claim:A-set-minus-O-subset-S} and $A \not\subseteq S$ by Claim \ref{claim:A-not-subset-S}, we must have $O_z \cap S = \emptyset$. Hence $ S = \{v\} \cup (A \setminus O_z)$ for some $v \in B$. As we iterated over all $v \in B$, the solution given by the algorithm has the same size as $S$.
Now suppose $|Z|=2$ and let $Z=\{v,w\}$.
By Claims \ref{claim:z-in-ZcapI-then-Oz-subseteq-S} and \ref{claim:z-in-ZcapK-then-Oz-cap-S-emptyset}, either $O_v \subseteq S$ or $O_v \cap S = \emptyset$, and either $O_w \subseteq S$ or $O_w \cap S = \emptyset$.
Also, by Claim \ref{claim:Z-setminus-S-leq-1}, one of $\{v,w\}$ is in $S$.
If $v \in S$ then $S$ is one of $\{v\} \cup (A \setminus O_v)$, $\{v\} \cup (A \setminus O_w)$ or $\{v\} \cup (A \setminus O)$.
As all  these possibilities were tested,
the solution given by the algorithm has the same size as $S$.
As we can test if a graph is split in polynomial time \cite{Golumbic2004}, our algorithm is polynomial.
\end{proof}

Lemma \ref{lemma:solving-special-solutions} helps us solve the case of
special solutions, that is, solutions that intersect exactly one
side of the partition in exactly one vertex.
We now characterize the more general solutions, namely, solutions that
intersect both sides of the partition in more than one vertex.

\begin{lemma}
\label{lemma:bipartite-split-characterization-1}
Let $S$ be a solution with $|S \cap A|>1$ and $|S\cap B|>1$.
Then there exist sets $Q,X\subseteq V(G)$ such that $S = Q \cup X$ and
\begin{itemize}
    \item $Q \in \{\emptyset,K_1,K_2\}$;
    \item $(X,\, V(H)\setminus X)$ is a bipartition of $H = G - Q$;
    \item $N_G(Q)\subseteq X$.
\end{itemize}
\end{lemma}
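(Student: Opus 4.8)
The plan is to fix a split partition $(K,I)$ of $G'=G\oplus S$ and to exhibit the decomposition explicitly by setting $Q:=S\cap I$ and $X:=S\cap K$. I would first record the basic observation that, since $A$ and $B$ are independent sets of $G$, complementing $S$ turns $S\cap A$ and $S\cap B$ into cliques of $G'$; in particular each of them meets the independent set $I$ in at most one vertex. Denote by $a_0$ and $b_0$ the (possibly nonexistent) vertices of $S\cap A\cap I$ and $S\cap B\cap I$, so that $S\cap I\subseteq\{a_0,b_0\}$ and $Q=S\cap I$ has at most two elements.

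The heart of the argument, and the step where the hypotheses $|S\cap A|,|S\cap B|>1$ are essential, is to prove that $K\subseteq S$. For any two vertices $u,v\in K\cap A$ we have $uv\notin E(G)$ (as $A$ is independent) but $uv\in E(G')$ (as $K$ is a clique), so the edge was toggled and hence $u,v\in S$; thus $K\cap A\subseteq S$ whenever $|K\cap A|\ge 2$. Consequently, if some vertex of $K\cap A$ lay outside $S$, then $K\cap A$ would be a single vertex lying outside $S$, forcing $S\cap A\subseteq I$; but $S\cap A$ is a clique contained in the independent set $I$, so $|S\cap A|\le 1$, contradicting the hypothesis. The symmetric argument for $B$ yields $K\cap B\subseteq S$, and therefore $K\subseteq S$.

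From $K\subseteq S$ the three required properties fall out. Every vertex of $V(G)\setminus S$ then lies in $I$, so $V(G)\setminus S$ is independent in $G$; and $X=K$ is independent in $G$ as well, since for $u,v\in K$ the edge $uv\in E(G')$ lies inside $S$ and was therefore toggled from a non-edge of $G$. With $Q=S\setminus K$ and $X=K$ we have $S=Q\cup X$ and $V(H)\setminus X=V(G)\setminus S$, so $(X,\,V(H)\setminus X)$ splits $V(H)$ into two sets independent in $H=G-Q$, i.e.\ a bipartition of $H$. To pin down the form of $Q$, note that when both $a_0$ and $b_0$ exist they satisfy $a_0b_0\notin E(G')$ (both lie in $I$) while both lie in $S$, so the pair was toggled and $a_0b_0\in E(G)$; hence $Q$ is $\emptyset$, a single vertex, or a $K_2$.

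The last property, $N_G(Q)\subseteq X$, is the only part that calls for a short case analysis. Let $c\in N_G(Q)$ be adjacent to $a_0$, so $c\in B$ and $c\notin Q$ (the case of $b_0$ is symmetric). If $c\notin S$ then the edge $a_0c$ is not toggled and survives in $G'$; as $a_0\in I$, this forces $c\in K$, contradicting $K\subseteq S$. Hence $c\in S$, and $c\in I$ would give $c\in S\cap I=\{a_0,b_0\}$, contradicting $c\notin Q$; therefore $c\in S\cap K=X$. I expect the proof of $K\subseteq S$ to be the main conceptual obstacle, since it is the place where the hypotheses on both sides of the bipartition are genuinely used, whereas the remaining items reduce to bookkeeping of the toggling rule.
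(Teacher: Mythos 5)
Your proof is correct and follows essentially the same route as the paper's: the same decomposition $Q = S\cap I$, $X = S\cap K$ taken from a split partition $(K,I)$ of $G\oplus S$, and the same key observation that each of the cliques $S\cap A$ and $S\cap B$ of $G\oplus S$ meets $I$ in at most one vertex. Your claim $K\subseteq S$ merely unifies the paper's two cases ($S\cap I=\emptyset$ versus $S\cap I\neq\emptyset$) into one argument, and you are in fact slightly more careful than the paper at one point, since you verify via the toggling rule that a two-element $Q$ induces a $K_2$ in $G$, which the paper asserts without justification.
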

\begin{proof}
   Write $S_A:=S\cap A$, $S_B:=S\cap B$, $\overline S := V(G)\setminus S$, 
$\overline S_A := A \cap \overline{S}$ and $\overline S_B := B \cap \overline{S}$.
 Let $G' := G\oplus S$, and let $(K,I)$ be a split partition of $G'$.
The following claim is immediate.
\begin{claim}\label{claim:clique-at-most-one-in-I}
If $X$ is a clique in $G'$, then at most one vertex of $X$ lies in $I$.
\end{claim}

Because $G'[S_A]$ and $G'[S_B]$ are cliques, Claim~\ref{claim:clique-at-most-one-in-I} implies that at most one vertex of $S_A$ and at most one vertex of $S_B$ can lie in $I$. Consider two subcases.

\medskip\noindent\textbf{Case 1: } $S\cap I = \varnothing$.\\
Then all vertices of $S$ lie in $K$, so $G'[S]$ is a clique. Thus $S$ is an independent set in $G$. We claim also that $\overline S$ is independent in $G$. Suppose for contradiction there is an edge $uv\in E(G)$ with $u\in \overline{S}_A$ and $v\in \overline{S}_B$. Since vertices of $S_A$ (resp.\ $S_B$) lie in $K$ and $u$ has no neighbors in $S_A$ in $G'$, it follows that $u\in I$; similarly $v\in I$. This contradicts that $I$ is independent. Hence $\overline S$ is independent in $G$, so taking $Q=\varnothing$ and $X=S$ satisfies the lemma  (Figure \ref{fig:bip-split-S-intersects-A-B}a).

\label{fig:bip-split-S-intersects-A-B}
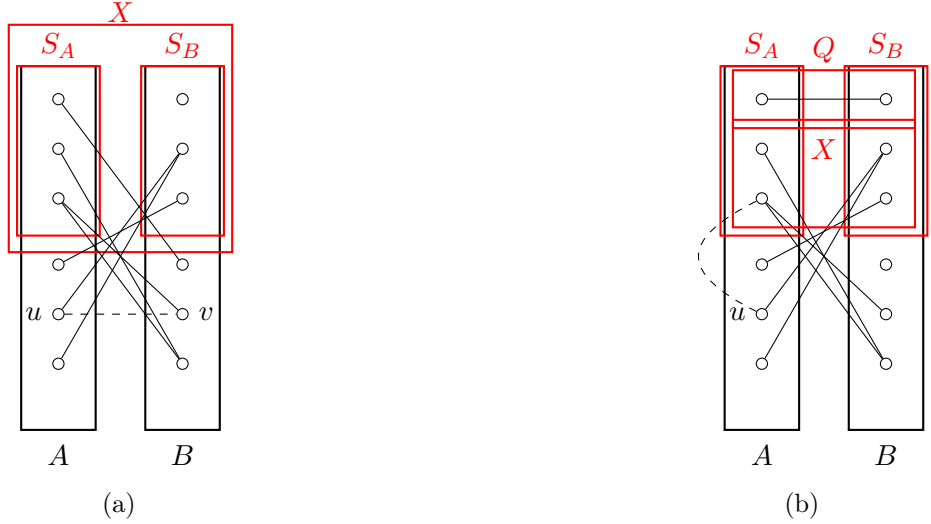
\begin{figure}[h]
\centering

\begin{subfigure}[b]{0.45\textwidth}
\centering
\begin{tikzpicture}[scale=1.1]

\draw[thick] (-1.2,2.2) rectangle (-0.3,-2.2);
\draw[thick] (1.2,2.2) rectangle (0.3,-2.2);

\draw[thick,red] (-1.25,2.2) rectangle (-0.25,0.15);
\node[red] at (-0.75,2.45) {$S_A$};

\draw[thick,red] (0.25,2.2) rectangle (1.25,0.15);
\node[red] at (0.75,2.45) {$S_B$};

\draw[thick,red] (-1.35,2.7) rectangle (1.35,-0.05);
\node[red] at (0,2.85) {$X$};

\node[circle,draw,inner sep=1.5pt] (a1) at (-0.75,1.8) {};
\node[circle,draw,inner sep=1.5pt] (a2) at (-0.75,1.2) {};
\node[circle,draw,inner sep=1.5pt] (a3) at (-0.75,0.6) {};
\node[circle,draw,inner sep=1.5pt] (a4) at (-0.75,-0.2) {};
\node[circle,draw,inner sep=1.5pt,label=left:{$u$}] (a5) at (-0.75,-0.8) {};
\node[circle,draw,inner sep=1.5pt] (a6) at (-0.75,-1.4) {};

\node[circle,draw,inner sep=1.5pt] (b1) at (0.75,1.8) {};
\node[circle,draw,inner sep=1.5pt] (b2) at (0.75,1.2) {};
\node[circle,draw,inner sep=1.5pt] (b3) at (0.75,0.6) {};
\node[circle,draw,inner sep=1.5pt] (b4) at (0.75,-0.2) {};
\node[circle,draw,inner sep=1.5pt,label=right:{$v$}] (b5) at (0.75,-0.8) {};
\node[circle,draw,inner sep=1.5pt] (b6) at (0.75,-1.4) {};

\draw (a1) -- (b4);
\draw (a2) -- (b6);
\draw (a3) -- (b6);
\draw (a3) -- (b5);
\draw (a4) -- (b3);
\draw (a5) -- (b2);
\draw (a6) -- (b2);
\draw[dashed] (a5) -- (b5);

\node at (-0.75,-2.5) {$A$};
\node at (0.75,-2.5) {$B$};

\end{tikzpicture}
\caption{}
\end{subfigure}
\hfill
\begin{subfigure}[b]{0.45\textwidth}
\centering
\begin{tikzpicture}[scale=1.1]

\draw[thick] (-1.2,2.2) rectangle (-0.3,-2.2);
\draw[thick] (1.2,2.2) rectangle (0.3,-2.2);

\draw[thick,red] (-1.25,2.2) rectangle (-0.25,0.15);
\node[red] at (-0.75,2.45) {$S_A$};

\draw[thick,red] (0.25,2.2) rectangle (1.25,0.15);
\node[red] at (0.75,2.45) {$S_B$};

\node[circle,draw,inner sep=1.5pt] (a1) at (-0.75,1.8) {};
\node[circle,draw,inner sep=1.5pt] (a2) at (-0.75,1.2) {};
\node[circle,draw,inner sep=1.5pt] (a3) at (-0.75,0.6) {};
\node[circle,draw,inner sep=1.5pt] (a4) at (-0.75,-0.2) {};
\node[circle,draw,inner sep=1.5pt,label=left:{$u$}] (a5) at (-0.75,-0.8) {};
\node[circle,draw,inner sep=1.5pt] (a6) at (-0.75,-1.4) {};

\node[circle,draw,inner sep=1.5pt] (b1) at (0.75,1.8) {};
\node[circle,draw,inner sep=1.5pt] (b2) at (0.75,1.2) {};
\node[circle,draw,inner sep=1.5pt] (b3) at (0.75,0.6) {};
\node[circle,draw,inner sep=1.5pt] (b4) at (0.75,-0.2) {};
\node[circle,draw,inner sep=1.5pt] (b5) at (0.75,-0.8) {};
\node[circle,draw,inner sep=1.5pt] (b6) at (0.75,-1.4) {};

\draw (a1) -- (b1);
\draw (a2) -- (b6);
\draw (a3) -- (b6);
\draw (a3) -- (b5);
\draw (a4) -- (b3);
\draw (a5) -- (b2);
\draw (a6) -- (b2);
\draw[dashed] 
  (a5) .. controls +(-1,0.5) and +(-1,-0.5) .. (a3);

\draw[thick,red]
    ($(a1)+(-0.35,0.35)$) rectangle ($(b1)+(0.35,-0.35)$);
\node[red] at (0,2.4) {$Q$};

\draw[thick,red]
    ($(a2)+(-0.35,0.35)$) rectangle ($(b3)+(0.35,-0.35)$);
\node[red] at (0,1.2) {$X$};

\node at (-0.75,-2.5) {$A$};
\node at (0.75,-2.5) {$B$};

\end{tikzpicture}
\caption{}
\end{subfigure}

\caption{Situations in the proof of Lemma. \ref{lemma:bipartite-split-characterization-1}.
$(a)$ $S \cap I = \emptyset$. $(b)$ $S \cap I \neq \emptyset$. A dashed line implies the absence of an edge in the original graph.}
\end{figure}

\medskip\noindent\textbf{Case 2: } At least one of $S_A$ or $S_B$ intersects $I$.\\
Let $Q := S\cap I$ and $X := S\cap K$. Recall that $|Q|\le 2$, so $Q$ induces either $K_1$ or $K_2$ ($Q=\varnothing$ is Case 1), proving the first item.
We now show that $(X, V(H)\setminus X)$ is a bipartition of $H:=G-Q$. 
First observe that, as $|S| \geq 4$ and $|Q| \leq 2$, $X \neq \emptyset$. 
Note that $G'[X]$ is a clique (since $X\subseteq K$), so $X$ is independent in $G$. Take any $u\in V(H)\setminus X$. If $u\in A\setminus X$ then $u$ has no neighbors in $S_A\cap X$, hence $u \in I$. Similarly vertices in $B\setminus X$ lie in $I$. Therefore there are no edges with both endpoints in $V(H)\setminus X$, so $(X,V(H)\setminus X)$ is indeed a bipartition of $H$ (Figure \ref{fig:bip-split-S-intersects-A-B}b).
Finally, since $Q\subseteq I$ and vertices of $V(H)\setminus X$ are also in $I$, $Q$ has no neighbors in $V(H)\setminus X$. Thus all neighbors of $Q$ in $G$ (if any) must lie inside $X$, proving $N(Q)\subseteq X$ and completing the proof.
\end{proof}

We now prove a converse statement that will be useful for our algorithm.

\begin{lemma}
\label{lemma:bipartite-split-characterization-2}
Let $Q$ be a clique (of size at most $2$) in $G$.
Let $(X,Y)$ be a bipartition of $H=G-Q$. 
Let $S = X \cup Q$.  
If $N(Q) \subseteq X$, then $S$ is a solution.
\end{lemma}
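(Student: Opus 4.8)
The plan is to directly exhibit a split partition of $G' := G \oplus S$ and verify its two defining properties. Guided by the structure uncovered in Lemma~\ref{lemma:bipartite-split-characterization-1}, the natural candidate is to take the clique side to be $K := X$ and the independent side to be $I := Q \cup Y$, where $Y := V(H) \setminus X$. Since $X \subseteq V(H) = V(G) \setminus Q$, the sets $X$, $Y$, $Q$ are pairwise disjoint, $Y = V(G)\setminus S$, and $K \cup I = V(G)$ with $K \cap I = \varnothing$; hence $(K,I)$ is a genuine partition of $V(G')$. It is worth recalling that the complementation only toggles adjacencies of pairs with both endpoints in $S = X \cup Q$, so any pair with an endpoint in $Y$ is left unchanged.

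First I would check that $K = X$ is a clique in $G'$. Since $X \subseteq S$, every adjacency within $X$ is toggled, so $G'[X] = \overline{G[X]}$. As $X$ is one side of the bipartition $(X,Y)$ of $H$, it is independent in $H$ and therefore in $G$ (recall $G[X] = H[X]$); hence $G'[X]$ is complete.

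Next I would verify that $I = Q \cup Y$ is independent in $G'$, by treating three kinds of pairs. For two vertices of $Y$: as $Y \cap S = \varnothing$ their adjacency is untouched, and $Y$ (the other side of the bipartition of $H$) is independent in $G$, so they remain non-adjacent. For two vertices of $Q$: both lie in $S$, so their adjacency is toggled; since $Q$ is a clique of size at most two, its at most one edge is removed and $Q$ becomes independent in $G'$ (the degenerate cases $|Q|\le 1$ need no separate treatment, as $Q$ then has no internal edges). The only genuinely nontrivial pairs are those with one endpoint in $Q$ and the other in $Y$: here the $Y$-endpoint lies outside $S$, so the adjacency is inherited unchanged from $G$, and I must rule out such edges. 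This is exactly where the hypothesis $N(Q) \subseteq X$ enters: since $X$ and $Y$ are disjoint, $N(Q) \cap Y = \varnothing$, so $G$ (and hence $G'$) has no edge between $Q$ and $Y$.

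Having established that $K$ is a clique and $I$ is independent in $G'$, and since the definition of a split graph imposes no constraint on the edges between $K$ and $I$, I conclude that $(K,I)$ is a split partition of $G'$, and therefore $S = X \cup Q$ is a solution. The argument is short, and I expect its only subtle point to be the handling of the $Q$--$Y$ pairs, which is precisely the step forcing the use of the hypothesis $N(Q) \subseteq X$; everything else is a routine bookkeeping of which adjacencies the complementation does and does not toggle.
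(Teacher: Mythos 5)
Your proof is correct and takes essentially the same route as the paper's: both directly exhibit the split partition $(K,I)=(X,\,Q\cup Y)$ of $G\oplus S$, using that $X$ is independent in $G$ and lies entirely inside $S$ (so it becomes a clique), that $Y$ is disjoint from $S$ (so it stays independent), and that $N(Q)\subseteq X$ rules out $Q$--$Y$ edges. Your explicit check that the internal edge of $Q$ is toggled away when $|Q|=2$ is a small detail the paper's proof leaves implicit, but it does not change the argument.
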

\begin{proof}
Let $K = X$ and let $I = V(G)\setminus K$.
Since $X$ is an independent set in $G$ and $X \subseteq S$, $K$ is a clique in $G \oplus S$.
Because $Y$ is an independent set in $G$ and $Y \cap S = \emptyset$, $Y$ remains an independent set in $G \oplus S$.
Finally, every vertex of $Q$ has all its neighbors contained in $X$ by assumption.  
Therefore $Q$ has no neighbors in $Y$, and hence $I = Y \cup (Q\setminus X)$ is an independent set in $G \oplus S$.
\end{proof}
We now prove our theorem.

\begin{theorem}
\label{thm:bipartite-split}
Let $\mathcal{C}$ be the class of split graphs.  
Then MSC to $\mathcal{C}$ can be solved in polynomial time if the input is a bipartite graph.
\end{theorem}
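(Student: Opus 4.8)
The plan is to organize the search for an optimal solution $S$ according to how $S$ meets the two sides $A$ and $B$ of the bipartition, solving each regime by a separate polynomial-time routine and returning the smallest feasible set found. After the reduction already noted—deleting isolated vertices and testing whether $G$ itself is split (the case $S=\varnothing$)—I would distinguish the regimes $|S|\le 2$; $|S\cap A|=1$ or $|S\cap B|=1$ with $|S|>2$; $|S\cap A|>1$ and $|S\cap B|>1$; and the one-sided regime $S\subseteq A$ or $S\subseteq B$ with $|S|>2$. The first is handled by brute force over the $O(n^2)$ subsets of size at most two, and the second is exactly the \emph{special} regime resolved by Lemma~\ref{lemma:solving-special-solutions}.

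For the regime $|S\cap A|>1$ and $|S\cap B|>1$, I would lean entirely on Lemmas~\ref{lemma:bipartite-split-characterization-1} and~\ref{lemma:bipartite-split-characterization-2}. By the first, every such solution decomposes as $S=Q\cup X$ with $Q$ a clique of size at most two, $(X,V(H)\setminus X)$ a bipartition of $H=G-Q$, and $N_G(Q)\subseteq X$; by the second, every set of this form is feasible. Thus it suffices to enumerate all $O(n^2)$ candidate cliques $Q\in\{\varnothing\}\cup E(G)\cup\{\{v\}:v\in V(G)\}$ and, for each, compute the smallest admissible $X$. Since $H$ is bipartite, a bipartition of $H$ chooses, independently in each connected component, which of its two color classes goes into $X$; the constraint $N_G(Q)\subseteq X$ either forces the choice in a component (and renders $Q$ infeasible if $N_G(Q)$ meets both classes of one component) or leaves it free, in which case I take the smaller class. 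Summing the per-component minima gives the optimal $|X|$ for that $Q$, hence the best two-sided candidate $Q\cup X$.

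The delicate point—and the step I expect to be the main obstacle—is the one-sided regime $S\subseteq A$ (with $S\subseteq B$ symmetric) and $|S|>2$, which is \emph{not} covered by the two characterization lemmas, since the proof of Lemma~\ref{lemma:bipartite-split-characterization-1} genuinely uses $|S\cap A|,|S\cap B|>1$. Here I would argue directly from a split partition $(K,I)$ of $G\oplus S$. As $S$ becomes a clique of size at least three, at most one of its vertices lies in $I$; and since in $G\oplus S$ the set $S$ is nonadjacent to $A\setminus S$, no vertex of $A\setminus S$ can join $K$, so $A\setminus S\subseteq I$. Moving any $S$-vertex of $I$ out of $S$ only shrinks the solution, so I may assume $S\subseteq K$, whence $K=S\cup\{b^\ast\}$ for at most one $b^\ast\in B$. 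The independence of $I$ then forces every vertex of $A\setminus S$ to have all its neighbors in $\{b^\ast\}$; as $G$ has no isolated vertices, this means $A\setminus S$ consists of degree-one vertices with common neighbor $b^\ast$, i.e.\ $A\setminus S\subseteq O_{b^\ast}$, while the clique condition forces $b^\ast$ to dominate $A$. Consequently the only one-sided candidates worth testing are $S=A$, $S=B$, and $S=A\setminus O_{b^\ast}$ (resp.\ $B\setminus O_{a^\ast}$) for each vertex $b^\ast$ dominating $A$ (resp.\ $a^\ast$ dominating $B$)—at most $O(n)$ sets, each checkable in polynomial time. The main work is establishing that these four regimes are exhaustive and that this one-sided analysis is tight; once that is in place, returning a minimum-size feasible candidate over all regimes yields the claimed polynomial-time algorithm.
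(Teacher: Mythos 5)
Your proposal is correct, and for three of your four regimes it coincides with the paper's proof: brute force over sets of size at most two, the special regime delegated to Lemma~\ref{lemma:solving-special-solutions}, and the two-sided regime handled by enumerating $Q\in\{\varnothing,K_1,K_2\}$ and minimizing $X$ component-by-component via Lemmas~\ref{lemma:bipartite-split-characterization-1} and~\ref{lemma:bipartite-split-characterization-2}, exactly as in the paper's second phase. Where you genuinely diverge is your fourth, one-sided regime ($S\subseteq A$ or $S\subseteq B$ with $|S|>2$), and here you have identified a real lacuna in the paper's own argument: the paper's correctness proof only distinguishes ``$R$ is special'' (one intersection \emph{exactly} one) from ``$R$ meets both sides at least twice,'' so an optimal $R$ with $R\cap B=\varnothing$ and $|R|\ge 3$ is covered by neither case, and such solutions are not in the paper's candidate pool. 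This omission is not vacuous. Take $A=\{a_1,a_2,a_3,x_1,x_2\}$ and $B=\{b^\ast,b_1,b_2,b_3\}$, with $b^\ast$ adjacent to all of $A$ and each $b_j$ adjacent to $a_1,a_2,a_3$. Then $S=\{a_1,a_2,a_3\}$ is a solution of size~$3$ (the set $\{a_1,a_2,a_3,b^\ast\}$ becomes a clique and $\{x_1,x_2,b_1,b_2,b_3\}$ is independent), yet no set of size at most two works (every induced $C_4$ here uses two vertices of $B$ and two of $\{a_1,a_2,a_3\}$, and for every pair $\{u,v\}$ some such $C_4$ avoids $u$ or $v$ and survives the toggle), the special candidates $\{v\}\cup(A\setminus O_{b^\ast})$ all have size~$4$, and the best second-phase candidate is $B$ with $|B|=4$; so the paper's algorithm as written returns $4$ on an instance with optimum $3$.

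Your direct analysis of the one-sided regime is sound and patches this: since $S$ becomes a clique of size at least three, at most one of its vertices lies in $I$, the absence of edges between $S$ and $A\setminus S$ forces $A\setminus S\subseteq I$, removing any vertex of $S\cap I$ from $S$ preserves feasibility (it only deletes edges incident to a vertex of $I$), so a minimum solution has $S\subseteq K$, whence $K=S\cup\{b^\ast\}$ with at most one $b^\ast\in B$; independence of $I$ then gives $A\setminus S\subseteq O_{b^\ast}$ with $b^\ast$ dominating $A$, and conversely $A\setminus O_{b^\ast}$ is feasible whenever $b^\ast$ dominates $A$, so the $O(n)$ candidates $A$, $B$, $A\setminus O_{b^\ast}$, $B\setminus O_{a^\ast}$ suffice. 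Your four regimes are exhaustive (every $S$ with $|S|>2$ has intersection pattern $(0,\ge 3)$, $(1,\ge 2)$, $(\ge 2,1)$, $(\ge 3,0)$, or $(\ge 2,\ge 2)$), so your version of the algorithm is complete where the paper's is not. In short: same skeleton as the paper, plus a necessary additional case analysis that the paper should adopt.
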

\begin{proof}

Let $G$ be a bipartite graph. 
The algorithm begins by testing all sets of size at most $2$ to check whether such a set is a solution, and, if so, returns one of minimum size.
Otherwise, the algorithm has two main phases, according to Lemmas \ref{lemma:solving-special-solutions} and \ref{lemma:bipartite-split-characterization-1}.
In the first phase, the algorithm tests whether there is a special solution that is also optimal, and if that is the case we return such a solution. If the answer is negative in this phase, we proceed to the next phase. By Lemma \ref{lemma:solving-special-solutions}, this can be done in polynomial time.

In the second phase, the algorithm considers all possibilities for $Q\in\{\varnothing,K_1,K_2\}$
 (all cliques of size at most two in $G$).  
Fix such a set $Q$.  
Let $H := G - Q$, and let $C_1,\dots,C_k$ be the components of $H$, which can be computed in polynomial time.  
For each component $C_i$, fix an arbitrary bipartition $(A_i,B_i)$ of $C_i$ such that $|A_i|\le |B_i|$.

If, for some $i$, $N(Q)$ has non-empty intersection with both $A_i$ and $B_i$, then
we discard the choice of $Q$.  
Otherwise we set
\[
W_i :=
\begin{cases}
A_i & \text{if } N(Q)\cap A_i\neq\varnothing,\\[2pt]
B_i & \text{if } N(Q)\cap B_i\neq\varnothing,\\[2pt]
\text{the smaller of }A_i\text{ and }B_i & \text{if }N(Q)\cap C_i=\varnothing.
\end{cases}
\]
We then define $X := \bigcup_{i=1}^k W_i$ and set $S_Q := Q \cup X$.  
The algorithm computes $|S_Q|$ for all possible values of $Q$ and outputs a set $S_{Q^*}$ of minimum cardinality.

We now prove correctness. 
The pair $(X,\, V(H)\setminus X)$ forms a bipartition of $H$, and by construction $N(Q)\subseteq X$.  
Thus Lemma~\ref{lemma:bipartite-split-characterization-2} implies that $S_Q$ is a solution for every choice of $Q$ tested in the second phase of the algorithm.

Let $R$ be an optimal solution. First, suppose that $R$ is a special solution.
By Lemma~\ref{lemma:solving-special-solutions}, as there exists such an $R$, the first phase of the algorithm returns a special solution that is also optimal.
Suppose now that $R$ intersects both $A$ and $B$ at least twice.
By Lemma~\ref{lemma:bipartite-split-characterization-1}, $R$ admits a decomposition  
$R = Q^R \cup X^R$ where $Q^R\in\{\varnothing,K_1,K_2\}$,  
$N(Q^R)\subseteq X^R$, and $(X^R, V(H^R)\setminus X^R)$ is a bipartition of $H^R := G - Q^R$.
Let $C_1,C_2,\ldots ,C_k$ be the components of $H^R$.
As every $C_i$ is connected, for any $i$, $X_i^R := X^R \cap C_i \in \{A_i,B_i\}$.

Consider the iteration of the algorithm corresponding to the choice $Q=Q^R$.  
By (iii) of Lemma~\ref{lemma:bipartite-split-characterization-1}, $N(Q^R) \subseteq X^R$.
Hence $N(Q^R)$ does not intersect both parts of the partition of $C_i$. 
Thus, if $N(Q^R) \cap A_i \neq \emptyset$, then $|W_i|=|A_i|\leq |X_i^R|$, 
if $N(Q^R) \cap B_i \neq \emptyset$, then $|W_i|=|B_i|\leq |X_i^R|$, 
and otherwise $W_i$ is the smaller of $A_i$ and $B_i$, so again $|W_i|\leq |X_i^R|$.
Hence $|X|\leq |X^R|$ and 
\[
|S_{Q^R}| = |Q^R| + |X| \le |Q^R| + |X^R| = |R|.
\]
Since $R$ is optimal, equality must hold, so $S_{Q^R}$ is also optimal.  
The algorithm therefore returns an optimal solution.

Finally, the algorithm tests a polynomial number of possibilities for $Q$, and each iteration runs in polynomial time.  
Therefore, MSC to $\mathcal{C}$ is solvable in polynomial time on bipartite graphs.
\end{proof}

As before, the next corollary is clear.

\begin{corollary}
    \label{cor:co-bipartite-split}
Let $\mathcal{C}$ be the class of split graphs.  
Then MSC to $\mathcal{C}$ can be solved in polynomial time if the input is a co-bipartite graph.
\end{corollary}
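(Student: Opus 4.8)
The plan is to derive this corollary directly from Theorem~\ref{thm:bipartite-split} via the complement-transfer principle of Lemma~\ref{lem:complement-transfer}, exactly as Corollary~\ref{cor:split-co-bipartite} was obtained from Theorem~\ref{thm:split-bipartite}. The key structural observation is that the class of split graphs is \emph{self-complementary}: if $G$ is a split graph with partition $V(G) = K \cup I$ into a clique $K$ and an independent set $I$, then in $\overline{G}$ the set $K$ becomes an independent set and $I$ becomes a clique, so $\overline{G}$ is again a split graph. Hence $\overline{\text{split}} = \text{split}$. Likewise, by definition the class of co-bipartite graphs is precisely the complement of the class of bipartite graphs, i.e.\ $\overline{\text{bipartite}} = \text{co-bipartite}$.

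Given these two identities, I would instantiate Lemma~\ref{lem:complement-transfer} with $\mathcal{C}$ the class of split graphs and $\mathcal{D}$ the class of bipartite graphs. Theorem~\ref{thm:bipartite-split} supplies exactly the hypothesis of the lemma: MSC to $\mathcal{C}$ (split graphs) can be solved in polynomial time whenever the input belongs to $\mathcal{D}$ (bipartite graphs). The lemma then yields that MSC to $\overline{\mathcal{C}}$ can be solved in polynomial time whenever the input belongs to $\overline{\mathcal{D}}$. Substituting $\overline{\mathcal{C}} = \text{split}$ and $\overline{\mathcal{D}} = \text{co-bipartite}$ gives precisely the claimed statement.

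Since the whole argument is a one-line application of an already-proven lemma, there is essentially no obstacle to overcome; the only fact requiring (trivial) verification is the self-complementarity of split graphs, which was already invoked implicitly in the derivation of Corollary~\ref{cor:split-co-bipartite}. Concretely, the polynomial-time algorithm one extracts would be: given a co-bipartite input $G$, compute $\overline{G}$ (which is bipartite), run the algorithm of Theorem~\ref{thm:bipartite-split} to obtain a minimum set $S$ with $\overline{G} \oplus S$ split, and return $S$. By Claim~\ref{claim:complement-commutes}, $G \oplus S = \overline{\overline{G} \oplus S}$ is the complement of a split graph and hence split, and minimality transfers because the feasible sets for the two instances coincide.
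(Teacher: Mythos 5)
Your proposal is correct and takes exactly the route the paper intends: the paper derives this corollary, just as Corollary~\ref{cor:split-co-bipartite}, from Lemma~\ref{lem:complement-transfer} applied to Theorem~\ref{thm:bipartite-split}, using that the class of split graphs is closed under complementation and that co-bipartite graphs are precisely the complements of bipartite graphs. Your explicit unpacking of the resulting algorithm and the transfer of minimality via Claim~\ref{claim:complement-commutes} is a faithful elaboration of the same argument.
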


\section{Bipartite regular graphs to chordal graphs}
\label{section:bipartite-to-chordals}

A chord in a cycle of a graph is an edge between some vertices of the cycle that is not part of the cycle.
A graph is \emph{chordal} if any cycle of the graph has a chord. Equivalently, a graph is chordal if there exists no induced cycle of length at least four.
In this section, we are interested in solving the MSC problem when the input graph is a bipartite 
regular graph and the target is the class of chordal graphs.
Note that if we complement one of the partitions of a bipartite graph, it results in a chordal graph; hence, bipartite graphs are complementable to chordal graphs.
The next proposition follows directly from the definition of chordal graphs.
\begin{proposition}\label{prop:cycleimpliestriangle}
Let $G$ be a chordal graph with an edge $uv$.
If there exists a cycle $C$ in $G$ that contains $uv$, then there exists a triangle $uvw$ in $G$ with $vw \in E(C)$.
\end{proposition}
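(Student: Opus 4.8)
The plan is to prove the statement by induction on the length of the cycle $C$, while keeping the completing edge on the \emph{original} cycle $C$ rather than on any chord introduced along the way. The first move is to fix the neighbour $w$ of $v$ on $C$ that is distinct from $u$, so that $vw\in E(C)$ holds \emph{by construction}. With $w$ pinned down this way, the whole task reduces to establishing the single adjacency $uw\in E(G)$: this makes $uvw$ a triangle whose edge $vw$ already lies on $C$, which is exactly what the proposition asks. If $C$ is itself a triangle, then $w$ is its third vertex and $uw\in E(C)\subseteq E(G)$, so the base case is immediate.

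For the inductive step assume $|C|\ge 4$. Since $G$ is chordal, so is the induced subgraph $G[V(C)]$, and a non-complete chordal graph has a simplicial vertex; the two cycle-neighbours of a simplicial vertex lie in its (clique) neighbourhood and are therefore adjacent. Such a vertex can be \emph{bypassed}: replacing its two incident cycle-edges by the guaranteed edge between its neighbours yields a strictly shorter cycle $C'$ in $G$. The key idea is to choose a bypassable vertex $s\notin\{u,v,w\}$. Deleting $s$ destroys neither $uv$ nor $vw$ and leaves $u,v,w$ with unchanged incident cycle-edges, so $C'$ is again a cycle through $uv$ in which $w$ remains the neighbour of $v$ distinct from $u$. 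The inductive hypothesis applied to $C'$ then delivers the adjacency $uw\in E(G)$; and because $w$ was never deleted, $vw$ is still an edge of the \emph{original} $C$. Note that it is only the edge $vw$, not the adjacency $uw$, that must live on $C$, so protecting $v$ and $w$ from deletion is what prevents the completing edge from migrating onto an auxiliary chord.

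The step I expect to be the main obstacle is guaranteeing, at every stage, a bypassable vertex strictly outside $\{u,v,w\}$. The reduction above stalls precisely when all simplicial vertices of $G[V(C)]$ are confined to $\{u,v,w\}$ (for instance when $u$ and $w$ are two non-adjacent simplicial vertices): there one cannot shrink $C$ without removing one of $u,v,w$, which is exactly the move that would pull the completing edge off the original cycle. Resolving this bottleneck is the crux of the argument, and it is where the local chordal structure around the edge $uv$ must be exploited directly---analysing the clique neighbourhoods of $u$, $v$, and $w$ along $C$ to decide the adjacency $uw$, or strengthening the inductive invariant so that the edge realising the triangle at $v$ is tracked explicitly and provably stays in $E(C)$. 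I would therefore concentrate the effort on making this terminal configuration airtight, treating the remainder of the induction as routine.
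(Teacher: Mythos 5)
Your proposal is not yet a proof, and the gap you flag at the end is not a removable technicality: the strengthened claim you reduce to is false. Once you fix $w$ to be the $C$-neighbour of $v$ other than $u$, the target adjacency $uw \in E(G)$ can simply fail. Take the chordal graph on $\{u,v,w,x\}$ consisting of the $4$-cycle $u,v,w,x$ plus the single chord $vx$, and let $C$ be that $4$-cycle. Then $C$ contains $uv$, but $uw \notin E(G)$; the only triangle through $uv$ is $uvx$, and the cycle edge it uses is $xu$, incident to $u$, not to $v$. Note that this example lands precisely in the terminal configuration you identified: the simplicial vertices of $G[V(C)]$ are exactly $u$ and $w$, both inside your protected set $\{u,v,w\}$, so no bypass exists \emph{and} the desired adjacency is absent. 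In other words, pinning the triangle's cycle edge at $v$ is the wrong invariant; at best one could hope for an edge of $C$ at $u$ \emph{or} at $v$.

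But even that symmetric repair fails, so the proposition as printed is itself false and no induction scheme can close your gap. Consider the fan with apex $b$ over the path $c,u,v,a$, i.e.\ edges $uv, va, ab, bc, cu, ub, vb$ (chordal, with perfect elimination order $c,a,u,v,b$), and the Hamiltonian cycle $C = u,v,a,b,c$. The unique triangle through $uv$ is $uvb$, and neither $ub$ nor $vb$ lies on $C$. For comparison, the paper offers no proof at all (it asserts the proposition ``follows directly from the definition''), and what its two later applications actually require are two weaker statements that \emph{are} true: (i) every edge lying on a cycle of a chordal graph lies in some triangle, with no constraint tying the third vertex to $C$; and (ii) every cycle $C$ of length at least four through $uv$ has a chord incident to $u$ or to $v$. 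Statement (i) is provable by exactly your simplicial-bypass induction once you protect only $\{u,v\}$: if $G[V(C)]$ is complete the claim is trivial, and otherwise it has two non-adjacent simplicial vertices, which cannot both lie in the adjacent pair $\{u,v\}$, so a bypassable vertex always exists. Statement (ii) follows by taking a shortest path $P$ in $G[V(C)]-\{u,v\}$ between the other $C$-neighbours of $v$ and of $u$, and noting that the cycle formed by $uv$ together with $P$ has length at least four, hence has a chord, which by minimality of $P$ must be incident to $u$ or $v$ and is then a chord of $C$. So the constructive core of your argument is sound and salvages (i), but aimed at the stated proposition it proves something false, and the ``crux'' you deferred is a genuine counterexample, not a missing lemma.
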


We begin this section by characterizing a solution of minimum size for this problem. After that, we apply this characterization to solve the problem in polynomial time for
biregular graphs.

\begin{lemma}\label{lemma:biparte-charact}
Let $G=(A,B,E(G))$ be a 2-connected bipartite graph.
Let $S \subseteq V(G)$ such that $G[S]$ is not a complete bipartite graph.
Then $G' = G \oplus S$ is chordal if and only if the following hold:
\begin{itemize}
    \item[$(i)$] $S$ is a vertex cover in $G$, and
    \item[$(ii)$] $G[S]$ is $2K_2$ induced free.
\end{itemize}
\end{lemma}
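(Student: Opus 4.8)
We have a 2-connected bipartite graph $G=(A,B,E)$ and a vertex set $S$ such that $G[S]$ is not complete bipartite. We need to show $G\oplus S$ is chordal iff (i) $S$ is a vertex cover and (ii) $G[S]$ is $2K_2$-induced-free.

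**Key structural observation about $G\oplus S$.** Let me think about what $G\oplus S$ looks like. Since $G$ is bipartite with parts $A,B$, write $S_A = S\cap A$ and $S_B = S\cap B$. Edges within $A$: none originally; complementing $S$ adds all edges inside $S_A$ (since $A$ is independent). Similarly all edges inside $S_B$ appear. Edges between $A$ and $B$: those with both endpoints in $S$ get toggled; others stay. So in $G\oplus S$:
- $S_A$ becomes a clique, $S_B$ becomes a clique.
- Between $S_A$ and $S_B$: adjacency is toggled.
- All edges touching $V(G)\setminus S$ are unchanged.

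So $G\oplus S$ has two cliques $S_A, S_B$, plus the bipartite structure outside.

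**Proof strategy.**

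*(⇐) Assume (i) and (ii); show $G\oplus S$ chordal.* Suppose for contradiction there's an induced cycle $C$ of length $\geq 4$ in $G'=G\oplus S$. I need to derive that either $S$ isn't a vertex cover or $G[S]$ contains an induced $2K_2$.

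Since $S_A$ and $S_B$ are cliques, an induced cycle can contain at most 2 vertices from $S_A$ and at most 2 from $S_B$ (consecutive on the cycle). The vertices outside $S$ form an independent set in $G'$ (they're in $A$ or $B$, no edges toggled among them, and $A$-part/$B$-part are independent... wait, need to check: a vertex in $A\setminus S$ and a vertex in $B\setminus S$ could be adjacent). Let me reconsider: $(A\setminus S)$ is independent, $(B\setminus S)$ is independent, but there can be edges between $A\setminus S$ and $B\setminus S$.

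*Using (i):* $S$ is a vertex cover means every edge of $G$ has an endpoint in $S$; in particular there are NO edges inside $V\setminus S$ in $G$, and since those edges aren't toggled, $V\setminus S$ is independent in $G'$. So an induced cycle has no two consecutive vertices outside $S$.

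So in an induced $C_k$ ($k\geq 4$), vertices outside $S$ are pairwise non-adjacent, hence non-consecutive; the cycle alternates so roughly half its vertices lie in $S$. Combined with $|C\cap S_A|\leq 2$, $|C\cap S_B|\leq 2$, this bounds $k$. Then one hunts within the small cases for an induced $2K_2$ in $G[S]$ among the $S$-vertices of the cycle, contradicting (ii).

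*(⇒) Assume $G'$ chordal; derive (i) and (ii).* For (i): if $S$ is not a vertex cover, there's an edge $uv$ with $u,v\notin S$, untouched, so $uv\in E(G')$ with $u\in A, v\in B$. Use 2-connectedness of $G$ to find a cycle through $uv$, then show the complementation can't chordalize it — likely producing an induced long cycle. For (ii): if $G[S]$ contains induced $2K_2$, say edges $a_1b_1, a_2b_2$ with no other edges among $\{a_1,a_2,b_1,b_2\}$, examine these four vertices in $G'$: $a_1a_2$ and $b_1b_2$ become edges (cliques), $a_1b_1,a_2b_2$ toggle to non-edges, $a_1b_2,a_2b_1$ toggle to edges — producing an induced $C_4$, contradicting chordality.

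**Main obstacle.** The hard direction is (⇐), specifically ruling out induced cycles that weave through $S_A$, $S_B$, and the outside vertices. The bound "$\leq 2$ vertices per clique" plus "no two consecutive outside vertices" should force any induced long cycle to use edges inside $S$ that constitute an induced $2K_2$, but assembling this cleanly — checking that the relevant $S$-vertices are mutually non-adjacent in the right pattern — will require care, and I expect to invoke Proposition~\ref{prop:cycleimpliestriangle} and 2-connectedness to handle the (⇒) vertex-cover part. The plan is to exploit that $G\oplus S$ consists of two cliques glued to a bipartite remainder, reducing induced-cycle analysis to a finite case check.
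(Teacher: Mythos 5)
Your forward-direction argument for (ii) is correct and matches the paper (the induced $2K_2$ in $G[S]$ becomes an induced $C_4$ in $G'$), but the other two parts of your plan have genuine gaps. For the backward direction, your ``bound $k$ and hunt through small cases'' is not an argument as written, and you missed the move that makes the case analysis unnecessary: if a vertex $u$ of an induced cycle $C$ of $G'$ with $|C|\ge 4$ lies outside $S$, then the edges at $u$ are untouched, so both cycle-neighbors of $u$ are $G$-neighbors of $u$ and hence lie in the part opposite to $u$; since $S$ is a vertex cover, both lie in $S$, i.e.\ both lie in the \emph{same} clique $S_A$ or $S_B$ of $G'$, giving a chord between the two neighbors of $u$ --- impossible in an induced cycle of length at least $4$. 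Hence $V(C)\subseteq S$; then if $|C|\ge 5$, the pigeonhole principle puts three vertices of $C$ in one part, where they form a clique in $G'$, again a chord; so $|C|=4$, and complementing within $S$ turns this induced $C_4$ of $G'[S]$ into an induced $2K_2$ of $G[S]$, contradicting (ii). This is exactly the paper's route, and it closes the part you flagged as the ``main obstacle.''

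The more serious gap is your treatment of (i) in the only-if direction. Saying ``use 2-connectedness to find a cycle through $uv$, then show the complementation can't chordalize it --- likely producing an induced long cycle'' fails as stated: an induced cycle $C$ of $G$ through $uv$ generally does \emph{not} survive in $G\oplus S$ (two or more of its vertices may lie in $S$), and the statement is actually false without the hypothesis that $G[S]$ is not complete bipartite, which your proof plan never uses. For instance, take $G=C_4$ and $S$ the two endpoints of one edge: $G\oplus S = P_4$ is chordal, yet $S$ is not a vertex cover. The paper uses the hypothesis precisely here: if $C$ does not survive, then $|V(C)\cap S|>1$ and $|S|>2$; taking a maximal subpath $P$ of $C$ inside $S$ with endpoints $x,y$, and an edge $ab$ of $G'$ with $a\in S\cap A$, $b\in S\cap B$ (which exists because $G[S]$ is not complete bipartite), one reroutes $C$ as $C_{xy}\cdot xy$ or $C_{xy}\cdot xaby$ to obtain a cycle of $G'$ through $uv$; choosing a shortest such cycle, chordality (via Proposition~\ref{prop:cycleimpliestriangle}) forces it to be a triangle, and since $u,v\notin S$ all its edges already lie in $G$, contradicting bipartiteness. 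Without this construction --- or some substitute that exploits the non-complete-bipartite hypothesis --- your proof of (i) does not go through.
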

\begin{proof}
As $G$ is 2-connected, we have $|V(G)|>2$.
We begin by showing the only if part of the statement.
Thus, let us assume that $G'$ is chordal.
 Suppose by contradiction that $(i)$ is false, that  is, $S$ is not a vertex cover in $G$. 
Then, there exists an edge $uv$ in $G-S$.
As $G$ is 2-connected and $|V(G)|>2$, there exists an induced cycle $C$ in $G$
that contains $uv$.
Suppose for a moment that $C$ is also a cycle in $G'$.
As $G'$ is chordal, either $|C|=3$ or $C$ has a chord, but in the latter case, by Proposition~\ref{prop:cycleimpliestriangle}, there exists a chord of $C$
incident to either $u$ or $v$.
 As $u,v \notin S$, this edge also exists in $G$, a contradiction to the fact that $C$ is an induced cycle in $G$.
Thus, $|C|=3$, a contradiction to the fact that $G$ is bipartite. 

Hence $C$ is not a cycle in $G'$, which implies that $|V(C)  \cap S| > 1$.
Since $G[S]$ is not a complete bipartite graph, it follows that $|S| \neq 2$,
as a graph on two vertices with an edge is a complete bipartite graph.
Consequently, $|S|>2$.

Let $P$ be a maximal subpath of $C$ whose vertices are in $S$, and let $x$ and $y$ be the endpoints of $P$. Let $C_{xy}$ be the subpath of $C$ from $x$ to $y$ avoiding edges of $P$.
As $G[S]$ is not a complete bipartite graph, and $|S|>2$, there exists an edge $ab$ in $G'$ with $a,b \in S$, $a\in A$,
and $b \in B$.
(Note that it can be the case that either $x=a$ or $y=b$).
Now, if $x$ and $y$ are both in the same side of the bipartition, then
$C_{xy} \cdot xy$ is a cycle in $G'$. Otherwise, assuming
$x \in A$ and $y \in B$, we have $C_{xy} \cdot xaby$
is a cycle in $G'$.
In each case we concluded there exists a cycle, say $C'$, in $G'$.
Moreover, let us assume that $|C'|$ is minimum over all such cycles. 
As $G'$ is chordal, this implies that $|C'|=3$.
As $u,v \notin S$, all incident edges to $u$ and $v$
in $G'$ also exist in $G$, and thus $C'$ is also a cycle in $G$, a contradiction to the fact that $G$ is bipartite. This proves $(i)$.
Now, suppose by contradiction that $(ii)$ does not hold, that is, $G[S]$ has an induced $2K_2$, say $ab, a'b'$, with $a,a' \in A$ and $b,b' \in B$.
But then the vertices $a,b,a',b'$ induce a $C_4$ in $G'$, contradicting
that $G'$ is chordal.

We now show the if part of the statement.
Suppose for a contradiction that conditions $(i)$ and $(ii)$ hold, but $G'$ is not chordal.
As $G'$ is not chordal, it has an induced cycle $C$ with $|C| \geq 4$.
We claim that all vertices of $C$ are in $S$.
Indeed, suppose for a moment that there exists a vertex in $C$ not in $S$, say $u$.
Let $u'$ and $u''$ be the neighbors of $u$ in $C$.
As $S$ is a vertex cover, $u',u'' \in S$.
But then $u'u''$ is a chord of $C$, contradicting the assumption that $C$
is an induced cycle.
Now, suppose for a moment that $|C| \geq 5$.
By the pigeonhole principle, at least three vertices of $C$
are in one side of the bipartition, say $A$.
As $|C|>3$, two of these vertices are not adjacent in $C$, so they form a chord in $C$, a contradiction to the fact that $C$ is induced.
Hence, $|C| = 4$. But in that case 
$G'[S]$ contains an induced $C_4$ and,
as $G[S] = \overline{G'[S]}$,
$G[S]$ contains an induced $2K_2$, a contradiction.
\end{proof}

Before focusing on bipartite regular graphs,
we present another lemma that is also valid for any 2-connected bipartite graph.

\begin{lemma}\label{lemma:SsubseteqAthenS=A}
Let $G=(A,B,E(G))$ be a 2-connected bipartite graph. Let $S \subseteq A$.
If $G'=G \oplus S$ is chordal then $S=A$.
\end{lemma}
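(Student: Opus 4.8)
The plan is to argue by contradiction: assume $G' = G \oplus S$ is chordal but $S \neq A$, and produce an induced cycle of length at least four in $G'$. The starting observation is that, since $S \subseteq A$ and $A$ is independent in $G$, the set $S$ induces no edges, so complementing it turns $S$ into a clique while leaving untouched every adjacency having at least one endpoint outside $S$. Two consequences drive the whole argument. First, every edge of $G'$ that is not already an edge of $G$ has both endpoints in $S \subseteq A$. Second, fixing a vertex $u \in A \setminus S$ (which exists because $S \subsetneq A$), we have $u \notin S$ and hence $N_{G'}(u) = N_G(u) \subseteq B$, i.e.\ $u$ acquires no new neighbours. I deliberately avoid invoking Lemma~\ref{lemma:biparte-charact}: because $S \subseteq A$ makes $G[S]$ edgeless, $G[S]$ is a (degenerate) complete bipartite graph, so that lemma's hypothesis fails — which is precisely the gap that the present lemma is meant to fill.

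First I would produce a convenient witness cycle in $G$. Pick any neighbour $b \in B$ of $u$; it exists since $G$ is $2$-connected and thus $\deg_G(u) \ge 2$. Exactly as in the proof of Lemma~\ref{lemma:biparte-charact}, the $2$-connectivity of $G$ (with $|V(G)| > 2$) guarantees an induced cycle $C$ of $G$ containing the edge $ub$. Being a cycle of a bipartite graph, $C$ has even length $|C| \ge 4$ and alternates between $A$ and $B$. Let $b$ and $b'$ be the two neighbours of $u$ on $C$; both lie in $B$, they are distinct (as $|C| \ge 4$), and $bb' \notin E(G)$. Since $b, b' \in B$ are never in $S$, the pair $bb'$ is also a non-edge of $G'$.

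The key step is a rerouting inside $C$. Consider the induced subgraph $G'[V(C)]$. Because $u \notin S$ and $C$ is induced in $G$, the neighbours of $u$ inside $G'[V(C)]$ are exactly $b$ and $b'$; every edge of $G'[V(C)]$ that is not an edge of $G$ joins two vertices of $S \cap V(C) \subseteq A$, none of them $u$. Now let $P$ be a shortest path from $b$ to $b'$ in the graph $G'[V(C)] - u$; such a path exists, since the arc $C - u$ already connects $b$ to $b'$ there. As a shortest path $P$ is chordless, and since $u$ is adjacent in $G'[V(C)]$ only to the two endpoints $b, b'$ of $P$, the vertex set $\{u\} \cup V(P)$ induces a cycle in $G'$. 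Its length is $\mathrm{len}(P) + 2$; because $b \ne b'$ and $bb' \notin E(G')$ we have $\mathrm{len}(P) \ge 2$, so this induced cycle has length at least four, contradicting the chordality of $G'$. Hence $S = A$.

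The main obstacle, and the only place requiring care, is this rerouting step: when several vertices of $A \cap V(C)$ lie in $S$, the cycle $C$ itself may acquire chords in $G'$ and is no longer induced, so one cannot simply reuse $C$. The fix is to exploit that the single vertex $u$ is immune to these new chords — its two cycle-partners $b, b'$ stay non-adjacent — and to close a genuine induced cycle at $u$ through a shortest path from $b$ to $b'$. Verifying that $\{u\} \cup V(P)$ is induced (no new chord can touch $u$, and $P$ is chordless) and that its length is at least four (guaranteed by $bb' \notin E(G')$) is the crux; everything else is bookkeeping, and the argument needs no case distinction on $|S|$.
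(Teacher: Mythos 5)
Your proof is correct, but it closes the argument along a genuinely different route than the paper. Both proofs share the same opening: pick $u \in A \setminus S$, take an induced cycle $C$ of $G$ through $u$, and observe that every edge of $C$ has an endpoint in $B$, which is disjoint from $S$, so $C$ survives in $G' = G \oplus S$. From there the paper invokes Proposition~\ref{prop:cycleimpliestriangle} on the surviving cycle: chordality of $G'$ yields a triangle $uxy$ in $G'$ on an edge $ux$ of $C$, and since $u \notin S$ and $x \in B$ forces $x \notin S$, the edges $uy$ and $xy$ are unchanged by the complementation, so the triangle pulls back to $G$ and contradicts bipartiteness of $G$. You instead never leave $G'$: you reroute between the two cycle-neighbours $b,b'$ of $u$ via a shortest (hence induced) path $P$ in $G'[V(C)] - u$, note that $u$'s adjacencies are untouched so its only neighbours in $V(C)$ remain $b$ and $b'$ with $bb' \notin E(G')$, and close $\{u\} \cup V(P)$ into an induced cycle of length at least four in $G'$, contradicting chordality directly. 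Each step of your rerouting checks out: the arc $C - u$ survives in $G'$ (so $P$ exists), shortest paths are chordless, and $bb' \notin E(G')$ forces the cycle length to be at least four. The trade-off: the paper's argument is shorter because Proposition~\ref{prop:cycleimpliestriangle} applies to arbitrary (not necessarily induced) cycles, so it need not worry that $C$ acquires chords in $G'$; your version is self-contained, explicitly confronting and neutralizing those potential chords, and your observation that Lemma~\ref{lemma:biparte-charact} is inapplicable here (since $G[S]$ edgeless is a degenerate complete bipartite graph) correctly identifies why this lemma needs its own proof.
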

\begin{proof}
    Suppose by contradiction that 
    $S \neq  A$.
    Let $u \in A \setminus S$.
    As $G$ is 2-connected, there exists an induced cycle in $G$, say $C$, that contains $u$.
    Consider any edge $vw$ in $C$.
    As $G$ is bipartite and $S \subseteq A$, we have that at least one of $v$ or $w$ is not in $S$. Hence, $vw \in E(G')$ and $C$ is a cycle in $G'$.
    Let $ux \in E(C)$.
    By Proposition \ref{prop:cycleimpliestriangle}, as $G'$ is chordal, there exists a triangle in $G'$ that contains $ux$, say $uxy$.
    Recall that $u \notin S$.
    Also, since $G$ is bipartite and $S \subseteq A$, we have $x \notin S$. Hence, as $xy,uy \in E(G')$, we must have $xy,uy \in E(G)$, a contradiction to the fact that $G$ is bipartite.
\end{proof}

$~$
A \textit{biregular graph} is a graph that is both bipartite and regular.
In this section, we present a polynomial-time algorithm for $k$-biregular graphs, that is, biregular graphs in which all vertices have degree~$k$.
We begin by stating a well-known property of biregular graphs, which is a direct corollary of Hall’s Theorem (see, e.g.,~\cite{LovaszPlummer1986}).

\begin{proposition}\label{prop:regularbip-perfectmatch}
Every biregular graph has a perfect matching.
\end{proposition}

From now on, given a graph $G$, we refer to as a subset $S \subseteq V(G)$
as a \emph{solution} if $G \oplus S$ is chordal.
The next proposition is valid for any triangle-free graph.

\begin{proposition}\label{prop:minimality-kreg}
Let $G$ be a triangle-free graph.
Let $S$ be a minimum solution.
If $u \in S$, then $N(u) \cap \overline{S} \neq \emptyset$.
\end{proposition}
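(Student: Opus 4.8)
The plan is to argue by contradiction using the minimality of $S$. Suppose $u \in S$ but $N(u) \cap \overline{S} = \emptyset$, i.e. $N(u) \subseteq S$. I would then set $S' := S \setminus \{u\}$ and aim to show that $G \oplus S'$ is again chordal; since $|S'| < |S|$, this would contradict the minimality of $S$ and establish the proposition. So the whole task reduces to verifying that dropping $u$ from $S$ cannot destroy chordality.

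The first step is to record precisely how $G' := G \oplus S$ and $G'' := G \oplus S'$ differ. Since $S$ and $S'$ agree except at $u$, the two graphs are identical on every pair of vertices not incident to $u$: adjacencies inside $S' = S \setminus \{u\}$ are toggled in both, adjacencies outside $S$ are untouched in both, and only edges touching $u$ can change. In particular, in $G''$ the vertex $u$ lies outside $S'$, so its neighborhood is unchanged from $G$, namely $N_{G''}(u) = N(u)$, and by assumption $N(u) \subseteq S'$.

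Next I would show $G''$ has no induced cycle of length at least four. Take such a cycle $C$. If $C$ avoids $u$, then since $G'$ and $G''$ coincide away from $u$, the cycle $C$ is also induced in $G'$, contradicting that $G'$ is chordal. Otherwise $u \in V(C)$, and $u$ has exactly two neighbors $a,b$ along $C$; these lie in $N_{G''}(u) = N(u) \subseteq S'$. Here triangle-freeness of $G$ is the crucial ingredient: since $a,b \in N(u)$ they are non-adjacent in $G$, and because both lie in $S'$ the complementation inside $S'$ turns this non-edge into an edge, so $ab \in E(G'')$. As $C$ has length at least four, $a$ and $b$ are non-consecutive on $C$, so $ab$ is a chord, contradicting that $C$ is induced.

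The argument is short, and the only delicate point---which I would flag as the conceptual heart rather than a technical obstacle---is the case split on whether the hypothetical long induced cycle passes through $u$. Both branches must be closed: the branch avoiding $u$ transfers the cycle back to $G'$, while the branch through $u$ is exactly where triangle-freeness is invoked to force the chord $ab$. Once induced cycles of length at least four are ruled out, $G''$ is chordal, so $S'$ is a strictly smaller solution, yielding the desired contradiction.
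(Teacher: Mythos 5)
Your proof is correct and takes essentially the same route as the paper's: both arguments pass to $S' = S\setminus\{u\}$, observe that $G\oplus S$ and $G\oplus S'$ agree on all pairs not involving $u$, conclude that any induced cycle $C$ of length at least four in $G\oplus S'$ must pass through $u$, and then use triangle-freeness together with $N(u)\subseteq S'$ to eliminate that cycle. The only cosmetic difference is the direction of the final contradiction --- the paper reads the non-edge between $u$'s cycle-neighbors off the induced cycle and deduces a triangle in $G$, whereas you read the non-edge off triangle-freeness of $G$ and deduce a chord of $C$; these are contrapositive phrasings of the identical step.
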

\begin{proof}
Suppose by contradiction that all neighbors of $u$ belong to $S$.
Let $G' = G \oplus S$, which is chordal by assumption.
By the minimality of $S$, the graph $G'' = G \oplus (S \setminus \{u\})$ is not chordal.
Hence, $G''$ contains an induced cycle $C$ of length at least four.
Since the only difference between $G'$ and $G''$ concerns adjacencies incident to $u$,
the cycle $C$ must contain $u$.
Let $x$ and $y$ be the neighbors of $u$ in $C$.
Then $x,y \in S \setminus \{u\}$ and $xy \notin E(G'')$.
Therefore, $xy \in E(G)$, which implies that $G$ contains the triangle $uxy$,
a contradiction to the fact that $G$ is triangle-free.
\end{proof}

\begin{lemma}\label{lemma:bipkreg}
Let~$G$ be a~$k$-biregular graph with~$k\geq 2$ and at least $9$ vertices.
Let $(A,B)$ be a bipartition of $G$. Then~$|S|=|A|$, where $S$ is a solution of minimum size.
\end{lemma}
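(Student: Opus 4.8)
The plan is to sandwich the size of a minimum solution between $|A|$ from above and below. First I would record the basic parameters: since $G$ is $k$-regular and bipartite, counting edges from each side gives $k|A|=k|B|$, so $|A|=|B|=:n$, and the hypothesis that $G$ has at least $9$ vertices forces $n\ge 5$. By Proposition~\ref{prop:regularbip-perfectmatch}, $G$ has a perfect matching of size $n$, so K\"onig's theorem gives that its minimum vertex cover number is exactly $n$. I would also note that a connected biregular graph is automatically $2$-connected: a short counting argument shows that a cut vertex $v$ would force, in some component $C_i$ of $G-v$, the number $t_i$ of neighbors of $v$ in $C_i$ to equal $k(|B_i|-|A_i|)$ with $1\le t_i\le k-1$, which is impossible for integers. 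This lets me apply Lemmas~\ref{lemma:biparte-charact} and~\ref{lemma:SsubseteqAthenS=A}, with the disconnected case handled component by component. For the upper bound I take $S=A$: then $G\oplus A$ is the split graph consisting of the clique on $A$, the independent set $B$, and the original $A$–$B$ edges, which is chordal; hence $A$ is a solution of size $n$ and the optimum is at most $n$.

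For the lower bound I would split on the structure of $G[S]$ for an arbitrary solution $S$. If $G[S]$ is not complete bipartite, then by Lemma~\ref{lemma:biparte-charact} (applicable since $G$ is $2$-connected) the set $S$ is a vertex cover of $G$, so $|S|\ge n$ at once. If instead $S\subseteq A$ or $S\subseteq B$, then Lemma~\ref{lemma:SsubseteqAthenS=A} forces $S=A$ (respectively $S=B$), again of size $n$. Together with the upper bound, this already pins every such solution at size exactly $n$.

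The remaining and genuinely hard case is when $G[S]$ is a complete bipartite graph $K_{p,q}$ with both sides $S_A=S\cap A$ and $S_B=S\cap B$ nonempty. Here the vertex-cover argument breaks down, since after complementation $S_A$ and $S_B$ become two disjoint cliques with all $S_A$–$S_B$ edges deleted, and such an $S$ need not cover $G$. Invoking Proposition~\ref{prop:minimality-kreg}, minimality of $S$ guarantees that each vertex of $S$ retains a neighbor outside $S$; since every vertex of $S_A$ is adjacent in $G$ to all of $S_B$, this yields $p,q\le k-1$ and hence $|S|\le 2k-2$. The main obstacle is to rule out that such a set is a solution while being smaller than $n$, and I expect this to be where essentially all of the difficulty lies. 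The plan is to exhibit an induced cycle of length at least four in $G\oplus S$: the abundance of edges guaranteed by $k$-regularity, together with the order bound, should force that after deleting only the $K_{p,q}$ edges and inserting the two cliques there survives an induced cycle passing through outside-neighbors of $S_A$ and $S_B$. Choosing the correct four (or more) vertices and checking that all the required adjacencies and non-adjacencies persist through the complementation is the delicate step, and it is precisely here that the degree and order hypotheses must be used.
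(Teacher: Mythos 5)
Your easy cases are fine and match the paper: the upper bound via $S=A$ (split, hence chordal), the non-complete-bipartite case via Lemma~\ref{lemma:biparte-charact} plus the perfect matching (Proposition~\ref{prop:regularbip-perfectmatch}) and K\H{o}nig, and the one-sided case via Lemma~\ref{lemma:SsubseteqAthenS=A}. Your counting argument that a connected $k$-biregular graph with $k\ge 2$ is $2$-connected is correct and is actually more careful than the paper, which simply asserts this. But the case where $G[S]$ is complete bipartite meeting both sides is the heart of the lemma, and there you have only a plan, not a proof: ``exhibit an induced cycle of length at least four in $G\oplus S$'' is left entirely open, and you say yourself that this is where all the difficulty lies. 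That is a genuine gap, not a deferred routine verification. Moreover, the plan as stated is underpowered: your bound $p,q\le k-1$ (which is the paper's inequality $k\ge r_2+1$, obtained the same way from Proposition~\ref{prop:minimality-kreg}) gives $|S|\le 2k-2$, but when $k$ is large this does not conflict with $|S|<n$, and ``abundance of edges'' alone does not produce an induced $C_{\ge 4}$ --- indeed minimum solutions with $G[S]$ complete bipartite and both sides nonempty do exist (the paper's final subcase $r_1=1$, $a=r_2+1$ is realized, e.g., inside $K_{k,k}$), so any cycle-exhibiting argument must be delicately conditioned on $|S|<|A|$ and cannot succeed by genericity.

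What the paper does instead is a global counting argument rather than a local cycle hunt. Writing $r_1=|A\cap S|\le r_2=|B\cap S|$, it first observes that $H=G-S$ is untouched by the complementation, so chordality of $G\oplus S$ together with bipartiteness of $G$ forces $H$ to be a forest, giving $|E(H)|\le 2a-r_1-r_2-1$; a minimum-degree argument in $H$ then gives $k\le r_2+1$, which combined with your direction yields $k=r_2+1$ exactly (you only have one inequality). Counting the edges in the two bipartite pieces between $S$ and $\overline S$ and subtracting twice the forest bound produces $(r_1-1)(r_2-1)\le 0$, hence $r_1\le 1$; a second edge count then gives $a\le (r_2+1)+\tfrac{1}{r_2-1}$, and the hypothesis $|V(G)|\ge 9$ kills $r_2\le 2$, after which the subcases $r_1=1$ and $r_1=0$ either force $|S|=a$ directly or contradict Lemma~\ref{lemma:SsubseteqAthenS=A}. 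Your forest observation is implicit in ``after deleting only the $K_{p,q}$ edges there survives an induced cycle,'' but turning that intuition into the chain $k=r_2+1 \Rightarrow r_1\le 1 \Rightarrow a\le r_2+1+\tfrac{1}{r_2-1}$ is precisely the missing content; without it the lemma is not proved.
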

\begin{proof}

 We begin by considering the case when~$G[S]$ is not a complete bipartite graph.
As~$k\geq 2$,~$G$ is~$2$-connected. Hence, by Lemma \ref{lemma:biparte-charact}, as~$G[S]$ is not complete,   ~$S$ is a vertex cover of~$G$. By Proposition \ref{prop:regularbip-perfectmatch},~$G$ has a perfect matching. Note that this matching has size~$|A|$ and, by Kőnig's Theorem \cite{Konig1916}, the size of a minimum vertex cover is also~$|A|$. Hence~$|S| \geq |A|$. Now, note that~$G \oplus A$ is a split graph, and thus also a chordal graph \cite{Golumbic1980}.
    By the minimality of~$S$, we have~$|S| \leq |A|$. This completes the proof.

Now consider the case when~$G[S]$ is a complete bipartite graph.
Let~$a=|A|$,~$r_1 = |A \cap S|$ and~$r_2 = |B \cap S|$,
and suppose without loss of generality that~$r_2 \geq r_1$.
As~$G \oplus A$ is a split graph and thus chordal,
we must have 
\begin{equation}\label{r_1+r_2leqa}
  |S|= r_1+r_2 \leq a.  
\end{equation}

Now, consider the subgraph induced by~$\overline{S}$.
Let us call it~$H$.
Note that~$H$ cannot contain any induced cycle, hence

\begin{equation}\label{EHleq2a-r_1-r_2-1}
|E(H)| \leq |V(H)|-1 = 2a-r_1-r_2-1.
\end{equation}

Also, if~$k\geq r_2+2$, then any vertex in~$H$ has degree at least 2 and contains a cycle.
Thus,~$k \leq r_2+1$, and by Proposition \ref{prop:minimality-kreg},

\begin{equation}\label{k=r_2+1}
k=r_2+1.
\end{equation}

Now, consider the subgraph induced by~$A\setminus S$ and~$S \cap B$.
Let us call it~$Q_1$.
Note that, by \eqref{k=r_2+1},
\begin{equation}\label{EQ1=r_2(r_2-r_1+1)}
|E(Q_1)| = r_2(r_2-r_1+1).
\end{equation}

Analogously, if~$Q_2$ is the subgraph induced by~$S\cap A$ and~$B \setminus S$, we have
\begin{equation}\label{EQ2=r_1(k-r_2)=r_1}
|E(Q_2)|=r_1(k-r_2)=r_1.
\end{equation}

Note that~$|E(H \cup Q_1)|=k(a-r_1) = (r_2+1)(a-r_1)$
and, analogously,~$|E(H \cup Q_2)|=k(a-r_2) = (r_2+1)(a-r_2)$.
Hence, by \eqref{EHleq2a-r_1-r_2-1},
\begin{eqnarray}    
|E(Q_1)|+ |E(Q_2)| &=& |E(H \cup Q_1)|+|E(H \cup Q_2)|-2|E(H)| \nonumber \\
&\geq& (r_2+1)(a-r_1) + (r_2+1)(a-r_2) -2(2a-r_1-r_2-1) \nonumber \\
&=& (r_2+1)(2a-r_1-r_2)-2(2a-r_1-r_2)+2 \nonumber \\
&\geq& (r_2-1)(2a-r_1-r_2)+2  \nonumber \\
&\geq& (r_2-1)(r_1+r_2)+2, \label{EQ1+EQ2geq(r_2-1)(r_1+r_2)+2}
\end{eqnarray}
where the last inequality follows by \eqref{r_1+r_2leqa}.

Now, joining \eqref{EQ1=r_2(r_2-r_1+1)}, \eqref{EQ2=r_1(k-r_2)=r_1} and \eqref{EQ1+EQ2geq(r_2-1)(r_1+r_2)+2}, we obtain that 

\begin{equation}
    (r_1-1)(r_2-1) \leq 0.
\end{equation}

If~$r_1-1 \geq 1$, then~$r_2\leq 1$, a contradiction, because~$r_1 \leq r_2$. Hence,~$r_1\leq 1$.
Thus, by \eqref{EHleq2a-r_1-r_2-1},
$$|E(H)| \leq 2a-r_2-1.$$
Now, as~$r_1 \leq 1$, at most one vertex in~$G[H \cup B]$ has degree~$r_2$.
Thus,~$|E(H)| \geq (r_2+1)(a-r_2)-1$, which implies that

\begin{equation}\label{eq:aleq(r_2+1)+frac1r_2-1}
a \leq (r_2+1) + \frac{1}{r_2-1}.    
\end{equation}

Suppose for a moment that~$r_2 \leq 2$.
In this case, by \eqref{eq:aleq(r_2+1)+frac1r_2-1}
$a\leq 4$ and~$|V(G)| \leq 8$, a contradiction.
Consider now the case when~$r_2 > 2$. 
This implies that~$a\leq r_2+1$ by \eqref{eq:aleq(r_2+1)+frac1r_2-1}.
If~$r_1=1$
then~$|S|=r_1+r_2=1+r_2=a=|A|$, and the proof follows.
And, if~$r_1=0$ then~$r_2 \leq a \leq r_2+1$
If~$a=r_2$, then~$|S|=r_1+r_2=r_2=a=|A|$
and the proof follows.
Hence,~$r_1=0$ and~$a=r_2+1$,
but this is a contradiction to Lemma \ref{lemma:SsubseteqAthenS=A}.
\end{proof}

We are now ready to prove our main theorem.
    
\begin{theorem}
\label{thm:biregular-chordal}
Let $\mathcal{C}$ be the class of chordal graphs.  
Then MSC to $\mathcal{C}$ can be solved in polynomial time if the input is a 2-connected biregular graph.
\end{theorem}
\begin{proof}
    Let~$G$ be a 2-connected~$k$-biregular graph. 
    As $G$ is 2-connected, this implies~$k\geq 2$.
    If~$G$ has fewer than 9 vertices, then we iterate over all subsets of~$V(G)$ using brute force. Otherwise, by Lemma \ref{lemma:bipkreg}, we can find a bipartition for~$G$ in polynomial time and return one of the parts of the bipartition.
\end{proof}

\section{Forests to graphs of degeneracy $k$}
\label{section:forest-deg-k}
Let $\mathcal{F}_k$ denote the class of forests with at least $2k+2$ vertices, for $k \geq 0$.
A graph $G$ is \emph{$k$-degenerate} if every induced subgraph of $G$
has a vertex of degree at most $k$.
The \emph{degeneracy} of $G$ is the minimum integer $k$ for which $G$ is $k$-degenerate.
Let $\mathcal{D}_k$ denote the class of graphs of degeneracy exactly $k$, for $k \geq 0$.
We show that every graph in $\mathcal{F}_k$ can be complemented into $\mathcal{D}_k$,
and moreover, that MSC to $\mathcal{D}_k$ can be solved in polynomial time
when the input graph belongs to $\mathcal{F}_k$.
The following characterization is immediate from the definition of degeneracy.

\begin{proposition} \label{prop:GinmathcalD_kthenaandb}
A graph $G$ belongs to $\mathcal{D}_k$ if and only if
\begin{itemize}
    \item[(a)] every induced subgraph of $G$ has a vertex of degree at most $k$, and
    \item[(b)] there exists an induced subgraph of $G$ in which every vertex has degree at least $k$.
\end{itemize}
\end{proposition}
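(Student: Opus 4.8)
The plan is simply to unfold the definition of degeneracy. By definition, $\mathcal{D}_k$ consists of the graphs whose degeneracy equals $k$, and the degeneracy of $G$ is the minimum $k'$ for which $G$ is $k'$-degenerate. So I would first rephrase ``$G \in \mathcal{D}_k$'' as the conjunction that $G$ is $k$-degenerate but $G$ is not $(k-1)$-degenerate, which is exactly what it means for the minimum such $k'$ to be $k$. The task then reduces to matching these two conjuncts against conditions (a) and (b).

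The first conjunct is immediate: by the very definition of $k$-degeneracy, $G$ is $k$-degenerate if and only if every induced subgraph of $G$ has a vertex of degree at most $k$, which is condition (a) verbatim. For the second conjunct I would negate the definition: $G$ is $(k-1)$-degenerate iff every induced subgraph has a vertex of degree at most $k-1$, so $G$ fails to be $(k-1)$-degenerate iff some induced subgraph has no vertex of degree at most $k-1$, i.e.\ some induced subgraph has all vertices of degree at least $k$. This is precisely condition (b), and combining the two conjuncts delivers both directions of the equivalence at once.

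The only point requiring separate care, and hence the main (if minor) obstacle, is the boundary case $k=0$, where ``$(k-1)$-degeneracy'' is undefined. Here I would argue directly that (a) alone forces $G \in \mathcal{D}_0$: condition (a) makes $G$ $0$-degenerate, so its degeneracy is at most $0$, and since degeneracy is a nonnegative integer it equals $0$. Conversely $G \in \mathcal{D}_0$ yields (a) immediately, while (b) holds trivially, since any single-vertex induced subgraph has all its vertices of degree at least $0$. Thus the equivalence holds in this case as well, and no work beyond a careful reading of the definitions is required.
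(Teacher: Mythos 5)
Your proof is correct and matches the paper exactly in spirit: the paper gives no argument at all, stating only that the characterization ``is immediate from the definition of degeneracy,'' and your unfolding of $G \in \mathcal{D}_k$ into ``$k$-degenerate but not $(k-1)$-degenerate'' is precisely that immediate argument. Your separate treatment of the boundary case $k=0$ is a careful touch the paper leaves implicit, but nothing more needs to be said.
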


From now on, given a graph $G$, we refer to a subset $S \subseteq V(G)$
as a \emph{solution} if $G \oplus S \in \mathcal{D}_k$.

\begin{proposition} \label{prop:Sgeqk}
Let $k \geq 2$.
Let $G$ be a forest. If $S$ is a solution, then $|S| \geq k$.
\end{proposition}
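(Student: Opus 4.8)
The plan is to exploit condition (b) of Proposition~\ref{prop:GinmathcalD_kthenaandb}: since $G \oplus S \in \mathcal{D}_k$, there is a nonempty induced subgraph $H$ of $G' := G \oplus S$ in which every vertex has degree at least $k$. Write $T := V(H)$ and split $T = T_{\mathrm{in}} \cup T_{\mathrm{out}}$, where $T_{\mathrm{in}} := T \cap S$ and $T_{\mathrm{out}} := T \setminus S$. The crucial observation is that complementation only toggles pairs of vertices both lying in $S$, so every vertex $v \in T_{\mathrm{out}}$ satisfies $N_{G'}(v) = N_G(v)$; restricting to $T$, this gives $\deg_H(v) = \deg_{G[T]}(v) \geq k$. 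Since $G[T]$ is a forest, the vertices of $T_{\mathrm{out}}$ are ``high-degree'' vertices of a forest, which is exactly the tension I would use to force $T_{\mathrm{in}}$, and hence $S$, to be large.

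First I would dispose of the degenerate situations. If $T_{\mathrm{out}} = \varnothing$ then $T \subseteq S$, and since every vertex of $H$ has degree at least $k$ we have $|T| \geq k+1$, so $|S| \geq |T| \geq k+1 > k$. For the boundary value $k=2$ there is a one-line argument independent of $H$: if $|S| \leq 1$ then $G \oplus S = G$ is a forest and has degeneracy at most $1$, contradicting membership in $\mathcal{D}_2$; hence $|S| \geq 2 = k$. These two reductions let me assume $k \geq 3$ and $T_{\mathrm{out}} \neq \varnothing$ in the main case.

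The core of the argument is an edge count inside the forest $F := G[T]$. Let $a := |E(G[T_{\mathrm{out}}])|$ be the number of edges of $F$ with both endpoints in $T_{\mathrm{out}}$, and let $b$ be the number of edges of $F$ between $T_{\mathrm{out}}$ and $T_{\mathrm{in}}$. Summing $\deg_{G[T]}(v) \geq k$ over $v \in T_{\mathrm{out}}$ and counting each internal edge twice gives $2a + b \geq k\,|T_{\mathrm{out}}|$. On the other hand, the edges counted by $a$ and $b$ are precisely the edges of $F$ incident to $T_{\mathrm{out}}$; they form a sub-forest of $F$ on the vertex set $T$, so $a + b \leq |T|-1 = |T_{\mathrm{in}}| + |T_{\mathrm{out}}| - 1$, while the forest $G[T_{\mathrm{out}}]$ gives $a \leq |T_{\mathrm{out}}| - 1$. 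Eliminating $a$ and $b$ from these three inequalities yields
\[
|T_{\mathrm{in}}| \;\geq\; (k-3)\,|T_{\mathrm{out}}| + 3,
\]
and since $k \geq 3$ and $|T_{\mathrm{out}}| \geq 1$ the right-hand side is at least $k$ (because $(k-3)(|T_{\mathrm{out}}|-1) \geq 0$). Therefore $|S| \geq |T_{\mathrm{in}}| \geq k$, completing the main case.

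I expect the main obstacle to be the off-by-one loss: the naive route—picking a leaf or isolated vertex $v$ of the forest $G[T_{\mathrm{out}}]$ and noting that its at least $k-1$ remaining neighbors must lie in $T_{\mathrm{in}}$—only yields $|S| \geq k-1$. The global edge count above is precisely what recovers the missing unit, and the separate, trivial treatment of $k=2$ is needed because that bound degrades once $|T_{\mathrm{out}}| \geq 2$. I would carefully verify the three forest inequalities (the only nonroutine step) and check the extremal case $k=3$, where a star in $G$ complemented into a $K_4$ shows that the bound $|S| \geq k$ is tight.
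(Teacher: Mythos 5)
Your proof is correct, but it takes a genuinely different route from the paper's. The paper argues locally and uniformly for all $k \geq 2$: it sets $H := G[V(H')]$, takes a vertex $v$ of degree at most one in this forest, and observes that since $\deg_{H'}(v) \geq k$ and at most one of these adjacencies can come from $G$, the vertex $v$ must \emph{gain} at least $k-1$ neighbors under complementation --- which forces $v \in S$ together with at least $k-1$ further vertices of $V(H')$, giving $|S| \geq 1 + (k-1) = k$ in a single stroke, with no case distinction. This is exactly the fix for the off-by-one loss you flag: the trap you describe (only $|S| \geq k-1$) arises when the leaf is chosen in $G[T_{\mathrm{out}}]$, but a leaf of the \emph{whole} forest $G[T]$ cannot lie outside $S$ at all, since a vertex outside $S$ keeps its degree, and degree at most $1 < k$ contradicts Proposition~\ref{prop:GinmathcalD_kthenaandb}(b); so the missing unit is recovered for free, without your separate treatments of $k=2$ and $T_{\mathrm{out}} = \varnothing$. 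Your global edge count is valid --- the three forest inequalities $2a+b \geq k|T_{\mathrm{out}}|$, $a+b \leq |T|-1$, $a \leq |T_{\mathrm{out}}|-1$ all check out --- though your elimination is slightly loose: the tight consequence is $|T_{\mathrm{in}}| \geq (k-2)|T_{\mathrm{out}}| + 2$, of which your $(k-3)|T_{\mathrm{out}}|+3$ is a weaker (but still sufficient) corollary since $|T_{\mathrm{out}}| \geq 1$. What your approach buys in exchange for its extra length is quantitative structure the paper's proof does not give: the bound $|T_{\mathrm{in}}| \geq (k-2)|T_{\mathrm{out}}|+2$ shows that for $k \geq 3$ the conclusion $|S| = k$ can be tight only when exactly one vertex of the dense witness subgraph lies outside $S$, which is consistent with (and slightly refines) the sibling-set solutions of Lemma~\ref{lemma:forest-to-degeneracy}.
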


\begin{proof}
As $G \oplus S \in \mathcal{D}_k$, by Proposition~\ref{prop:GinmathcalD_kthenaandb}$(b)$
there exists an induced subgraph $H'$ of $G \oplus S$ such that every vertex of $H'$
has degree at least $k$.
Let $H = G[V(H')]$. Clearly, $H$ is a forest, and hence $H$ has a vertex,
say $v$, of degree at most one in $H$.
Since the degree of $v$ in $H'$ is at least $k$, the vertex $v$ is adjacent
to at least $k$ vertices of $H'$ in $G \oplus S$.
At most one of these adjacencies can come from $H$, and therefore
$v$ must gain at least $k-1$ neighbors via the complementation on $S$.
This is only possible if $v \in S$ and at least $k-1$ other vertices of $H'$
also belong to $S$.
Hence, $|S| \geq k$.
\end{proof}

A set of independent vertices in a graph is called a \emph{sibling set} if they have a common neighbor. 
The vertices in a sibling set are called \emph{siblings}.

\begin{lemma}
\label{lemma:forest-to-degeneracy}
Every graph in $\mathcal{F}_k$ is complementable to $\mathcal{D}_k$ for any fixed $k\geq 2$.
 Let $G \in \mathcal{F}_k$
and $S$ be a minimum solution.
    If $G$ has a sibling set of size $k$,
    then $|S|=k$. Otherwise, $|S|=k+1$.
\end{lemma}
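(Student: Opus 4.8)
The plan is to match the lower bound $|S|\ge k$ of Proposition~\ref{prop:Sgeqk} with two explicit constructions, deciding membership in $\mathcal{D}_k$ through the two conditions of Proposition~\ref{prop:GinmathcalD_kthenaandb}. First I would isolate a counting estimate as the workhorse. If $W$ is an independent set of the forest $G$, then $G\oplus W$ is exactly $G$ together with a clique on $W$; hence for any $U\subseteq V(G)$ the subgraph $(G\oplus W)[U]$ has at most $(|U|-1)+\binom{|U\cap W|}{2}$ edges (forest edges plus clique edges). If every vertex of this subgraph had degree at least $k+1$, double counting would give $(k+1)\,|U|\le 2(|U|-1)+|W|(|W|-1)$. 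For $|W|=k$ this forces $|U|\le\frac{(k-2)(k+1)}{k-1}<k+2$, incompatible with a minimum degree of $k+1$ (which needs $|U|\ge k+2$); for $|W|=k+1$ it forces $|U|\le k+2$, with equality only if the subgraph is $K_{k+2}$, and an induced $K_{k+2}$ here is exactly a vertex of $G$ adjacent (in $G$) to a $(k+1)$-subset of $W$, i.e.\ a sibling set of size $k+1$. Thus $G\oplus W$ satisfies condition~$(a)$ (is $k$-degenerate) whenever $|W|=k$, and also whenever $|W|=k+1$ and $G$ has no sibling set of size $k+1$.

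For the first case, suppose $G$ has a sibling set $W$ of size $k$ with common neighbour $c$, and take $S=W$. Since $W$ is independent, $(G\oplus W)[W\cup\{c\}]=K_{k+1}$, so condition~$(b)$ holds and $G\oplus W$ is not $(k-1)$-degenerate; condition~$(a)$ holds by the counting estimate with $|W|=k$. Hence $G\oplus W\in\mathcal{D}_k$ and, together with Proposition~\ref{prop:Sgeqk}, the minimum solution has size exactly $k$. For the second case, assume $G$ has no sibling set of size $k$ (hence none of size $k+1$). For the upper bound I would take $S$ to be any independent set of size $k+1$, which exists because a forest on at least $2k+2$ vertices has independence number at least $k+1$. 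Then $(G\oplus S)[S]=K_{k+1}$ gives condition~$(b)$, while the counting estimate with $|W|=k+1$ (using the absence of a sibling set of size $k+1$) gives condition~$(a)$; thus $G\oplus S\in\mathcal{D}_k$ and the minimum is at most $k+1$. (In particular, combining the two constructions shows every $G\in\mathcal{F}_k$ is complementable to $\mathcal{D}_k$.)

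The main obstacle is the matching lower bound in the second case: I must rule out any solution $S$ with $|S|=k$. Suppose one existed; by condition~$(b)$ there is $U$ with $H':=(G\oplus S)[U]$ of minimum degree at least $k$, and $H:=G[U]$ is a forest. A vertex outside $S$ keeps its forest degree in $H'$, so every vertex of $U\setminus S$ has forest degree at least $k\ge 2$ in $H$; consequently every leaf of $H$ lies in $S$, and a short argument (a vertex isolated in $H$ would have degree $0$ or at most $|U\cap S|-1<k$ in $H'$) rules out isolated vertices. Evaluating the degree of a leaf $v\in S$ of $H$ gives $k\le\deg_{H'}(v)\le |U\cap S|$, forcing $U\cap S=S$ and that $v$'s unique forest neighbour lies outside $S$. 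A degree-sum count on $H$ (the $p:=|U\setminus S|$ outside vertices have forest degree at least $k$, the $k$ vertices of $S$ have forest degree at least $1$, and $2|E(H)|\le 2(|U|-1)$ with $|U|=k+p$) yields $(k-2)(p-1)\le 0$.

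For $k\ge 3$ this gives $p\le 1$; the case $p=0$ contradicts the leaf having a neighbour outside $S$, so $U=S\cup\{c\}$ with $c$ adjacent to all of $S$. As $G$ is a forest, $S$ is then independent, so $S$ is a sibling set of size $k$, a contradiction. The remaining corner is $k=2$, where the count is vacuous: here ``no sibling set of size $2$'' is equivalent to every vertex of $G$ having degree at most $1$, so for $|S|=2$ the graph $G\oplus S$ has maximum degree at most $2$ and at most two vertices of degree $2$, making a minimum-degree-$2$ induced subgraph impossible. This closes the lower bound and gives $|S|=k+1$ in the second case, completing the proof.
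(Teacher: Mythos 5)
Your proposal is correct, and although it shares the paper's overall architecture --- the lower bound from Proposition~\ref{prop:Sgeqk}, the two constructions (a sibling set of size $k$ when one exists, otherwise an independent set of size $k+1$), verification through the two conditions of Proposition~\ref{prop:GinmathcalD_kthenaandb}, and a separate treatment of $k=2$ --- the two central verifications are carried out by genuinely different means. For condition $(a)$, the paper argues structurally: it takes one or two minimum-degree vertices of the forest $(G\oplus S)[X\cap \overline{S}]$ and shows that a vertex of degree at least $k+1$ there would be adjacent in $G$ to all (or all but one) of $S$, producing a cycle (for $|S|=k$) or contradicting the absence of a sibling set (for $|S|=k+1$). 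You instead prove one counting inequality, $(k+1)|U|\le 2(|U|-1)+|W|(|W|-1)$ for any independent $W$, which dispatches both constructions uniformly; your equality analysis for $|W|=k+1$ even isolates the exact obstruction (a sibling set of size $k+1$, a weaker hypothesis than the paper invokes). For the lower bound when $k\ge 3$ and no sibling set of size $k$ exists, the paper again goes structural: two degree-one vertices in the same component of the forest $G[X\setminus R]$ each need at least $k-1$ neighbours in $R$, so by pigeonhole they share one, yielding a cycle in $G$. You instead analyse a leaf of the whole forest $G[U]$, force $S\subseteq U$ with the leaf's unique neighbour outside $S$, and a degree-sum count gives $(k-2)(p-1)\le 0$ for $p=|U\setminus S|$; then $p=1$ makes the extra vertex a common neighbour of $S$, i.e.\ a sibling set --- a contradiction reached by counting rather than cycle-finding, and one that pins down the hypothetical witness completely ($U=S\cup\{c\}$). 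Your route buys a compact, reusable workhorse estimate at the cost of a somewhat delicate equality analysis, while the paper's arguments make more transparent where acyclicity and the sibling hypothesis enter; your $k=2$ case (all components $K_1$ or $K_2$, so $G\oplus S$ with $|S|=2$ has at most two vertices of degree $2$ and condition $(b)$ fails) coincides with the paper's.
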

\begin{proof}
Let $G \in \mathcal{F}_k$.
Suppose for a moment that $G$ has a sibling set of size $k$, say $S$.
We will show that $G'=G \oplus S$ has degeneracy $k$. As every vertex in $G'[S \cup \{u\}]$, where $u$
 is the common neighbor of the sibling set,
has degree $k$, part $(b)$ of Proposition \ref{prop:GinmathcalD_kthenaandb} follows.
Let $X \subseteq V(G')$. We need to show that there exists a vertex in $G'[X]$ with degree at most $k$.
If $X \subseteq S$, then any vertex in $G'[X]$ has degree at most $k$, so part $(a)$ of Proposition \ref{prop:GinmathcalD_kthenaandb} holds.
Otherwise, $X$ has a vertex in $\overline{S}$.

Note that $G'[X \cap \overline{S}]$ is a forest.
If $|X \cap \overline{S}|=1$, then the only vertex in
$X \cap \overline{S}$ has at most $k$ neighbors in $G[X \cap S]$, and therefore also in $G'[X]$ and part (a) of Proposition \ref{prop:GinmathcalD_kthenaandb} follows.  
Hence, we may assume that $|X \cap \overline{S}|>1$, which implies
that $G'[X \cap \overline{S}]$ has two vertices, say $v$ and $w$, of degree at most one.
Suppose by contradiction that none of them has degree at most $k$ in $G'[X]$. In that case, both vertices have degree at least $k+1$. As both vertices have degree at most one in $G'[X \cap \overline{S}]$, it implies that both $v$ and $w$ are adjacent, in $G'$, to all vertices in $S$.
Note that this implies that, in $G$, they are also adjacent to all vertices in $S$, a contradiction to the fact that $G$ is a forest since $k\geq 2$.
Now, by Proposition \ref{prop:Sgeqk}, $S$ is a minimum solution and the proof follows.

Finally, suppose that $G$ does not have a sibling set of size $k$. As $G$ has at least $2k+2$ vertices, it has an independent set, say $S$ of size $k+1$.
We will show that $G'=G \oplus S$ is a graph with degeneracy $k$. As every vertex in $G'[S]$ has degree $k$, then part $(b)$ of Proposition \ref{prop:GinmathcalD_kthenaandb} follows.
Let $X \subseteq V(G')$. We need to show that there exists a vertex in $G'[X]$ with degree at most $k$.
If $X \subseteq S$, then any vertex in $G'[X]$ has degree at most $k$, so part $(a)$ of Proposition \ref{prop:GinmathcalD_kthenaandb} holds.
Otherwise, $X$ has a vertex in $\overline{S}$.
Let $v$ be a vertex of minimum degree in $G'[X \cap \overline{S}]$. As $G'[X \cap \overline{S}]$ is a forest, such a vertex has degree at most one in
$G'[X \cap \overline{S}]$.
As there is no subset of $k$ siblings in $G$,
$v$ is adjacent to at most $k-1$ vertices in $S$.
Hence, $v$ has at most $k$ neighbors in $G'[X]$ and part $(a)$ of Proposition \ref{prop:GinmathcalD_kthenaandb} follows.
Next, we will show that $S$ is a minimum solution.

Suppose first that $k=2$. Then $G$ has at least 6 vertices.
 As $G$ does not have a sibling set of size 2, and $G$ has no cycles, 
each component of $G$ is either a $K_2$ or a $K_1$.
Then for any subset $S$ of size 2, in $G'=G\oplus S$,
Proposition \ref{prop:GinmathcalD_kthenaandb}(b) is not true.
Therefore, the size of the minimum solution is 3, i.e., $k+1$.

Consider now the case $k\geq 3$.
Suppose by contradiction that there exists $R \subseteq V(G)$ such that $G'=G \oplus R \in \mathcal{D}_k$ and $|R| < |S| = k+1$.
By Proposition \ref{prop:Sgeqk}, $|R|=k$.
By Proposition \ref{prop:GinmathcalD_kthenaandb}(b), there exists
    $X \subseteq V(G')$  such that every vertex in $G'[X]$ has degree at least $k$, which implies 
    that $X \setminus R \neq \emptyset$.
    We will show that $G' [X \cap \overline{R}]$ has no isolated vertices.
For a contradiction, suppose $G' [X \cap \overline{R}]$ has an isolated vertex $v$.
Since $|R|=k$ and any vertex in $G'[X]$ has degree at least $k$, $v$ is 
adjacent to all the vertices of $R$ in $G'[X]$, which implies that $R$ is a sibling set in $G$, a contradiction.
Note that 
$G' [X \cap \overline{R}]$ is a forest.
Let $u$ and $v$ be two vertices of degree 1 in the same component of $G' [X \cap \overline{R}]$.
Then both $u$ and $v$ are adjacent to at least $k-1$ vertices of
$R$ in $G'[X]$. Since $k\geq 3$, $u$ and $v$ have a common neighbor in $X\cap R$, 
which implies there exists a cycle in $G$, a contradiction.
\end{proof}

We prove Theorem \ref{thm:forest-degenerate} with the help of Lemma \ref{lemma:forest-to-degeneracy}.

\begin{theorem}
\label{thm:forest-degenerate}
Let $\mathcal{D}_k$ be the class of graphs with degeneracy $k$ ($k \geq 0$).  
Then MSC to $\mathcal{D}_k$ can be solved in polynomial time if the input is a forest.
\end{theorem}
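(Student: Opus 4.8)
The plan is to prove Theorem~\ref{thm:forest-degenerate} by reducing to Lemma~\ref{lemma:forest-to-degeneracy}, handling the small cases $k \in \{0,1\}$ separately, and then using the lemma's dichotomy for $k \geq 2$. First I would dispose of the degenerate cases. For $k=0$, the class $\mathcal{D}_0$ is exactly the edgeless graphs, and for a forest $G$ the optimal solution is easy to characterize directly (one must toggle away all edges, so the structure is forced and checkable in polynomial time). For $k=1$, $\mathcal{D}_1$ consists of graphs of degeneracy exactly one, i.e.\ forests containing at least one edge; again the analysis is elementary. I would treat these as base cases so that the main argument can assume $k \geq 2$.

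For the main case $k \geq 2$, the strategy is to invoke Lemma~\ref{lemma:forest-to-degeneracy}, which already establishes both that every $G \in \mathcal{F}_k$ is complementable to $\mathcal{D}_k$ and, crucially, the exact value of the optimum: $|S| = k$ when $G$ has a sibling set of size $k$, and $|S| = k+1$ otherwise. The algorithm then becomes almost immediate. First I would check whether the input forest lies in $\mathcal{F}_k$ (that is, whether it has at least $2k+2$ vertices); if it has fewer than $2k+2$ vertices, the brute-force search over all $2^{|V(G)|}$ subsets runs in polynomial time since $|V(G)| < 2k+2$ is bounded by a constant for fixed $k$. Otherwise, I would test whether $G$ contains a sibling set of size $k$, which amounts to checking whether some vertex $u$ has at least $k$ pairwise-nonadjacent neighbors; since $G$ is a forest, every vertex's neighborhood is an independent set, so this reduces to finding a vertex of degree at least $k$, computable in linear time.

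The construction of an optimal $S$ then follows the two cases of the lemma: if some vertex $u$ has degree at least $k$, return any $k$ of its neighbors as the sibling set (giving $|S| = k$); otherwise return any independent set of size $k+1$, which exists because the forest on at least $2k+2$ vertices has independence number at least $k+1$. By Lemma~\ref{lemma:forest-to-degeneracy}, the returned set is a minimum solution in each case, and correctness is immediate from the lemma's guarantee on the optimum size.

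The step I expect to require the most care is not the main case, which is essentially a direct corollary of Lemma~\ref{lemma:forest-to-degeneracy}, but rather the boundary cases $k \in \{0,1\}$, since the lemma is stated only for $k \geq 2$ and its proof explicitly uses $k \geq 2$ (for instance, to derive a contradiction from a vertex adjacent to all of $S$). For these small values one must argue separately that $G \oplus S \in \mathcal{D}_k$ is even achievable and determine the optimum directly; the definitions of $\mathcal{D}_0$ and $\mathcal{D}_1$ make this tractable, but it is the part of the proof that cannot simply be quoted from the preceding lemma and must be verified from scratch.
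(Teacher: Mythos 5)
Your proposal is correct and follows essentially the same route as the paper's proof: dispose of $k\in\{0,1\}$ by direct elementary analysis, brute-force forests on fewer than $2k+2$ vertices (which also subsumes the paper's use of Proposition~\ref{prop:Sgeqk} to answer No when $|V(G)|<k$), and for $G\in\mathcal{F}_k$ test for a sibling set of size $k$ --- which, exactly as the paper notes, reduces to finding a vertex of degree at least $k$ since neighborhoods in a forest are independent --- returning either $k$ such siblings or an independent set of size $k+1$, with optimality in both cases supplied by Lemma~\ref{lemma:forest-to-degeneracy}. Your sketches of the $k=0$ and $k=1$ base cases are slightly less explicit than the paper's (which, for $k=0$, pins down the feasible non-edgeless inputs as exactly $K_2$ and $K_2\cup nK_1$), but the reasoning you outline is the same and fills in correctly.
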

\begin{proof}  
We distinguish cases according to the value of $k$.
Suppose first that $k=0$. In this case, the class $\mathcal{D}_0$ consists precisely of edgeless graphs. If the input forest $G$ is a set of isolated vertices, then the empty set is a minimum solution.
Otherwise, if $G$ contains an edge, then $G \oplus S$ contains an edge for every non-empty $S$,
except when $G$ is isomorphic to $K_2$ or $K_2 \cup nK_1$ ($n\geq 1$),
which can be checked directly.
Hence, the MSC problem can be decided in polynomial time in this case.

Now suppose that $k=1$.
In this case, the class $\mathcal{D}_1$ consists exactly of forests that contain at least one edge.
If the input forest $G$ contains an edge, then $G \in \mathcal{D}_1$ and the empty set is a solution.
If $G$ consists of a single isolated vertex, then no solution exists.
Otherwise, $G$ has at least two isolated vertices, and complementing any two of them creates an edge,
yielding a graph of degeneracy~$1$.
Thus, the MSC problem can be solved in polynomial time.

Finally, suppose that $k \geq 2$.
By Proposition~\ref{prop:Sgeqk}, if $|V(G)| < k$, then $G$ is not complementable to $\mathcal{D}_k$,
and we answer No.
If $k \leq |V(G)| \leq 2k+1$, then $G$ has constant size (for fixed $k$),
and we can test all subsets of vertices to find a minimum solution in polynomial time.
Now assume that $|V(G)| \geq 2k+2$, that is, $G \in \mathcal{F}_k$.
We can verify in polynomial time whether $G$ has a sibling set of size $k$
by iterating over all vertices and checking whether some neighborhood has size at least $k$.
If such a set exists, we return it.
Otherwise, since $G$ is bipartite, we compute a maximum independent set in polynomial time
and return any subset of size $k+1$ of this set.
Correctness follows from Lemma~\ref{lemma:forest-to-degeneracy}.
This completes the proof.
\end{proof}

\section{Increasing the connectivity of a graph}
\label{section:increasing-connectivity}
A graph \( G \) is said to be \emph{\(k\)-connected} if \( |V(G)| > k \) and the removal of any set of fewer than \( k \) vertices leaves the graph connected.  
The \emph{connectivity} of \( G \), denoted by \( \kappa(G) \), is the largest integer \( k \) for which \( G \) is \(k\)-connected. Equivalently, \( \kappa(G) \) is the minimum number of vertices whose removal disconnects \( G \) or reduces it to a trivial graph. If \( G \) is disconnected, we define \( \kappa(G) = 0 \).

Let \( G \) be a connected graph.

A vertex \( v \in V(G) \) is called a \emph{cut vertex} (or \emph{articulation point}) if the removal of \( v \) increases the number of connected components of \( G \).
A
\emph{block} is a maximal connected subgraph without a cutvertex.
The \emph{block--cut tree} (or \emph{BC-tree}) of \( G \) is the bipartite graph \( T_G \) defined as follows.  
The vertex set of \( T_G \) consists of one node for each block of \( G \) and one node for each cut vertex of \( G \).  
A cut vertex \( v \) is adjacent in \( T_G \) to every block node \( B \) such that \( v \in V(B) \).

The main target of this section is to complement a graph into a 2-connected graph. From now on, given a graph $G$, a \emph{solution} is a subset $S \subseteq V(G)$ such that $G \oplus S$ is 2-connected.

\begin{lemma} \label{lemma:kappa1-leaves-TG}
Let~$G$ be a graph with $|V(G)| \geq 3$ and $\kappa(G)=1$.
The size of a minimum solution equals the number of leaves of $T_G$.
\end{lemma}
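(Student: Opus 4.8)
The plan is to prove the equality by two matching bounds: every solution has size at least the number $\ell$ of leaves of $T_G$, and there exists a solution of size exactly $\ell$.

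First I would fix notation from the block--cut tree. Since $\kappa(G)=1$ the graph $G$ is connected, so $T_G$ is a tree, and since every cut-vertex node has degree at least two in $T_G$, every leaf of $T_G$ is a block node; call these leaf blocks $B_1,\dots,B_\ell$. Each $B_i$ contains a unique cut vertex $c_i$, and I set $D_i := V(B_i)\setminus\{c_i\}$, the \emph{private} vertices of $B_i$. Two structural facts would be recorded up front: (a) every vertex of $D_i$ has all of its $G$-neighbors inside $V(B_i)$, because a non-cut vertex lies in a single block; and (b) the sets $D_1,\dots,D_\ell$ are nonempty and pairwise disjoint, since any vertex shared by two blocks is a cut vertex and is therefore excluded from every $D_i$.

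For the lower bound, let $S$ be any solution and fix a leaf block $B_i$. In $G$ the set $D_i$ reaches the rest of the graph only through $c_i$, and $V(G)\setminus V(B_i)\neq\varnothing$ because $c_i$ lies in a second block. By fact (a), an edge of $G\oplus S$ from $D_i$ to $V(G)\setminus V(B_i)$ is a non-edge of $G$, so it can exist only if both endpoints lie in $S$; in particular it forces $D_i\cap S\neq\varnothing$. If instead $D_i\cap S=\varnothing$, then in $(G\oplus S)-c_i$ the set $D_i$ is separated from the nonempty remainder, contradicting that $G\oplus S$ is $2$-connected. Hence $S$ meets every $D_i$, and by disjointness $|S|\ge \ell$.

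For the upper bound I would choose one vertex $d_i\in D_i$ from each leaf block and set $S=\{d_1,\dots,d_\ell\}$. As the $d_i$ lie in distinct leaf blocks they are pairwise nonadjacent in $G$, so $G\oplus S = G + K_S$, where $K_S$ denotes the clique added on $S$. It then remains to show $G':=G+K_S$ is $2$-connected; since $|V(G)|\ge 3$ and $G'$ is connected, it suffices to show that no vertex $x$ is a cut vertex of $G'$. If $x$ is not a cut vertex of $G$ (this includes the case $x\in S$, since each $d_i$ is a non-cut vertex), then $G-x$ is connected and $G'-x\supseteq G-x$ is connected as well. If $x$ is a cut vertex of $G$, then necessarily $x\notin S$, and I would analyze the components of $G-x$ through the branches of $T_G$ at $x$: the crucial claim is that each branch contains a leaf of $T_G$, hence a full private set $D_i$ and thus some $d_i$. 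Because $S\setminus\{x\}=S$ is a clique in $G'-x$, all these $d_i$ lie in a single component, so every branch is merged and $G'-x$ is connected. Therefore $G'$ is $2$-connected and $|S|=\ell$, matching the lower bound.

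The main obstacle is exactly this last claim: that every branch of $G-x$ at a cut vertex $x$ contains a leaf block of $T_G$. I would prove it by noting that deleting the node $x$ from the tree $T_G$ yields subtrees in which the only node whose $T_G$-neighborhood changed is the block node adjacent to $x$; a subtree on at least two nodes therefore retains at least one of its original leaves (a genuine leaf block), while a single-node branch is itself a leaf block with cut vertex $x$. Translating this tree statement back to the vertex partition of $G-x$ — and confirming that the private vertices of such a leaf block indeed land in the corresponding component — is the delicate bookkeeping step, after which the clique-merging argument closes the proof.
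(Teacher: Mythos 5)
Your proof is correct and takes essentially the same approach as the paper: a lower bound showing every solution must meet each leaf block of the block--cut tree, and a matching upper bound obtained by choosing one non-cut vertex per leaf block and verifying $2$-connectivity via the clique that the complementation creates on the chosen vertices, with branches at any potential cut vertex each reaching a leaf block. If anything, your lower bound is slightly more careful than the paper's: by hitting the pairwise disjoint private sets $D_i$ rather than the (possibly overlapping) leaf blocks themselves --- two leaf blocks can share their cut vertex, as in a star --- you actually justify the count $|S|\ge\ell$, a step the paper's proof leaves implicit.
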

\begin{proof}
    We begin by showing that $|S|$ is at least the number of leaves of $T_G$. For this, it suffices to show that $S$ contains at least one vertex of every block of $G$ whose corresponding node is a leaf in $T_G$.
    Suppose by contradiction that this is not the case. Then there exists a block $B$ in $G$, whose corresponding node is a leaf in $T_G$ and $S \cap V(B) = \emptyset$.
    Let $v$ be the node adjacent to $B$ in $T_G$, which exists because $|V(G)| \geq 3$ and $\kappa(G)=1$. By the definition of $T_G$, $v$ is a cut vertex of $G$ and also $v\in B$. Since $B$ is a block, it contains at least two vertices, so there exists
$u \in V(B)\setminus\{v\}$. As $v$ is a cut vertex, there exists a vertex $w$ in $G-v$ with no path to $u$.
    As $S \cap V(B)= \emptyset$, there is also no path from $u$ to $w$ in $(G \oplus S)-v$, so $v$ is a cut vertex in $G \oplus S$, a contradiction.

We now show that $|S|$ is at most the number of leaves of $T_G$.
For this, let $S$ be the set formed by choosing an arbitrary non-cut vertex $v_B$ in each block $B$
whose corresponding node is a leaf of $T_G$.
We claim that $G' = G \oplus S$ is $2$-connected. 
As $|V(G')|\geq 3$,
it suffices to show that $G' - x$ is connected for every $x \in V(G')$.
Let $u,w \in V(G-x)$, and let $B_1$ and $B_2$ be the blocks of $G$ such that
$u \in B_1$ and $w \in B_2$.
If $B_1 = B_2$, then $u$ and $w$ are connected by a path inside $B_1$,
and all edges of $B_1$ are preserved in $G'$, so they are connected in $G'-x$.

Otherwise, consider two leaf blocks $Y$ and $Z$ of $T_G$ such that
there exists a path in $T_G$ from $B_1$ to $Y$ and a path from $B_2$ to $Z$
that avoids the node corresponding to $x$ (if $x$ is a cut vertex).
These paths correspond to paths in $G-x$ from $u$ to $v_Y$ and from $w$ to $v_Z$.
The vertices $v_Y$ and $v_Z$
are adjacent in $G'$. Hence, $u$ and $w$ are connected in $G'-x$.
This proves that $G'$ is $2$-connected.
\end{proof}

The block–cut tree can be obtained in linear time using Tarjan’s depth-first search algorithm \cite{tarjan1972dfs}.
From Lemma \ref{lemma:kappa1-leaves-TG}, we have the next result.

\begin{theorem}
\label{thm:to-2conn}
Let $\mathcal{C}$ be the class of $2$-connected graphs.  
Then MSC to $\mathcal{C}$ can be solved in polynomial time.
\end{theorem}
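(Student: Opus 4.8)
The plan is to reduce Theorem~\ref{thm:to-2conn} to the three cases determined by the connectivity $\kappa(G)$ of the input graph, namely $\kappa(G)=0$, $\kappa(G)=1$, and $\kappa(G)\ge 2$. First I would observe that we can compute $\kappa(G)$ in polynomial time, and then handle each case separately. The case $\kappa(G)\ge 2$ is immediate: the graph is already $2$-connected, so the empty set $S=\varnothing$ is a solution of minimum size, and we simply return it (one should also verify the trivial degenerate cases, e.g.\ when $|V(G)|\le 2$, to make sure a $2$-connected graph exists at all). The case $\kappa(G)=1$ is exactly Lemma~\ref{lemma:kappa1-leaves-TG}: we compute the block--cut tree $T_G$ in linear time via Tarjan's algorithm, count its leaves, and return the set obtained by selecting one arbitrary non-cut vertex from each leaf block, which that lemma certifies to be a minimum solution.

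The remaining case, $\kappa(G)=0$ (i.e.\ $G$ is disconnected), is the one that requires the most attention, and I expect it to be the main obstacle. Lemma~\ref{lemma:kappa1-leaves-TG} is stated only for $\kappa(G)=1$, so I cannot invoke it directly. The natural strategy is to reduce the disconnected case to the connected case by a preprocessing step: pick one vertex from each connected component to form an initial set, complement so as to join the components into a single connected graph, and then argue about the resulting block structure. A cleaner approach is to extend the block--cut tree framework to disconnected graphs by treating each connected component's block structure together, or to show directly that a minimum solution must pick enough vertices to both merge all components and eliminate all cut vertices. I would aim to prove that the minimum solution size in this case equals the total number of leaves across the block--cut trees of all components, with an appropriate adjustment for components that are single vertices or single edges (which behave as isolated leaves). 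The delicate point is the lower bound: one must argue that any solution must hit every leaf block of every component, and additionally account for the cost of connecting distinct components, so that the counting matches the construction.

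The main obstacle, therefore, is handling the interaction between merging components and destroying cut vertices simultaneously, since a single well-chosen vertex in $S$ might serve both purposes and a naive count could overestimate the optimum. To resolve this, I would carefully define an augmented tree structure for disconnected graphs and mirror the two-directional argument of Lemma~\ref{lemma:kappa1-leaves-TG}: an averaging/lower-bound argument showing $|S|$ is at least the number of relevant leaves, and an explicit construction selecting one non-cut vertex per leaf block whose complementation makes every pair of vertices survive the deletion of any single vertex. Once the disconnected case is settled with its own counting lemma (analogous to Lemma~\ref{lemma:kappa1-leaves-TG}), the theorem follows by dispatching on $\kappa(G)$ and invoking the corresponding polynomial-time routine in each of the three cases.
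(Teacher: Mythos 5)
Your reduction by $\kappa(G)$ agrees with the paper's proof on two of the three branches: for $\kappa(G)\ge 2$ you return $\varnothing$, and for $\kappa(G)=1$ you invoke Lemma~\ref{lemma:kappa1-leaves-TG} exactly as the paper does. The genuine gap is the branch $\kappa(G)=0$, which you correctly single out as the crux but then leave as a plan rather than a proof: no construction is specified, no lower bound is proved, and the counting lemma you say you would ``aim to prove'' is false as conjectured. Concretely, you propose that the optimum equals the total number of leaves over the block--cut trees of all components, adjusted only for components that are single vertices or single edges. Let $G$ be two disjoint triangles. Every component is $2$-connected, so under any convention the leaf count is at most $2$, yet no solution of size less than $4$ exists (and $4$ is achievable): if a component $C$ with $|V(C)|\ge 2$ satisfies $|S\cap V(C)|\le 1$, say $S\cap V(C)\subseteq\{s\}$, then complementation toggles no pair inside $V(C)$ and no pair joining $V(C)\setminus\{s\}$ to the rest of the graph, so $G\oplus S$ is either disconnected (if $S\cap V(C)=\emptyset$) or has $s$ as a cut vertex separating $V(C)\setminus\{s\}$ from everything else. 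Hence every component on at least two vertices must contribute at least two vertices to any solution, and this applies to every $2$-connected component, not only to single edges. Note also that the failure mode is the opposite of the one you anticipated: gluing components through one representative apiece \emph{creates} new cut vertices, so the naive leaf count underestimates the optimum rather than overestimating it.

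For comparison, the paper does not build an augmented tree: it takes $S$ to be the union over components of the per-component leaf selections, argues that each complemented component stays $2$-connected by the argument of Lemma~\ref{lemma:kappa1-leaves-TG}, and proves minimality component-wise, using the observation that any edge of $G\oplus R$ leaving a component of $G$ must have both endpoints in $R$. But the corner your sketch trips on --- components whose block--cut tree is a single node --- is precisely where this style of argument requires extra care: the same two-triangle example shows that a single chosen vertex in a $2$-connected component remains an articulation point of the complemented graph, so any leaf-based selection must be thickened to at least two vertices in such components. To complete your plan you would therefore need a corrected counting lemma, roughly $\sum_{C}\max\{\ell(C),2\}$ over components $C$ with at least two vertices (plus one per isolated vertex), where $\ell(C)$ denotes the number of leaves of the block--cut tree of $C$, together with a feasibility proof for the corresponding selection and the two lower-bound arguments above. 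As it stands, the $\kappa(G)=0$ case --- the only part of the theorem not already covered by Lemma~\ref{lemma:kappa1-leaves-TG} --- is unproved, so the proposal does not establish the theorem.
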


\begin{proof}
If $|V(G)| \leq 3$ then we solve the problem by inspection, so let us assume that $V(G) \geq 4$.
Suppose first that the input graph $G$ is connected.
We compute the block--cut tree $T_G$ of $G$.
If $T_G$ consists of a single node, then $G$ is already $2$-connected and we return the empty set.
Otherwise, we return
$S=\{v_B : B \text{ is a leaf of } T_G\}$,
which is optimal by Lemma~\ref{lemma:kappa1-leaves-TG}.

Now suppose that $G$ is not connected.
We compute the block--cut tree of each connected component of $G$.
We iterate over all such components and ask if such 
block--cut tree has at least two leaves.
If it is the case, then we select a vertex $v_B$ for each leaf $B$ of the corresponding block-cut tree and add it to $S$; otherwise,
we select $\min \{2,|C|\}$ arbitrary vertices for each component $C$ and add it to set $S$.
Hence, at least two vertices are in $S$ for each component of size more than one.
Also, when $|C| \geq 3$, by repeating the same argument given in Lemma~\ref{lemma:kappa1-leaves-TG}, the subgraph of $G \oplus S$ induced by $C$, which we can call $C \oplus S$, is $2$-connected 

Let $u$ be a vertex in a component $C$ of $G$.
Suppose that $|C|\geq 2$. We select two arbitrary vertices $u_1$ and $u_2$ in $C \cap S$.
As $C \oplus S$ is 2--connected when $|C|\geq 3$, by Menger's Theorem \cite{BondyM08}, there exists two internally disjoint paths from $u$ to $u_1$ and $u_2$, we call these paths $P_u$ and $Q_u$, respectively.
Note that when $u \in S$, one of these paths consists on a single vertex.
When $|C|=1$ we abuse notation and set $u_1=u_2=u$.
In this case, we also set $P_u=Q_u=u$.

We first prove that $G \oplus S$ is $2$-connected.
For this, it suffices to use Menger's Theorem \cite{BondyM08} and prove that for every pair of vertices $u,v$, there exists two internally disjoint paths from $u$ to $v$ in $G \oplus S$.
Suppose first that $u$ and $v$ belongs to the same component $C$ of $G$.
If $|C| \geq 3$ then, as already mentioned $C \oplus S$
is $2$-connected and the statement follows. If $|V(C| = 2$, then both $u$ and $v$ are in $S$.
As $|V(G)| \geq 4$, we must have at least $2$ vertices in $S\setminus \{u,v\}$, say $x$ and $y$. Thus $uxv$ are $uyv$ are 2 internally disjoint paths from $u$ to $v$ in $G \oplus S$.

Now suppose $u$ and $v$ are in different components of $G$, say $C$ and $D$.
Note that $P_uP_v$ and $Q_uQ_v$ are paths in $G \oplus S$, which are distinct if at least one of $C$ and $D$ have size more than $1$. Now, if $|C|=|D|=1$, as $V(G) \geq 4$, there are at least $2$ vertices in $S \setminus \{u,v\}$, say $x$ and $y$, and $uxv$ and $uyv$ are 2 internally disjoint paths
 
 We now prove that $S$ is a minimum solution.
Suppose, for a contradiction, that there exists a set $R$ such that
$G \oplus R$ is $2$-connected and $|R| < |S|$.
Then there exists a component $H$ of $G$ such that
$|R \cap V(H)| < |S \cap V(H)|$.
Note that $|R\cap V(H)|>0$, otherwise, as $|V(G)|>1$, $G \oplus R$
is not 2-connected, a contradiction. 
This implies that $|H| \geq 2$ and so $|S_H|=2$ and $|R_H|=1$. But then $R \cap V(H)$
is an articulation in $G \oplus R$, again a contradiction.

\end{proof}

\section{Disconnecting a graph}

In this section we study the problem of disconnecting a graph.
From now on, given a graph $G$, a \emph{solution} is a subset
$S \subseteq V(G)$ such that $G \oplus S$ is disconnected.
We begin by proving two basic properties.

\begin{proposition}\label{prop:monotone-in-induced}
Let $G$ be a connected graph and 
let $S$ be a solution for $G$.
If $S \subseteq A$ then
 $S$ is also a solution for $G[A]$.
\end{proposition}
\begin{proof}
Let $G' = G \oplus S$ and $G'' = G[A] \oplus S$.
Since $G$ is connected but $G'$ is disconnected, every connected component of $G'$ contains a vertex of $S$ and thus also of $A$. Let $C$ and $D$ be two distinct connected components of $G'$, and let $u \in V(C)\cap A$ and $v \in V(D)\cap A$.
Suppose, for a contradiction, that $G''$ is connected. Then there exists a path $P$ from $u$ to $v$ in $G''$.
Since $S\subseteq A$, the operation $\oplus S$ affects only adjacencies between vertices of $A$. Hence,
$G''=(G\oplus S)[A]=G'[A].
$
Therefore, every edge of $P$ also belongs to $G'$, implying that $P$ is a path from $u$ to $v$ in $G'$. This contradicts the fact that $u$ and $v$ belong to different connected components of $G'$.
\end{proof}

\begin{proposition}\label{prop:S-intersects-every-component}
Let $G$ be a connected graph and let $S$ be a solution.
Then $S$ intersects every component of $G \oplus S$.
\end{proposition}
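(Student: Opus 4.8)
The plan is to argue by contradiction. Suppose some component $C$ of $G \oplus S$ satisfies $C \cap S = \emptyset$. Since $G$ is connected but $G \oplus S$ is disconnected, $C$ is a proper nonempty subset of $V(G)$, and hence $V(G) \setminus C \neq \emptyset$.

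The central observation is that the operation $\oplus S$ toggles the adjacency of a pair $\{u,v\}$ only when both $u,v \in S$; every other adjacency is preserved. Consequently, if a vertex $u$ lies outside $S$, then its set of neighbors is identical in $G$ and in $G \oplus S$. Applying this to every vertex of $C$ (all of which lie outside $S$ by assumption), I would conclude that the set of edges of $G \oplus S$ incident to $C$ coincides exactly with the corresponding set in $G$. This edge-preservation fact is also an immediate special case of Proposition~\ref{prop:monotone-in-induced}, but here a direct appeal to the definition of $\oplus$ is cleaner.

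Now I invoke connectivity of the input: since $G$ is connected and $\emptyset \neq C \subsetneq V(G)$, there must exist an edge $uv \in E(G)$ with $u \in C$ and $v \in V(G) \setminus C$. Because $u \notin S$, this edge is preserved by the complementation, so $uv \in E(G \oplus S)$. This contradicts the assumption that $C$ is a connected component of $G \oplus S$, since such an edge would join $C$ to a vertex outside it. Hence no component of $G \oplus S$ can avoid $S$, which proves that $S$ intersects every component.

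The argument is short, and I do not anticipate a genuine obstacle. The only point requiring care is verifying that a vertex outside $S$ retains all of its adjacencies under $\oplus S$, which follows directly from the definition of subgraph complementation; the rest is a standard contradiction using the connectivity of $G$ and the fact that a disconnected $G \oplus S$ has a component that is a proper subset of $V(G)$.
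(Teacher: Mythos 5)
Your proposal is correct and follows essentially the same argument as the paper: both proofs assume a component $C$ of $G \oplus S$ disjoint from $S$, use connectivity of $G$ to obtain an edge $uv \in E(G)$ with $u \in C$ and $v \notin C$, and observe that since $u \notin S$ the complementation preserves $uv$, contradicting that $C$ is a component. The extra care you take in justifying that vertices outside $S$ retain all their adjacencies is exactly the step the paper also relies on, so there is nothing to add.
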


\begin{proof}
Suppose by contradiction that there exists a component $C$ of $G \oplus S$
such that $V(C) \cap S = \emptyset$.
Since $G$ is connected, there exists an edge $uv \in E(G)$ with
$u \in V(C)$ and $v \notin V(C)$.
Let $D$ be the component of $G \oplus S$ containing $v$; clearly $D \neq C$.
As $u \notin S$, the operation $\oplus S$ does not affect the edge $uv$,
and hence $uv \in E(G \oplus S)$.
Therefore, $u$ and $v$ belong to the same component of $G \oplus S$,
contradicting the assumption that $C$ and $D$ are distinct.
\end{proof}

For developing our algorithm, we rely on the following definition.
A \emph{split} of a graph $G$ (not to be confused with a split graph)
is a partition of $V(G)$ into four sets $(A_2, A_1, B_1, B_2)$ such that
every vertex of $A_1$ is adjacent to every vertex of $B_1$, and
there are no edges between $A_2$ and $B_2$.
From now on, given such a split, we let $A := A_1 \cup A_2$
and $B := B_1 \cup B_2$. (Figure \ref{fig:split}a).
From now on, we fix a graph $G$ and a split $(A_2,A_1,B_1,B_2)$ of $G$.

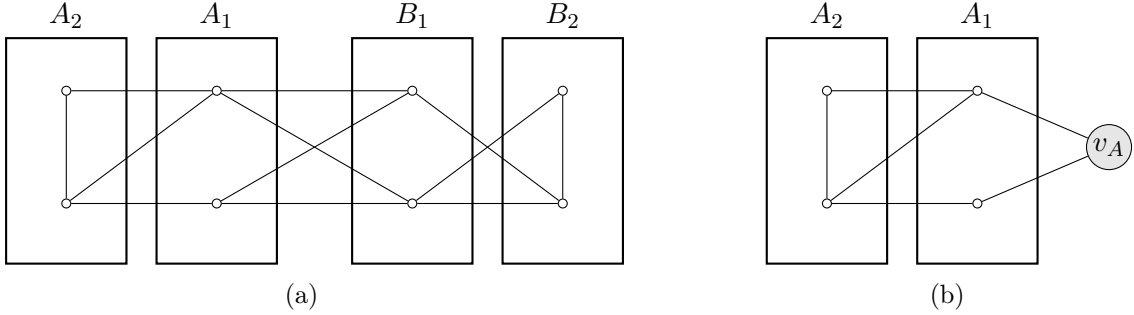
\begin{figure}[h]
\centering

\begin{subfigure}{0.48\textwidth}
\centering
\begin{tikzpicture}[
    scale=1,
    every node/.style={circle, draw, inner sep=1.2pt},
    box/.style={draw, thick, rectangle}
]

\draw[box] (-0.8,-0.8) rectangle (0.8,2.2);
\draw[box] (1.2,-0.8) rectangle (2.8,2.2);
\draw[box] (3.8,-0.8) rectangle (5.4,2.2);
\draw[box] (5.8,-0.8) rectangle (7.4,2.2);

\node (a21) at (0,1.5) {};
\node (a22) at (0,0) {};
\node (a11) at (2,1.5) {};
\node (a12) at (2,0) {};
\node (b11) at (4.6,1.5) {};
\node (b12) at (4.6,0) {};
\node (b21) at (6.6,1.5) {};
\node (b22) at (6.6,0) {};

\draw (a21)--(a22);
\draw (b21)--(b22);
\draw (a21)--(a11);
\draw (a22)--(a12);
\draw (a22)--(a11);
\draw (b11)--(b22);
\draw (b12)--(b21);
\draw (b12)--(b22);

\draw (a11)--(b11);
\draw (a11)--(b12);
\draw (a12)--(b11);
\draw (a12)--(b12);

\node[draw=none] at (0,2.5) {$A_2$};
\node[draw=none] at (2,2.5) {$A_1$};
\node[draw=none] at (4.6,2.5) {$B_1$};
\node[draw=none] at (6.6,2.5) {$B_2$};

\end{tikzpicture}
\caption{ }
\end{subfigure}
\hfill
\begin{subfigure}{0.48\textwidth}
\centering
\begin{tikzpicture}[
    scale=1,
    every node/.style={circle, draw, inner sep=1.2pt},
    box/.style={draw, thick, rectangle}
]

\draw[box] (-0.8,-0.8) rectangle (0.8,2.2);   
\draw[box] (1.2,-0.8) rectangle (2.8,2.2);    

\node (a21) at (0,1.5) {};
\node (a22) at (0,0) {};

\node (a11) at (2,1.5) {};
\node (a12) at (2,0) {};

\node[fill=gray!20] (vA) at (3.75,0.75) {$v_A$};

\draw (a21)--(a22);
\draw (a21)--(a11);
\draw (a22)--(a12);
\draw (a22)--(a11);

\draw (vA)--(a11);
\draw (vA)--(a12);

\node[draw=none] at (0,2.5) {$A_2$};
\node[draw=none] at (2,2.5) {$A_1$};

\end{tikzpicture}
\caption{ }
\end{subfigure}

\caption{$(a)$ A split $(A_2,A_1,B_1,B_2)$ of a graph $G$. $(b)$ The graph $G_A$.}
\label{fig:split}
\end{figure}

The next lemma shows important properties for a valid solution.

\begin{lemma}\label{lemma:Sdisconnected-characterization}
Let $S$ be a solution for $G$.
Then
\begin{itemize}
    \item $S \cap A_2 = \emptyset$ or $S \cap B_2 = \emptyset$
    \item If $S \cap B_2 = \emptyset$ then $S$ is also a solution for $G[A\cup B_1]$. If $S \cap A_2  = \emptyset$ then $S$ is also a solution for $G[B\cup A_1]$.
  \item One of the next possibilities is true: $S \subseteq A$, $S \subseteq B$,
$A_1 \subseteq S $ or $B_1 \subseteq S$.
\end{itemize}
\end{lemma}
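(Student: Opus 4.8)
The plan is to prove each of the three bullets in turn, using the structural constraints imposed by the split $(A_2,A_1,B_1,B_2)$ together with Proposition~\ref{prop:monotone-in-induced}. Throughout, write $G' := G\oplus S$ and recall that, by definition of a split, every vertex of $A_1$ is adjacent to every vertex of $B_1$ while there are no edges between $A_2$ and $B_2$.

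\medskip\noindent\textbf{First bullet.} The key observation is that the edges between $A$ and $B$ that are not toggled by $S$ keep the two sides glued together. Suppose for contradiction that both $S\cap A_2\neq\emptyset$ and $S\cap B_2\neq\emptyset$, and pick $a\in S\cap A_2$ and $b\in S\cap B_2$. The complete bipartite bond between $A_1$ and $B_1$ is the main tool: I would argue that, because complementation only affects pairs both lying in $S$, the $A_1$--$B_1$ edges that survive still connect $A_1$ to $B_1$, while the absence of $A_2$--$B_2$ edges means that toggling $a,b$ only \emph{adds} an edge $ab$ rather than severing anything. The intended contradiction is that $G'$ remains connected: any vertex can reach the $A_1$--$B_1$ core. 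The hard part here is handling the boundary cases where $A_1$ or $B_1$ is empty, or where $S$ happens to contain all of $A_1\cup B_1$ (so those surviving edges disappear); I expect to need to treat these degenerate configurations separately, possibly invoking the hypotheses of later parts or arguing directly from the adjacency of $a$ to $A_1$ and $b$ to $B_1$.

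\medskip\noindent\textbf{Second bullet.} This is the most direct part. If $S\cap B_2=\emptyset$, then $S\subseteq A\cup B_1$, so taking $A:=A\cup B_1$ in Proposition~\ref{prop:monotone-in-induced} gives $S_{A\cup B_1}=S$ and shows that $G[A\cup B_1]\oplus S$ is an induced subgraph of $G'$. Since $G'$ is disconnected, I must argue that the induced subgraph on $A\cup B_1$ is itself disconnected; Proposition~\ref{prop:monotone-in-induced} already delivers exactly this implication (``if $G\oplus S$ is disconnected, then $G[A]\oplus S_A$ is also disconnected''), provided $A\cup B_1$ contains vertices from at least two components of $G'$. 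The potential gap is that $B_2$ could absorb an entire component; I would rule this out using the first bullet, which forces $S$ to avoid $A_2$ or $B_2$, together with the fact that $B_2$ has no edges to $A_2$ and so cannot single-handedly carry connectivity. The symmetric statement follows by exchanging the roles of $A$ and $B$.

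\medskip\noindent\textbf{Third bullet.} Here the aim is to show that if $S$ meets both sides nontrivially and does not contain $A_1$ or $B_1$ entirely, we reach a contradiction. Suppose $S\not\subseteq A$ and $S\not\subseteq B$, so $S$ meets both $A$ and $B$, and suppose $A_1\not\subseteq S$ and $B_1\not\subseteq S$. Then there exist $a_1\in A_1\setminus S$ and $b_1\in B_1\setminus S$; since $a_1b_1\in E(G)$ and neither endpoint is toggled, the edge $a_1b_1$ survives in $G'$, bridging the $A$-side and the $B$-side. Combining this surviving cross edge with the connectivity one gets inside each side (again using the $A_1$--$B_1$ bond and the surviving edges), I would show $G'$ is connected, contradicting that $S$ is a solution. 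I expect the main obstacle to be verifying that each side $A$ and $B$ is internally connected to its respective endpoint of the surviving bridge after complementation; this may require a careful case analysis on which of $A_2,B_2,A_1,B_1$ are empty, and this is precisely where I anticipate the bulk of the casework, mirroring the degenerate-case difficulties already flagged in the first bullet.
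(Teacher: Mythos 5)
Your second bullet is essentially the paper's argument, and your instinct about its weak point is sound: the ``in particular'' clause of Proposition~\ref{prop:monotone-in-induced} is not valid for arbitrary induced subgraphs, and the correct patch is close to what you gesture at --- since $S \subseteq A \cup B_1$, Proposition~\ref{prop:S-intersects-every-component} (which is where connectedness of $G$ enters) guarantees that every component of $G \oplus S$ meets $A \cup B_1$, so the induced subgraph retains vertices of at least two components and is disconnected. The genuine gap lies in your first and third bullets, and the degenerate cases you flag there are fatal to your strategy rather than deferrable casework. Your plan in both places is to prove that $G \oplus S$ is \emph{connected} by routing everything through the surviving $A_1$--$B_1$ core (bullet one) or through internal connectivity of each side plus the bridge $a_1b_1$ (bullet three). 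But in bullet one, if $A_1 \cup B_1 \subseteq S$, every core edge has both endpoints in $S$ and is destroyed, so there is no surviving core at all, and yet the statement still holds. And in bullet three, the claim that each side is internally connected in $G \oplus S$ is simply false: a vertex $p \in A_2 \cap S$ whose only $G$-neighbor is some $q \in A_1 \cap S$ loses that edge and may have no neighbors in $A$ at all in $G \oplus S$; it reaches the rest of the graph only through \emph{newly created} edges into $S \cap B$. So the ``bulk of the casework'' you anticipate cannot be completed along the lines you propose.

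The missing idea, which is the paper's route, is never to prove global connectivity at all, but to exhibit a single toggled edge between two alleged distinct components. The crucial structural fact is that $A_2$ has no neighbors whatsoever in $B$ (note this requires reading the split in the standard way, with \emph{all} $A$--$B$ edges lying between $A_1$ and $B_1$; the weaker property you cite, mere absence of $A_2$--$B_2$ edges, is not enough and would even make the first bullet false). Hence complementation turns total non-adjacency into total adjacency among solution vertices: every vertex of $S \cap A_2$ is adjacent in $G \oplus S$ to \emph{every} vertex of $S \cap B$, and symmetrically for $S \cap B_2$. Combined with Proposition~\ref{prop:S-intersects-every-component} --- which your proposal never invokes --- the proofs collapse. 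For the first bullet: $a \in S \cap A_2$ and $b \in S \cap B_2$ lie in one component via the new edge $ab$; a second component contains some $w \in S$, and $w$ is adjacent in $G \oplus S$ to $b$ (if $w \in A$) or to $a$ (if $w \in B$), a contradiction. For the third bullet: the preserved edges at $a_1 \in A_1 \setminus S$ and $b_1 \in B_1 \setminus S$ place all of $A_1 \cup B_1$ in one component $C$ (every vertex of $A_1$ keeps its edge to $b_1$ and every vertex of $B_1$ keeps its edge to $a_1$ --- this is the upgrade your single bridge $a_1b_1$ needs); by Proposition~\ref{prop:S-intersects-every-component} a second component contains some $x \in S$, which must lie in $A_2 \cup B_2$, and by the first bullet (assuming without loss of generality $S \cap B_2 = \emptyset$) in $A_2$; since $S \not\subseteq A$ there is $y \in S \cap B_1 \subseteq C$, and $xy \notin E(G)$ with $x,y \in S$ yields $xy \in E(G \oplus S)$, joining the two components --- contradiction.
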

\begin{proof}
We first show the first statement.
 Suppose, for contradiction, that there exist $u,v\in S$ with 
$u\in A_2$ and $v\in B_2$.
Since $uv\notin E(G)$, we have $uv\in E(G \oplus S)$.
Hence $u$ and $v$ lie in the same component of $G\oplus S$, say $C$.
    As $G \oplus S$ is not connected, there exists a component in $G \oplus S$, say $D$, distinct from $C$.
    Let $w$ be a vertex in $D$.
    As, in $G$, $v$ has no neighbors in $A$, we have $w \notin A$. Indeed, otherwise $wv \notin E(G)$ and $wv  \in E(G \oplus S)$, a contradiction.
    Hence $w \in B$. But $u\in A_2$ has no neighbors in $B$ in $G$, and the same reasoning 
shows that $uw\in E(G\oplus S)$, again contradicting $w\in D$.

For the second statement, assume that $S \cap B_2 = \emptyset$.
Since $S \subseteq A \cup B_1$, 
the claim follows from Proposition~\ref{prop:monotone-in-induced}.
For the third statement, suppose that $S \not\subseteq A$ and $S \not\subseteq B$.
By the first statement, $S \cap A_2 = \emptyset$ or
$S \cap B_2 = \emptyset$.
Assume for the moment that $S \cap B_2 = \emptyset$.
We claim that $A_1 \subseteq S$ or $B_1 \subseteq S$.
Suppose by contradiction that there exist
$u \in A_1 \setminus S$ and $v \in B_1 \setminus S$.
Let $C$ be the component of $G \oplus S$ containing $v$.
Since $v \notin S$, for every $a \in A_1$ we have $av \in E(G)$ and hence
$av \in E(G \oplus S)$; thus $A_1 \subseteq V(C)$.
Likewise, since $u \notin S$, every vertex of $B_1$ is adjacent to $u$ in
$G\oplus S$, and therefore $B_1 \subseteq V(C)$.

Since $G \oplus S$ is disconnected, there exists another component $D \neq C$.
By Proposition~\ref{prop:S-intersects-every-component}, $D$ contains a vertex
$x \in S$.
Because $A_1 \cup B_1 \subseteq V(C)$, this vertex $x$ cannot lie in $A_1$ or
$B_1$, so $x \in A_2 \cup B_2$.
But our assumption $S \cap B_2 = \emptyset$ forces $x \in A_2$.
Since $S \not\subseteq A$, there exists a vertex $y \in S \cap B_1$.
As $x \in A_2$ and $y \in B_1$, we have $xy \notin E(G)$, and therefore
$xy \in E(G \oplus S)$.
This contradicts the fact that $x \in V(D)$ and $y \in V(C)$.
Thus $A_1 \subseteq S$ or $B_1 \subseteq S$, completing the proof (Figure).
\end{proof}

The main purpose of introducing the notion of a split is to solve the problem
recursively by reducing it to smaller subgraphs of the original input \cite{Rao2008,Cicerone1999,Cunningham1982}.
For this purpose, we define $G_A$ as the graph whose vertex set consists of $A$
together with a special vertex, denoted by $v_A$, which represents all vertices of
$B_1$. The adjacencies in $G_A$ are defined as follows: the subgraph induced by
$A$ is exactly $G[A]$, and the special vertex $v_A$ is adjacent to every vertex of
$A_1$ (Figure \ref{fig:split}). Symmetrically, we define $G_B$ by exchanging the roles of $A$ and $B$.
From now on, we use $G_A,G_B, v_A$ and $v_B$ as defined before.

\begin{lemma} \label{lemma:SinAB1-SinA1}
Let $S$ be a solution for $G[A \cup B_1]$.
\begin{itemize}
 \item If $B_1 \cap S = \emptyset$, then $S$ is also a solution for $G_A$. 
 \item If $B_1 \subseteq S$, then $(S \setminus B_1) \cup \{v_A\}$ is a solution for $G_A$. 
\end{itemize}
\end{lemma}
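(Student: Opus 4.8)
The plan is to handle both items uniformly through the projection $\phi\colon V(G[A\cup B_1])\to V(G_A)$ that fixes $A$ pointwise and collapses all of $B_1$ to the special vertex $v_A$. Set $S_A:=S\cap A$, and let $S'$ denote the set we must certify: $S'=S$ in the first item (so $v_A\notin S'$) and $S'=(S\setminus B_1)\cup\{v_A\}$ in the second (so $v_A\in S'$). In both cases $S'\cap A=S_A$, so $G[A\cup B_1]\oplus S$ and $G_A\oplus S'$ both induce $G[A]\oplus S_A$ on $A$; note that $v_A\in S'$ precisely when $B_1\subseteq S$, which is exactly the case split of the two items.

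First I would prove an adjacency correspondence: for every pair of vertices not both lying in $B_1$, the pair is adjacent in $G[A\cup B_1]\oplus S$ exactly when its $\phi$-images are adjacent in $G_A\oplus S'$. For pairs inside $A$ this is the agreement noted above. For a pair $a\in A$, $b\in B_1$ I would run the four-way case analysis on whether $a\in A_1$ or $a\in A_2$ and whether $a\in S$, using that in $G$ the set $A_1$ is completely joined to $B_1$ while $A_2$ sends no edge to $B_1$, together with the fact that $B_1$ lies entirely inside or entirely outside $S$. The upshot is that every vertex of $B_1$ has the \emph{same} neighbourhood $N\subseteq A$ in $G[A\cup B_1]\oplus S$, and that this identical $N$ is the neighbourhood of $v_A$ in $G_A\oplus S'$ (concretely, $N=A_1$ in the first item and $N=(A_1\setminus S)\cup(A_2\cap S)$ in the second).

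I would then conclude by a dichotomy on $N$. If $N=\varnothing$, then $v_A$ is isolated in $G_A\oplus S'$, so, as $A\neq\varnothing$, that graph is disconnected and the conclusion holds outright. If $N\neq\varnothing$, fix any $a\in N$; it is adjacent to every vertex of $B_1$, so $B_1$ lies in a single component of $G[A\cup B_1]\oplus S$. Since $S$ is a solution, $G[A\cup B_1]\oplus S$ is disconnected, so taking $P$ to be the component containing $B_1$ and $Q$ its nonempty complement gives $Q\subseteq A$. I transfer this across $\phi$ by setting $P':=(P\cap A)\cup\{v_A\}$ and $Q':=Q$: both parts are nonempty, and the adjacency correspondence shows that no edge of $G_A\oplus S'$ crosses $(P',Q')$, since within-$A$ edges are inherited and the neighbours of $v_A$ coincide with those of $B_1$, all lying in $P\cap A$. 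Hence $G_A\oplus S'$ is disconnected.

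The step I expect to be the crux is controlling what the contraction of $B_1$ does to connectivity: a priori, collapsing $B_1$ to the single vertex $v_A$ could merge two components of $G[A\cup B_1]\oplus S$ and turn a disconnected graph into a connected one, which would defeat the implication. The uniform-neighbourhood fact is exactly what forbids this—$B_1$ can be split across distinct components only when its common neighbourhood $N$ is empty, and in that regime $v_A$ is isolated, so disconnectedness is immediate. The remaining work, namely the four adjacency sub-cases and the verification that $Q\subseteq A$, is routine bookkeeping.
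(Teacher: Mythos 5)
Your proof is correct, but it takes a genuinely different---and in fact more careful---route than the paper's. The paper's own proof is two lines: it identifies $v_A$ with one vertex $b \in B_1$, so that $G_A$ becomes an induced subgraph $G[A\cup\{b\}]$ of $G[A\cup B_1]$, and then invokes Proposition~\ref{prop:monotone-in-induced} to transfer disconnectedness, noting that $S \cap (A\cup\{v_A\})$ is exactly $S$ in the first item and $(S\setminus B_1)\cup\{v_A\}$ in the second. You instead verify the transfer directly through the map collapsing all of $B_1$ to $v_A$, via the uniform-neighbourhood observation ($N=A_1$, resp.\ $N=(A_1\setminus S)\cup(A_2\cap S)$) and the dichotomy on $N$. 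What your approach buys is precisely the step the paper elides: the ``in particular'' clause of Proposition~\ref{prop:monotone-in-induced} is false as a general statement---an induced subgraph of a disconnected graph need not be disconnected, since $A$ may miss entire components---so the paper's appeal to it leaves a gap that must be closed by exactly the argument you give. When $N\neq\emptyset$, every vertex of $B_1$ has a neighbour in $A$, so all of $B_1$ lies in one component and every component of $G[A\cup B_1]\oplus S$ survives the restriction; when $N=\emptyset$, components could hide inside $B_1$ and be lost, but then $v_A$ is isolated in $G_A\oplus S'$ and disconnectedness holds outright. Your closing paragraph identifies this failure mode of the naive contraction explicitly, which is the crux. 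Both proofs share the same tacit context assumptions ($A\neq\emptyset$, $B_1\neq\emptyset$, and the intended split condition that $A_2$ sends no edges to $B_1$, which the paper's stated definition omits but uses throughout), so no complaint there: the paper's proof buys brevity at the cost of resting on an overstated proposition, while yours is self-contained and supplies the missing justification.
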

\begin{proof}
Identify one vertex of $B_1$ with $v_A$
For the first item, we have that
$A \cup \{v_A\}$ is a subset of vertices of $G[A \cup B_1]$ and the proof follows by Proposition~\ref{prop:monotone-in-induced}.
Now, for the second item, let $S_A= (S \setminus B_1) \cup \{v_A\}$ and suppose by contradiction
that $G''= G_A \oplus S_A$ is not connected.
Let $G'=G[A \cup B_1] \oplus S$ and note that  $G'' \subseteq G'$. Hence all vertices of $V(G'')$ lie in a connected component of $G'$, say $C$. So, as $G'$ is not connected, there exists a vertex $u \in V(G') \setminus V(G'')$ that has no neighbors in $C$. This implies, as $(A_2,A_1,B_1,B_2)$ is a split, that $A_1 \subseteq S$. But then $G''$ is not connected, a contradiction.
\end{proof}

\begin{lemma}\label{lemma:Rdisconnected-thenSdisconnected}
Let $R$ be a solution for $G_A$.
If $v_A \notin R$ then $R$ is a solution for $G$.
If $v_A \in R$ then $S$ is a solution for $G$, where $S = (R \setminus \{v_A\}) \cup B_1$.
\end{lemma}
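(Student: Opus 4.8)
The plan is to prove both cases with a single component-tracking argument. Let $C$ denote the component of $G_A \oplus R$ containing $v_A$, and set $D := V(G_A) \setminus V(C)$; since $G_A \oplus R$ is disconnected, $D \neq \emptyset$, and since $v_A \in C$ we have $D \subseteq A$. In each case I will exhibit the \emph{same} separation: I claim that, after lifting back to $G$, the set $D$ sends no edge to the rest of the graph, so that the lifted graph (namely $G \oplus R$ in Case 1, and $G \oplus S$ in Case 2) is disconnected with $D$ on one side and a nonempty set containing $B$ on the other.

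First I would handle the edges internal to $A$. Writing $R' := R \cap A$, note that in Case 2 we have $S = (R\setminus\{v_A\})\cup B_1 = R' \cup B_1$, so $S \cap A = R'$ in both cases; moreover $G_A[A] = G[A]$, and the complementation on $A$ is governed by $R'$ both in $G_A \oplus R$ and in the lifted graph. Hence the subgraphs induced on $A$ by $G_A \oplus R$ and by the lifted graph coincide. Since $D$ and $A \setminus D$ lie in distinct components of $G_A \oplus R$, there are no edges between $D$ and $A \setminus D$ in the lifted graph.

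The crux, and the step I expect to be the main obstacle, is showing that $D$ sends no edge into $B$. Here I would use that $D$ has no edge to $v_A$ in $G_A \oplus R$ to pin down where the vertices of $D$ sit relative to the split. Computing the neighborhood of $v_A$ in $G_A \oplus R$ (it equals $A_1$ in Case 1, where $v_A\notin R$, and equals $(A_1 \setminus R') \cup (A_2 \cap R')$ in Case 2, where $v_A\in R$), the condition $D \cap N_{G_A \oplus R}(v_A) = \emptyset$ forces $D \subseteq A_2$ in Case 1, and forces $D \cap A_1 \subseteq R'$ together with $(D \cap A_2)\cap R' = \emptyset$ in Case 2. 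Combining this with the split structure that makes $v_A$ a faithful stand-in for $B_1$ (each vertex of $B_1$ has neighborhood exactly $A_1$ inside $A$, and $B_2$ has no neighbor in $A$), a short case check on each $d \in D$ and $b \in B_1$ shows that every potential $D$–$B_1$ edge is either toggled off or never present, while $D$–$B_2$ edges never arise. Thus $D$ has no edge to $B$ in the lifted graph.

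Finally, I would conclude. Since $D \neq \emptyset$ and $B \supseteq B_1 \neq \emptyset$ is disjoint from $D$, the set $D$ is a nonempty proper subset of the vertex set with no outgoing edges in the lifted graph. Therefore the lifted graph is disconnected, which is exactly what the two cases assert: $R$ is a solution for $G$ when $v_A \notin R$, and $S = (R \setminus \{v_A\}) \cup B_1$ is a solution for $G$ when $v_A \in R$.
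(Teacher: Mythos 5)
Your proof is correct, and it takes a genuinely different route from the paper's. The paper argues by contradiction via path lifting: it fixes $x,y\in R$ lying in distinct components of $G_A\oplus R$ (via Proposition~\ref{prop:S-intersects-every-component}), assumes the lifted graph is connected, and converts an $x$--$y$ path of the lifted graph into a path of $G_A\oplus R$ by replacing its excursions into $B_1$ with the shortcut through $v_A$; its Case~2 additionally needs a minimum-length path and a reduction to $x=v_A$. You instead exhibit an explicit cut: taking $D$ to be the complement of the component of $v_A$ in $G_A\oplus R$, you observe that the complementation on $A$ is governed by $R'=R\cap A$ in both $G_A\oplus R$ and the lifted graph (so no $D$--$(A\setminus D)$ edge survives), and you use $D\cap N_{G_A\oplus R}(v_A)=\emptyset$ to locate $D$ with respect to the split ($D\subseteq A_2$ in Case~1; $D\cap A_1\subseteq R'$ and $(D\cap A_2)\cap R'=\emptyset$ in Case~2), after which your case check correctly rules out every $D$--$B$ edge: in Case~2, a vertex of $D\cap A_1$ lies in $S$ together with all of $B_1$, so those edges are toggled off, while a vertex of $D\cap A_2$ avoids $S$ and was never adjacent to $B$. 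Your approach buys a uniform treatment of both cases and a constructive disconnection of the lifted graph (the component structure around $v_A$ is preserved), whereas the paper's approach trades the neighborhood computation for path surgery inside a proof by contradiction. One shared caveat worth noting: like the paper's own proof, your edge check uses the standard split property that \emph{all} edges between $A$ and $B$ run between $A_1$ and $B_1$, which is slightly stronger than the paper's literal definition (forbidding only $A_2$--$B_2$ edges) but is clearly the intended notion, as Figure~\ref{fig:split} and the paper's own argument show; likewise both proofs implicitly use $B_1\neq\emptyset$, which holds because the lemma is only invoked on nontrivial splits of connected graphs.
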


\begin{proof}

Since $G_A \oplus R$ is disconnected, 
there exist two components $C$ and $D$ in $G_A \oplus R$.
By Proposition~\ref{prop:S-intersects-every-component}, 
there exist $x,y \in R$ such that $x \in V(C)$ and $y \in V(D)$.

\textbf{Case 1: $v_A \notin R$.}

Suppose, towards a contradiction, that $G \oplus R$ is connected.
As $v_A \notin R$, both $x$ and $y$ lie in $A$.
Because $G \oplus R$ is connected, there exists a path $P$ from $x$ to $y$ in $G \oplus R$.
If all internal vertices of $P$ lie in $A$, then $P$ is also a path in $G_A \oplus R$, contradicting that $x$ and $y$ lie in different components.  
Hence at least one internal vertex of $P$ lies outside $A$, and thus in $B_1$.
Let $P_x$ be the maximal subpath of $P$ starting at $x$ whose vertices lie in $A$,  
and let $x'$ be the end of $P_x$ distinct from $x$.
Similarly, we define $P_y$ and $y'$.  
Since $x',y'\in A_1$, and $v_A$ is adjacent to all vertices of $A_1$ in $G_A$, 
\[
    P_x \cdot x' v_A y' \cdot P_y
\]
is a path from $x$ to $y$ in $G_A \oplus R$, contradicting again that $x$ and $y$ lie in different components.
This proves the first statement.

\medskip
\textbf{Case 2: $v_A \in R$.}

Assume towards a contradiction that $G \oplus S$ is connected.
As $v_A \in R$, we may assume without loss of generality that $x = v_A$ (otherwise, we proceed as in Case 1).
Because $G \oplus S$ is connected, there exists a path $P$ in $G \oplus S$ whose ends are $y$ and some vertex $z \in B_1$.
Choose $P$ among such paths so that it has minimum length.
By minimality of $P$, all internal vertices of $P$ lie in $A$.
Since $v_A$ is adjacent to every vertex of $A_1$ in $G_A$, 
\[
    P - z + v_A
\]
is a path in $G_A \oplus R$ joining $y$ to $x=v_A$.
But $x$ and $y$ lie in different components of $G_A \oplus R$, a contradiction. 
\end{proof}

A split is \emph{trivial} if $|A|=1$ or $|B|=1$.
We say that a graph is \emph{prime} if it admits only trivial splits.
For a vertex $u$ of a graph, we let $N[u] = \{u\} \cup N(u)$.

\begin{lemma}\label{lemma:split-prime}
Let $G$ be a connected prime graph and let $S$ be a solution for $G$.
Then $S = N[u]$ for some $u \in V(G)$.
\end{lemma}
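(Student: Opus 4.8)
The plan is to extract from the disconnected graph $G' := G \oplus S$ a split of $G$, and then to invoke primality to force $G'$ to contain an isolated vertex, which will pin down $S$ exactly. The whole argument rests on the rule defining $\oplus$: an edge $xy$ is toggled precisely when both $x,y \in S$, and is left untouched otherwise.

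First I would fix any component $C$ of $G'$ and set $D := V(G') \setminus C$; both are nonempty since $G'$ is disconnected, and by Proposition~\ref{prop:S-intersects-every-component} both $C \cap S$ and $D \cap S$ are nonempty. Writing $\overline S := V(G) \setminus S$, the core observation is that the partition
\[
(A_2, A_1, B_1, B_2) := (C \cap \overline S,\; C \cap S,\; D \cap S,\; D \cap \overline S)
\]
is a split of $G$. To verify this I would argue as follows. Any $a \in C \cap S$ and $b \in D \cap S$ satisfy $ab \notin E(G')$, since they lie in different components of $G'$; as both lie in $S$, their adjacency was toggled, so $ab \in E(G)$. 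This gives complete adjacency between $A_1$ and $B_1$. Dually, for $a \in C \cap \overline S$ and $b \in D \cap \overline S$ the edge $ab$ is untouched by the operation, so $ab \notin E(G')$ forces $ab \notin E(G)$, establishing the absence of edges between $A_2$ and $B_2$. Note $A = C$ and $B = D$.

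Since $G$ is prime, this split is trivial, i.e. $|A| = |C| = 1$ or $|B| = |D| = 1$. In either case $G'$ has a component consisting of a single vertex $w$ (when $|D|=1$, the set $D$ is a union of components of total size one, hence a single isolated vertex), so $w$ is isolated in $G'$; and by Proposition~\ref{prop:S-intersects-every-component}, $w \in S$. It then remains to show $S = N[w]$. For this I would examine each vertex $x \neq w$ through the toggle rule: if $x \in N_G(w)$ then $x \in S$, for otherwise $wx$ would be untouched and survive in $G'$, contradicting that $w$ is isolated; and if $x \notin N_G(w)$ then $x \notin S$, for otherwise the operation would create the edge $wx$ in $G'$. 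Combining these with $w \in S$ yields $N[w] \subseteq S \subseteq N[w]$, hence $S = N[w]$.

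The step requiring the most care is the recognition that the four-set partition above is genuinely a split; once that is in place, primality does the heavy lifting and the closing adjacency analysis is routine case-checking on the toggle rule. I would also remark that the construction is symmetric in $C$ and $D$, so the two outcomes of the trivial-split dichotomy are handled by the same argument, and no separate treatment according to the number of components of $G'$ is needed.
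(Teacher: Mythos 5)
Your proof is correct and follows essentially the same route as the paper: extract the split $(C \cap \overline{S},\, C \cap S,\, D \cap S,\, D \cap \overline{S})$ from the disconnectedness of $G \oplus S$ via the toggle rule, invoke primality to force the split to be trivial, and read off $S = N[u]$ from the resulting isolated vertex of $G \oplus S$. If anything, your version is slightly more careful than the paper's: by taking $D := V(G') \setminus C$ rather than a second component, you obtain a genuine partition of $V(G)$ even when $G \oplus S$ has three or more components (a case the paper's choice of two components glosses over), and your closing case analysis makes explicit the final step $N[u] \subseteq S \subseteq N[u]$ that the paper leaves implicit.
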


\begin{proof}
Since $G \oplus S$ is disconnected, let $C$ and $D$ be two distinct connected
components of $G \oplus S$.
By Proposition~\ref{prop:S-intersects-every-component}, both
$C$ and $D$ contain vertices of $S$.
Set
\[
A_1 = V(C)\cap S,\quad A_2 = V(C)\setminus S,\qquad
B_1 = V(D)\cap S,\quad B_2 = V(D)\setminus S .
\]

Note that $ab \notin E(G \oplus S)$ for any $a \in V(C)$ and $b \in V(D)$.
Since $A_1,B_1 \subseteq S$ and $A_2 \cap S = B_2 \cap S = \emptyset$,
it follows that $(A_2,A_1,B_1,B_2)$ is a split of $G$.
As $G$ is prime, this split must be trivial.
Without loss of generality, we may assume that $A_2 = \emptyset$ and $|A_1| = 1$.
Thus $S = N[a]$, where $a$ is the unique vertex in $A_1$.
\end{proof}

Finally, we can prove the main result of this section.

\begin{theorem}\label{thm:to-0conn}
Let $\mathcal{C}$ be the class of disconnected graphs.
Then MSC to $\mathcal{C}$ can be solved in polynomial time.
\end{theorem}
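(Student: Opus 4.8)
The plan is to solve the problem recursively along the split structure built in this section, using the reductions to $G_A$ and $G_B$ as the recursive step and prime graphs as the base case. First I would dispose of the trivial situation: if $G$ is already disconnected then the empty set is an optimal solution, so from now on I assume $G$ is connected with at least two vertices.

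The base case is a connected prime graph $G$. Here I would first observe that $N[u]$ is a solution for \emph{every} vertex $u$: complementing $N[u]$ deletes every edge between $u$ and $N(u)$, and since $u$ has no neighbour outside $N[u]$, the vertex $u$ becomes isolated, so $G\oplus N[u]$ is disconnected. Conversely, Lemma~\ref{lemma:split-prime} shows that every solution of a connected prime graph has the form $N[u]$. Hence, on a connected prime graph, the optimum equals $\min_{u\in V(G)}|N[u]|$, which is found by a single scan over the vertices.

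For the recursive step, if $G$ is not prime it admits a nontrivial split $(A_2,A_1,B_1,B_2)$, and I would build the reduced graphs $G_A$ and $G_B$ and solve the problem on each, recording in each case the smallest solution that contains the marker vertex and the smallest one that avoids it. Lemma~\ref{lemma:Rdisconnected-thenSdisconnected} lifts these back to $G$: a solution $R$ of $G_A$ with $v_A\notin R$ lifts to $R$ itself, while one with $v_A\in R$ lifts to $(R\setminus\{v_A\})\cup B_1$ of size $|R|-1+|B_1|$, and symmetrically for $G_B$. Taking the minimum of the four resulting sizes yields a feasible solution and an upper bound on the optimum. For the matching lower bound I would take an optimal solution $S$ and show it projects to one of the four reduced problems with the corresponding size: using the first and third items of Lemma~\ref{lemma:Sdisconnected-characterization}, $S$ satisfies $S\subseteq A$, $S\subseteq B$, $A_1\subseteq S$, or $B_1\subseteq S$; the second item then certifies that $S$ is a solution of $G[A\cup B_1]$ or of $G[B\cup A_1]$, after which Lemma~\ref{lemma:SinAB1-SinA1} (or its mirror image) produces the required projection onto $G_A$ or $G_B$.

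The hard part will be making this case analysis airtight. Lemma~\ref{lemma:SinAB1-SinA1} applies only when $B_1\cap S\in\{\varnothing,B_1\}$ (resp.\ $A_1\cap S\in\{\varnothing,A_1\}$), so I must check that the option selected by Lemma~\ref{lemma:Sdisconnected-characterization} is compatible with the side that $S$ avoids under its first item; the residual configurations—for instance $A_1\subseteq S$ together with $S\cap A_2\neq\varnothing$ and $B_1\cap S$ a proper nonempty subset—must be excluded, which I expect to do by contradicting the minimality of $S$. The second, more routine, difficulty is the running time: recursing on both $G_A$ and $G_B$ must stay polynomial. Since a nontrivial split has $|A|,|B|\ge 2$, both reduced graphs have between $3$ and $|V(G)|-1$ vertices, and their sizes sum to $|V(G)|+2$; hence the quantity $\sum_i(|V(G_i)|-2)$ over the current pieces is invariant, which bounds the number of pieces linearly in $|V(G)|$. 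Carrying out the recursion along a split decomposition of $G$ (computable in polynomial time, with connected split components) and examining, at each prime piece, only the $O(|V|)$ candidate sets $N[u]$ keeps the marker-state bookkeeping from blowing up, and organising everything as a bottom-up dynamic program gives the claimed polynomial-time algorithm.
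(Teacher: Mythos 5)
Your overall architecture is the paper's: the prime base case via Lemma~\ref{lemma:split-prime} (your explicit check that $N[u]$ is always feasible is a small, correct addition), recursion on a nontrivial split via Lemmas~\ref{lemma:Sdisconnected-characterization}, \ref{lemma:SinAB1-SinA1} and \ref{lemma:Rdisconnected-thenSdisconnected}, and a sound potential argument bounding the number of pieces. The genuine gap is in how you make the recursion compose. You propose to solve each subproblem while ``recording the smallest solution that contains the marker vertex and the smallest one that avoids it.'' But this conditioning does not stay bounded: when $G_A$, which contains the special vertex $v_A$, is itself split, its children must report optima conditioned both on their own fresh marker and on $v_A$ (or on whatever contracted set now represents $v_A$), so a subproblem at depth $d$ can inherit up to $d$ special vertices and would have to report $2^{d}$ conditioned values. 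Nothing in your sketch justifies the assertion that ``the marker-state bookkeeping'' stays polynomial, and none of the paper's lemmas is stated or proved in a conditioned form that would support such a dynamic program. The paper sidesteps this entirely by a device you are missing: it solves the \emph{weighted} generalization of MSC, setting $w(v_A)=w(B_1)$ and $w(v_B)=w(A_1)$, so that the lifts of Lemma~\ref{lemma:Rdisconnected-thenSdisconnected} are weight-preserving and each recursive call is a self-contained instance of the same problem returning a single optimum. With that formulation your two-value bookkeeping becomes unnecessary, and your own counting argument ($\sum_i(|V(G_i)|-2)$ invariant, each piece of size at least $3$) then does give polynomial time.

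A second, smaller issue: you correctly identify that the projection of an optimal $S$ onto $G_A$ or $G_B$ is not automatic --- for instance $A_1\subseteq S$ together with $S\cap A_2\neq\emptyset$ and $\emptyset\neq S\cap B_1\subsetneq B_1$ is covered neither by Lemma~\ref{lemma:SinAB1-SinA1} (which needs $B_1\cap S\in\{\emptyset,B_1\}$) nor by the mirrored application (which needs $S\cap A_2=\emptyset$ so that $S$ is a solution of $G[B\cup A_1]$). Such configurations do occur for feasible non-minimum solutions, so some argument is genuinely required; you only promise to exclude them ``by contradicting the minimality of $S$'' and never carry this out, so as written this step of your lower-bound direction is a placeholder rather than a proof.
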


\begin{proof}
We solve a weighted version of the problem: given a positive weight function
$w$ on $V(G)$, we seek a minimum--weight solution $S$.
If $G$ is already disconnected, we return the empty set.

If $G$ is prime, then by Lemma~\ref{lemma:split-prime}, every feasible solution
has the form $N[u]$ for some vertex $u$.
We simply test all vertices and return the set of minimum weight.

Assume now that $G$ admits a nontrivial split $(A_2,A_1,B_1,B_2)$.
We recurse on the two graphs obtained from the split.
Let
\[
G_A = G[A\cup\{v_A\}] \quad\text{with } w(v_A)=w(B_1),
\qquad
G_B = G[B\cup\{v_B\}] \quad\text{with } w(v_B)=w(A_1).
\]
Let $R_1$ and $R_2$ be the solutions returned for $G_A$ and $G_B$, respectively.

From $R_1$ we construct a feasible solution for $G$ as follows:
if $v_A\notin R_1$, set $S_1=R_1$;
if $v_A\in R_1$, set $S_1=(R_1\setminus\{v_A\})\cup B_1$.
In both cases, $w(S_1)=w(R_1)$.
Analogously, we construct $S_2$ from $R_2$.
By Lemma~\ref{lemma:Rdisconnected-thenSdisconnected},
both $G\oplus S_1$ and $G\oplus S_2$ are disconnected.
We return the solution of minimum weight.

We now prove correctness.
Let $S$ be a minimum--weight set such that $G\oplus S$ is disconnected.
By Lemma~\ref{lemma:Sdisconnected-characterization},
$S\cap A_2=\emptyset$ or $S\cap B_2=\emptyset$.
Assume without loss of generality that $S\cap B_2=\emptyset$.
Then, again by Lemma~\ref{lemma:Sdisconnected-characterization},
$G[A\cup B_1]\oplus S$ is disconnected, and one of the following holds:
\[
\text{(i) } S\subseteq A,\qquad
\text{(ii) } A_1\subseteq S,\qquad
\text{(iii) } B_1\subseteq S.
\]

If $S\subseteq A$, then by Lemma~\ref{lemma:SinAB1-SinA1},
$G_A\oplus S$ is disconnected, and thus the recursive solution satisfies
$w(R_1)\le w(S)$.
Since $G\oplus S_1$ is disconnected, minimality of $S$ implies
$w(S)\le w(R_1)$ and $w(S)\le w(R_2)$.
Hence $w(S)=w(R_1)=\min\{w(R_1),w(R_2)\}$.

If $A_1\subseteq S$, then Lemma~\ref{lemma:SinAB1-SinA1} implies that
\[
G_B\oplus\bigl((S\setminus A_1)\cup\{v_B\}\bigr)
\]
is disconnected, and therefore $w(R_2)\le w(S)$.
Again, by minimality, $w(S)\le w(R_1)$ and $w(S)\le w(R_2)$, which yields
$w(S)=w(R_2)=\min\{w(R_1),w(R_2)\}$.
The case $B_1\subseteq S$ is symmetric.

Therefore, the algorithm always returns a minimum--weight solution.
Each recursive call reduces the number of original vertices by at least one.
Moreover, detecting whether $G$ is prime and, if not, producing a nontrivial
split can be done in polynomial time \cite[Problem~1]{Cunningham1982}.
Thus, the overall running time is polynomial.
\end{proof}

\section*{Ethics declaration}
Ethics declaration: not applicable.

\section*{Funding}
Juan Gutiérrez was supported by Fondo Semilla UTEC 2025.

\bibliographystyle{plain}
\bibliography{main}

\end{document}